\newtheorem{theorem}{Theorem}
\newtheorem{lemma}{Lemma}
\theoremstyle{definition}
\newtheorem{definition}{Definition}
\newtheorem{example}{Example}
\newcommand{\xRightarrow}[2][]{\ext@arrow 0359\Rightarrowfill@{#1}{#2}}
\newcommand{\dom}{\operatorname{\mathbf{dom}}}
\newcommand{\OMIT}[1]{}
\newcommand{\wtms}{\textsc{TMS2-ra}\xspace}
\newcommand{\tmlra}{\textsc{TML-ra}\xspace}
\newcommand{\tmlsc}{\textsc{TML-sc}\xspace}
\newcommand{\glb}{{\it glb}}
\def\eg{e.g.,\xspace}
\def\ie{i.e.,\xspace}
\def\etal{{et al}\xspace}
\def\cf{cf.\xspace}
\pgfplotsset{compat=1.17}
\newcommand{\linefill}{\cleaders\hbox{$\smash{\mkern-2mu\mathord-\mkern-2mu}$}\hfill\vphantom{\lower1pt\hbox{$\rightarrow$}}}  
\newcommand{\transi}[2]{\mathrel{\lower1pt\hbox{$\mathrel-_{\vphantom{#2}}\mkern-8mu\stackrel{#1}{\linefill_{\vphantom{#2}}}\mkern-11mu\rightarrow_{#2}$}}}
\newcommand{\trans}[1]{\transi{#1}{{}}}
\newcommand{\eseq}[1]{\langle~\rangle}
\newcounter{sarrow}
\newcommand\strans[1]{%
  \mathrel{\raisebox{0.1em}{
    \stepcounter{sarrow}%
    \!\!\!\!
    \begin{tikzpicture}
      \node[inner sep=.5ex] (\thesarrow) {$\scriptstyle #1$};
      \path[draw,<-,decorate,line width=0.25mm,
      decoration={zigzag,amplitude=0.7pt,segment length=1.2mm,pre=lineto,pre length=4pt}] 
      (\thesarrow.south east) -- (\thesarrow.south west);
    \end{tikzpicture}%
  }}}
\newcommand{\nat}{\mathbb{N}}
\newcommand{\noteq}{\neq}
\newcommand{\Var}{\mathit{Var}}
\newcommand{\Reg}{\mathit{Reg}}
\newcommand{\LComm}{\mathit{LCom}}
\newcommand{\prog}{\Pi}
\newcommand{\pc}{{\it pc}}
\newcommand{\stepgoto}[2]{#1\ {\bf goto}\ #2 }
\newcommand{\ifgoto}[3]{\kwif\ #1\ {\bf goto}\ #2\ {\bf elseto}\ #3 }
\newcommand{\txbegin}[1]{{\tt TxBegin}(#1)}
\newcommand{\txread}[2]{{\tt TxRead}(#2, #1)}
\newcommand{\txwrite}[2]{{\tt TxWrite}(#1, #2)}
\newcommand{\txend}{{\tt TxEnd}}
\newcommand{\Label}{\mathit{Label}}
\newcommand{\GVar}{{\it GVar}}
\newcommand{\LVar}{\mathit{LVar}}
\newcommand{\Val}{\mathit{Val}}
\newcommand{\R}{\mathsf{R}}
\newcommand{\A}{\mathsf{A}}
\newcommand{\refeq}[1]{(\ref{#1})}
\newcommand{\True}{{\it true}}
\newcommand{\False}{{\it false}}
\newcommand{\var}{\mathtt{var}}
\newcommand{\valu}{\mathtt{val}}
\newcommand{\imp}{\Rightarrow}
\newcommand{\last}{\mathit{last}}
\newcommand{\kwlet}{\textsf{\textbf{let}}}
\newcommand{\kwin}{\textsf{\textbf{in}}}
\newcommand{\pre}{\textsf{\textbf{pre}}}
\newcommand{\eff}{\textsf{\textbf{eff}}}
\newcommand{\kwcas}{\textsf{\textbf{CAS}}}
\newcommand{\kwswap}{\textsf{\textbf{swap}}}
\newcommand{\kwskip}{\bot}
\newcommand{\kwdo}{\textsf{\textbf{do}}}
\newcommand{\kwwhile}{\textsf{\textbf{while}}}
\newcommand{\kwend}{\textsf{\textbf{end}}}
\newcommand{\kwif}{\textsf{\textbf{if}}}
\newcommand{\kwthen}{\textsf{\textbf{then}}}
\newcommand{\kwelse}{\textsf{\textbf{else}}}
\newcommand{\kwreturn}{\textsf{\textbf{return}}}
\newcommand{\kwthread}{\textsf{\textbf{thread}}}
\newcommand{\kwuntil}{\textsf{\textbf{until}}}
\algnewcommand\Swap{\kwswap}
\algnewcommand\Skip{\kwskip}
\algnewcommand\Thread{\kwthread}
\algrenewcommand\algorithmicend{\kwend}
\algrenewcommand\algorithmicdo{\kwdo}
\algrenewcommand\algorithmicwhile{\kwwhile}
\algrenewcommand\algorithmicfor{\textsf{\textbf{for}}}
\algrenewcommand\algorithmicforall{\textsf{\textbf{for all}}}
\algrenewcommand\algorithmicloop{\textsf{\textbf{loop}}}
\algrenewcommand\algorithmicrepeat{\textsf{\textbf{repeat}}}
\algrenewcommand\algorithmicuntil{\textsf{\textbf{until}}}
\algrenewcommand\algorithmicprocedure{\textsf{\textbf{procedure}}}
\algrenewcommand\algorithmicfunction{\textsf{\textbf{function}}}
\algrenewcommand\algorithmicif{\kwif}
\algrenewcommand\algorithmicthen{\kwthen}
\algrenewcommand\algorithmicelse{\kwelse}
\algrenewcommand\algorithmicreturn{\kwreturn}
\newcommand{\orangearrow}{%
  \(\tikz[baseline=-0.5ex]{\draw (0,0) edge [draw, dotted, orange!70!black, line width=0.4mm,-stealth]
    (0.6,0);} \)}
\newcommand{\xxval}{{\sf val}}
\newcommand{\xxloc}{{\sf loc}}
\newcommand{\xxtst}{{\sf tst}}
\newcommand{\tst}{\mathtt{tst}}
\newcommand{\tidn}[1]{\tid{#1}}
\newcommand{\tid}{\tau}
\newcommand{\loc}{x}
\newcommand{\val}{v}
\newcommand{\locb}{y}
\newcommand{\valb}{u}
\newcommand{\ts}{{\it q}}
\renewcommand{\wr}{{\it w}}
\newcommand{\eqdef}{\mathrel{\hat{=}}}
\newcommand{\rel}{{\it Rel}}
\newcommand{\acq}{{\it Acq}}
\newcommand{\wrset}{{\it WS}}
\newcommand{\rdset}{{\it RS}}
\newcommand{\seenIdxs}{\mathtt{seenIdxs}}
\newcommand{\beginIdx}{\mathtt{beginIdx}}
\newcommand{\synctype}{\mathtt{synctype}}
\newcommand{\rset}{\mathtt{rdSet}}
\newcommand{\wset}{\mathtt{wrSet}}
\newcommand{\regSet}{\mathtt{regs}}
\newcommand{\status}{\mathtt{status}}
\newcommand{\isAcq}{\mathtt{isAcq}}
\newcommand{\memories}{\mathit{M}}
\newcommand{\mmview}{\mathtt{V}}
\newcommand{\mrel}{\mathit{S}}
\newcommand{\vismem}{\mathit{OM}}
\newcommand{\txn}{\mathtt{txn}}
\newcommand{\lst}{\mathit{lst}}
\newcommand{\als}{\mathit{als}}
\newcommand{\txid}{t}
\newcommand{\txnview}{{\it V}}
\newcommand{\logic}{\textsc{TARO}\xspace}
\newcommand{\assert}[1]{
  {\color{blue}
    \left\{
      \begin{array}[c]{@{}l@{}}
        #1
      \end{array}
    \right\}
  }
}
\newcommand{\fresh}{\mathit{fresh}}
\newcommand{\Prog}{{\it Prog}}
\newcommand{\AComm}{{\it ACom}}
\newcommand{\Exp}{{\it Exp}}
\newcommand{\BExp}{{\it BExp}}
\newcommand{\Init}{\mathbf{Init}}
\newcommand{\asgn}{\ensuremath{:=}}
\newcommand{\rat}{\mathbb{Q}}
\newcommand{\Tr}{{\it Tr_{SF}}}
\newcommand{\ct}{{\it ct}}
\newcommand{\at}{{\it at}}
\newcommand{\view}{\mathit{View}}
\newcommand{\tview}{{\tt tview}}
\newcommand{\txview}{{\tt txview}}
\newcommand{\relst}{{\tt rel}}
\newcommand{\ls}{\mathit{ls}}
\newcommand{\mview}{{\tt mview}}
\newcommand{\writes}{\mathtt{writes}}
\newcommand{\covered}{\mathtt{cvd}}
\newcommand{\Write}{{\it Write}}
\newcommand{\View}{{\it View}}
\newcommand{\OW}{{\it OW}}
\newcommand{\init}{{\it init}}
\newcommand{\maxWr}{{\it maxWr}}
\newcommand{\reffig}[1]{Fig.~\ref{#1}}
\newcommand{\refthm}[1]{Theorem~\ref{#1}}
\newcommand{\reflem}[1]{Lem\-ma~\ref{#1}}
\newcommand{\refsec}[1]{\S\ref{#1}}
\newcommand{\refdef}[1]{Definition~\ref{#1}}
\tikzset{
    mo/.style={dashed,->,>=stealth,thick,black!20!purple},
    hb/.style={solid,->,>=stealth,thick,blue},
    sw/.style={solid,->,>=stealth,thick,black!50!green},
    rf/.style={dashed,->,>=stealth,thick,black!50!green},
    fr/.style={dashed,->,>=stealth,thick,red}
 }
\begin{document}

\title[Implementing and Verifying Release-Acquire Transactional
  Memory (Extended Version)]{Implementing and Verifying Release-Acquire Transactional
  Memory (Extended Version)}


\author{Sadegh Dalvandi}
\email{m.dalvandi@surrey.ac.uk}
\orcid{0000-0001-8813-780X}
\affiliation{%
  \institution{University of Surrey}
  \city{Guildford}
  \country{UK}
}

\author{Brijesh Dongol}
\email{b.dongol@surrey.ac.uk}
\orcid{0000-0003-0446-3507}
\affiliation{%
  \institution{University of Surrey}
  \city{Guildford}
  \country{UK}
}
\begin{abstract}
    Transactional memory (TM) is an intensively studied
    synchronisation paradigm with many proposed implementations in
    software and hardware, and combinations thereof.
    However, TM under relaxed memory, \eg C11 (the 2011 C/C++
    standard) is still poorly understood, lacking rigorous foundations
    that support verifiable implementations.  This paper addresses
    this gap by developing \wtms, a relaxed operational TM
    specification. We integrate \wtms with RC11 (the repaired C11
    memory model that disallows load-buffering) to provide a formal
    semantics for TM libraries and their clients. We develop a logic,
    \logic, for {\em verifying} client programs that use \wtms for
    synchronisation. We also show how \wtms can be {\em implemented}
    by a C11 library, \tmlra, that uses relaxed and release-acquire
    atomics, yet guarantees the synchronisation properties required by
    \wtms. We {\em benchmark} \tmlra and show that it outperforms its
    sequentially consistent counterpart in the STAMP
    benchmarks. Finally, we use a simulation-based verification
    technique to {\em prove correctness} of \tmlra.  Our entire
    development is supported by the Isabelle/HOL proof
    assistant. 
  \end{abstract}

\begin{CCSXML}
<ccs2012>
<concept>
<concept_id>10011007.10011006.10011008</concept_id>
<concept_desc>Software and its engineering~General programming languages</concept_desc>
<concept_significance>500</concept_significance>
</concept>
<concept>
<concept_id>10003456.10003457.10003521.10003525</concept_id>
<concept_desc>Social and professional topics~History of programming languages</concept_desc>
<concept_significance>300</concept_significance>
</concept>
</ccs2012>
\end{CCSXML}

\ccsdesc[500]{Software and its engineering~General programming languages}
\ccsdesc[300]{Social and professional topics~History of programming languages}

\keywords{Weak Memory, Transactional Memory, C11, Verification,
  Refinement}

\maketitle

\section{Introduction}

The advent and proliferation of architectures implementing relaxed
memory models has resulted in many new challenges in the development
of concurrent programs. In the context of the C/C++ relaxed memory
model defined by C11\footnote{C11 refers to the 2011 ISO specification
  of C/C++. }, over a decade's worth of research has resulted in
rigorous semantic
foundations~\cite{DBLP:conf/popl/BattyDW16,DBLP:conf/popl/BattyOSSW11,DBLP:conf/pldi/LeeCPCHLV20,DBLP:conf/popl/KangHLVD17,DBLP:conf/pldi/LahavVKHD17,Batty2020},
and more recently, logics for reasoning about the correctness of
concurrent
programs~\cite{DBLP:journals/corr/abs-2108-01418,DBLP:journals/corr/abs-2004-02983,DBLP:conf/ppopp/DohertyDWD19,ECOOP20,DBLP:conf/ecoop/KaiserDDLV17,DBLP:conf/popl/KangHLVD17,DBLP:conf/esop/DokoV17,DBLP:conf/pdp/HeVQF16,DBLP:conf/icalp/LahavV15,vafeiadis2013relaxed}. These
works have provided the background necessary to develop high-level
abstractions and concurrency libraries over relaxed-memory
architectures. Recent works have included reimplementations of
concurrent data
structures~\cite{DBLP:conf/ppopp/DalvandiD21,DBLP:journals/pacmpl/RaadDRLV19,DBLP:conf/esop/KrishnaEEJ20,DBLP:journals/pacmpl/EmmiE19,DongolJRA18},
including those with relaxed specifications that aim to exploit the
additional behaviours allowed by relaxed memory.

Our aim for this paper is to implement and verify synchronisation
abstractions, fine-tuned for C11, in the form of {\em transactional
  memory (TM)} libraries, which provide reusable foundations for
high-performance, yet easy to manage concurrency
control~\cite{DBLP:conf/isca/HerlihyM93,DBLP:journals/dc/ShavitT97,2010Guerraoui}. Implementations
include those in software (as STM libraries) and hardware (Intel-RTM
and Armv9). Other variations include hybrid TM that combine software
and hardware TMs and implementations that are natively supported by
the compiler (\eg the continued C++ TM Lite development). In addition
to supporting general-purpose concurrency, TM has also been used to
develop transactional concurrent objects and data
structures~\cite{DBLP:conf/opodis/AssaMGKS21,DBLP:conf/ppopp/AssaMGKS20,DBLP:conf/podc/BronsonCCO10,DBLP:journals/pacmpl/LesaniXKBCPZ22}.
Intel’s persistent memory development kit (PMDK)~\cite{Scargall2020}
extensively promotes the transactional paradigm (though multi-threaded
transactions are not directly supported by PMDK's transactions). These
prior works have assumed SC transactions, i.e., that transactional
access provide the same guarantees as sequentially consistent memory.
Our focus is the verification of STMs implemented as a programming
language library with \emph{relaxed}, \emph{release}, \emph{acquire}
and \emph{release-acquire} accesses providing a pathway towards
simplified development of transactional objects (including concurrent
data structures) for relaxed memory.

TM implementations provide fine-grained interleaving (for efficiency)
that execute with an {\em illusion of atomicity} (for correctness). A
completed transaction may be committed or aborted so that all or none
of its effects are externally visible. TM implementations are designed
to satisfy a variety of correctness conditions such as (strict)
serialisability, opacity, and snapshot isolation, which restrict
ordering possibilities of completed transactions. TM has been
extensively studied for sequentially consistent (SC)
architectures~\cite{DBLP:journals/tc/Lamport79}, but implementations
over relaxed memory are limited. 

Prior works on relaxed memory transactions~(\eg
\cite{DBLP:conf/pldi/ChongSW18,DongolJR18}) have focussed on
foundations of {\em hardware transactions} and their interaction with
relaxed memory models, \eg the expected isolation guarantees,
reordering possibilities etc. The work of Chong
\etal~\cite{DBLP:conf/pldi/ChongSW18} also provides for semantics of
native C++ transactions. However, native TM support in C++ is still in
a state of flux~\cite{DBLP:journals/taco/ZardoshtiZBSS19,Spear2020}
and the underlying designs have changed since the original works by
Chong \etal~\cite{DBLP:conf/pldi/ChongSW18}. Moreover, these semantics
are presented in an axiomatic (aka declarative) style, which cannot be
used to verify TM implementations, where we require {\em operational}
descriptions of correctness. Therefore, our point of departure is a
separate set of works on TM specifications, in particular the TMS2
specification~\cite{DBLP:journals/fac/DohertyGLM13}, which has been
used extensively as a TM specification for standard (\ie SC)
architectures.

More recent works have taken steps towards C++ implementations,
including native support of TM within
C++~\cite{DBLP:journals/taco/ZardoshtiZBSS19} and STMs implemented
using C++ relaxed memory~\cite{DBLP:conf/podc/RodriguezS20}.  However,
\citet{DBLP:journals/taco/ZardoshtiZBSS19} do not describe
interactions with the C11 relaxed memory model, while Rodriguez and
Spear~\cite{DBLP:conf/podc/RodriguezS20} focus on data race freedom
and privatisation guarantees. Neither of these works have a formal
semantics, nor are they supported by a verification methodology. (See
\refsec{sec:related-work} for a more comprehensive survey of related
works.)

Our work addresses several gaps in the current state-of-the-art of
transactions for C/C++. We work with RC11, i.e., the {\em repaired
  C11} memory model~\cite{DBLP:conf/pldi/LahavVKHD17}. The RC11 memory
model disallows program-order and reads-from cycles, and hence
disallows load-buffering behaviour. This restriction greatly
simplifies reasoning and variants of RC11 are supported by a number of
different
logics~\cite{DBLP:conf/icalp/LahavV15,DBLP:journals/corr/abs-2004-02983,DBLP:conf/ppopp/DalvandiD21,ECOOP20,DBLP:conf/ecoop/KaiserDDLV17,DBLP:conf/pldi/DangJCNMKD22}. Logics
that address the full C11 memory model (allowing load buffering) have
also been developed, but proofs in these logics are limited to small
litmus
tests~\cite{DBLP:journals/corr/abs-2108-01418,DBLP:conf/esop/SvendsenPDLV18}.

We develop:
\begin{enumerate*}[label=(\roman*)]
\item a reusable {\em specification} of TM that provides well-defined
  guarantees to those developing client programs;
\item techniques for {\em verifying client programs} in C11 that use
  such TM abstractions;
\item {\em implementations} of TM in C11, including
  their rigorous verification; and
\item {\em mechanisation} of the verification described above in the
  theorem prover Isabelle/HOL.
\end{enumerate*}
We discuss these contributions in more detail below.

\paragraph{Correctness specifications.}  To enable verification,
we start with the \emph{TMS2}
specification~\cite{DBLP:journals/fac/DohertyGLM13}. TMS2 implies the
{\em TMS1} specification, which is known to be both necessary and
sufficient for {\em observational refinement} (of client
programs)~\cite{AttiyaGHR18}. The main difference between TMS1 and
TMS2 is that TMS1 allows aborted transactions to observe different
serialization orders~\cite{LLM12}. In contrast, TMS2, like
\emph{opacity}~\cite{2010Guerraoui}, ensures strict serializability of
the committed transactions and furthermore that aborted transactions
are consistent with the serialisation order. Although more restrictive
than TMS1, TMS2 has been shown to be a robust correctness condition
that is useful in practice, providing a specification for a number of TM
implementations under
SC~\cite{DBLP:journals/fac/DerrickDDSTW18,DBLP:conf/opodis/DohertyDDSW16,DBLP:conf/forte/ArmstrongDD17,DBLP:conf/forte/ArmstrongD17}.

Under relaxed memory, the TMS2 specification is inadequate since it
does not provide any of the client-side guarantees required by relaxed
memory
libraries~\cite{DBLP:conf/esop/RaadLV18,DBLP:journals/pacmpl/RaadDRLV19,DBLP:conf/ppopp/DalvandiD21,DongolJRA18}. Such
client-side guarantees are required under relaxed memory since writes
in one thread are not guaranteed to be propagated to other threads unless
the library is properly synchronised (\cf the message passing litmus
tests~\cite{AlglaveMT14}).

Our first contribution is the adaptation of TMS2 to address this
issue. In particular, our specification, \wtms, provides a flexible
meaning of correctness, allowing a client to specify {\em relaxed},
{\em releasing}, {\em acquiring} and {\em release-acquiring}
transactions (see \refsec{sec:release-acquire-tm}), mimicking the
memory annotations of C11
atomics~\cite{DBLP:conf/popl/BattyOSSW11}. This
provides greater flexibility in TM design; we develop a model in which
these different types of transactions co-exist within the same TM
system.

\paragraph{Client verification.}  Our second contribution (see
\refsec{sec:logic:-logic-release}) is a verification technique for
relaxed-memory client programs that use \wtms. In particular, we prove
correctness of several variations of the message passing litmus test,
synchronised through \wtms transactions, to show that \wtms behaves as
expected. In particular, we show how different client-side guarantees
are achieved depending on the type of synchronisation guarantee
(relaxed, releasing or acquiring) guaranteed by the transaction in
question.

Our verification framework includes a new logic, \logic, capable of
efficiently reasoning about the {\em views} of a client
programs~\cite{DBLP:journals/corr/abs-2004-02983,DBLP:conf/ecoop/KaiserDDLV17}. This
means that the correctness of programs can be established using a
standard Owicki-Gries reasoning
framework~\cite{ECOOP20,DBLP:journals/acta/OwickiG76,DBLP:conf/ppopp/DalvandiD21}.

\paragraph{Implementation, benchmarking and verification.} Our
third contribution is the implementation and full verification of an
STM algorithm that uses C11 relaxed/release-acquire atomics and
implements \wtms. Our implementation is an adaptation of Dalessandro
\etal's {\em Transactional Mutex Lock
  (TML)}~\cite{DBLP:conf/europar/DalessandroDSSS10}, which presents a
simple mechanism for synchronising transactions optimised for
read-heavy workloads. TML is synchronised using a single global lock,
and allows multiple concurrent read-only transactions, but at most one
writing transaction, \ie a writing transaction causes all other
concurrent transactions to abort.

Interestingly, our adapted algorithm, which we call \tmlra, allows
more concurrency than TML by exploiting the parallelism afforded by
relaxed and release-acquire C11 atomics. Moreover, a writing
transaction does not force other read-only transactions to abort,
allowing greater read/write parallelism (see \refsec{sec:tml-c11}). We
show that this theoretical speedup manifests in real implementations
and \tmlra outperforms its SC counterpart in all STAMP benchmarks (see
\refsec{sec:benchmarking}).

We use a simulation-based verification method for the C11 memory
model~\cite{DBLP:conf/ppopp/DalvandiD21} to prove correctness of
\tmlra. This proof establishes a refinement between \tmlra and \wtms,
which ensures that all observable behaviours of \tmlra are observable
behaviours of \wtms.  
Thus, if a client program $C$ is proved correct when it uses \wtms,
then $C$ will also be correct if we replace calls to \wtms in $C$ by
calls to \tmlra.

\paragraph{Mechanisation.} Our fourth contribution is the
mechanisation of all proofs presented in the paper in the Isabelle/HOL
proof assistant (available as supplementary material). This includes
the operational semantics of C11 integrated with \wtms, soundness of
all \logic rules, the use of \logic to prove several client programs
that use \wtms, and finally the proof of simulation between \wtms and
\tmlra.\footnote{Our development may be found in~\cite{Artifact}.}

\paragraph{Overview.} This paper is structured as follows. We describe
our requirements for relaxed and release-acquire transactions in
\refsec{sec:abstr-object-semant}. We formalise this semantics in
\refsec{sec:release-acquire-tm} via the \wtms specification, and
describe its integration with a view-based semantics for RC11 with
release-acquire atomics~\cite{ECOOP20}. In \refsec{sec:tml-c11}, we
provide an examplar implementation and benchmarking results for
\tmlra. In \refsec{sec:logic:-logic-release}, we present our logic for
reasoning about release-acquire transactional memory, which provides a
method of reasoning about client programs that use the \wtms
specification. Finally, in \refsec{sec:cont-refin-b}, we present a
proof of correctness of \tmlra via refinement w.r.t. \wtms.

\section{Transactional guarantees in C11}
\label{sec:abstr-object-semant}

\newcommand{\sflag}{{\it sflag}}
\newcommand{\syncC}{{\it syncC}}
\renewcommand{\view}{{\it view}}
\newcommand{\highlight}[1]{\colorbox{blue!15}{#1}}
\newcommand{\raext}[1]{\colorbox{black!8}{#1}}
\newcommand{\TS}{{\it TS}}
\newcommand{\TID}{{\it TId}}
\renewcommand{\Var}{{\it Loc}}
\newcommand{\seq}{{\it seq}}
\setlength{\fboxsep}{1pt}




A TM specification in a relaxed memory setting has two distinct sets of
goals. The first set must guarantee the expected behaviours of
transactions, \eg serializability, opacity etc. The second must
provide client-side guarantees, \eg release-acquire synchronisation,
observational refinement etc. We consider both in our \wtms
specification (see \reffig{fig:wtms}).

\subsection{Release-acquire synchronisation}
\label{sec:release-acquire-c11}

\begin{figure}[t]
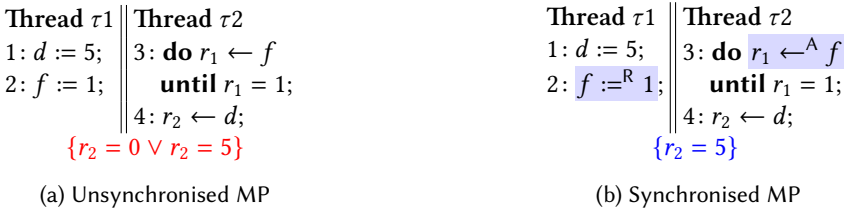

  \begin{minipage}[b]{0.48\columnwidth}
      \begin{center} 
  $\begin{array}{@{}l@{~}||@{~}l}
     {\bf Thread }\ \tidn{1}
     & {\bf Thread }\ \tidn{2}\\
     1\!: d := 5;  & 
                       3\!:  \kwdo\ r_1 \gets f \\
          
     2\!: f := 1; & \ \ \ \ \kwuntil\ r_1 = 1;  \\ 
     & 4\!:  r_2 \gets d; \\
     \end{array}$

   {\color{red} $\{r_2 = 0 \lor r_2=5\}$} 
 \end{center}
 \subcaption[caption]{Unsynchronised MP}
 \label{fig:po-message-bad}
  \end{minipage}
  \hfill
  \begin{minipage}[b]{0.48\columnwidth}
  \begin{center} 
   $\begin{array}{l@{~}||@{~}l}
      \begin{array}[t]{@{}l@{}}
        {\bf Thread }\ \tidn{1}\\
        1\!: d := 5; \\
        2\!:\highlight{$f :=^{\sf R} 1$};\\
      \end{array}
      & 
      \begin{array}[t]{@{}l@{}}
        {\bf Thread }\ \tidn{2}\\
        3\!: \kwdo\ \highlight{$r_1 \gets^{\sf A} f$} \ \\
        \ \ \ \ \kwuntil\ r_1 = 1;  \\ 
        4\!: r_2 \gets d; \\
      \end{array}
   \end{array}$
   
   {\color{blue} $\{ r_2=5\}$}  \qquad \qquad       
 \end{center}
 \subcaption[caption]{Synchronised MP}
 \label{fig:po-message}

\end{minipage}


\vspace{-2pt}
\caption[caption]{Message passing (MP) in C11 
}
\Description{Message passing (MP) in C11}

\end{figure}

Prior to detailing the design choices of \wtms, we recap the basics of
release-acquire synchronisation in C11, including a recently developed
timestamp-based operational semantics, which is the semantics assumed
by \wtms.

The fragment of C11 we focus on is the RC11-RAR fragment. The first
``R'' denotes the \emph{repairing
  model}~\cite{DBLP:conf/pldi/LahavVKHD17}, which precludes `thin-air'
behaviour by disallowing memory operations within a thread to be
reordered. The ``RAR'' refers to the fact that the model includes
\emph{release-acquire} as well as \emph{relaxed}
atomics~\cite{DBLP:journals/darts/DalvandiDDW20,DBLP:conf/ppopp/DohertyDWD19}.
\footnote{Note that extending this model to include other types of C11
  synchronisation (\eg SC fences) and relaxations that allow
  intra-thread ordering is
  possible~\cite{DBLP:journals/corr/abs-2108-01418}, but these
  extended models are not so interesting for the purposes of this
  paper, and the additional complexity that they induce detracts from
  our main contributions. } For the remainder of this paper, we simply
write C11 to refer to RC11-RAR.

We explain the main ideas behind release-acquire synchronisation using
the message passing (MP) litmus test in Figs.~\ref{fig:po-message-bad}
and \ref{fig:po-message}. It comprises two shared variables: $d$ (for
data) and $f$ (for a flag), both of which are initially $0$.  Under
SC, the postcondition of the program is $r_2 = 5$ because the loop in
thread~$\tidn{2}$ only terminates after $f$ has been updated to $1$ in
thread $\tidn{1}$, which in turn happens after $d$ is set to
$5$. Therefore, the only possible value of $d$ that thread $\tidn{2}$
can read is $5$.

However, in \reffig{fig:po-message-bad}, all read/write accesses of
$d$ and $f$ are {\em relaxed}, and hence the program can only
establish the weaker postcondition $r_2 = 0 \lor r_2 = 5$ since it is
possible for thread~$\tidn{2}$ to read $0$ for {\it d} at line 4. In
particular, reading $1$ for $f$ does not guarantee that
thread~$\tidn{2}$ will read $5$ for $d$.

This anomaly is corrected in \reffig{fig:po-message} where the
\highlight{highlighted} code depicts the necessary changes. In
particular, we introduce a {\em release} annotation (line 2) as well
as an {\em acquire} annotation (line 3), which together induces a
\emph{happens-before} relation if the read of $f$ at line 3 reads from
the write at line~2 (see \cite{DBLP:conf/popl/BattyOSSW11}). This in
turn ensures that thread~$\tidn{2}$ sees the most recent write to $d$
at line~1. We explain how relaxed accesses and release-acquire
synchronisation is formalised by the operational semantics in
\refsec{sec:views}.

\subsection{Transactional message passing}
\label{sec:wtms}

\begin{figure*}[t]
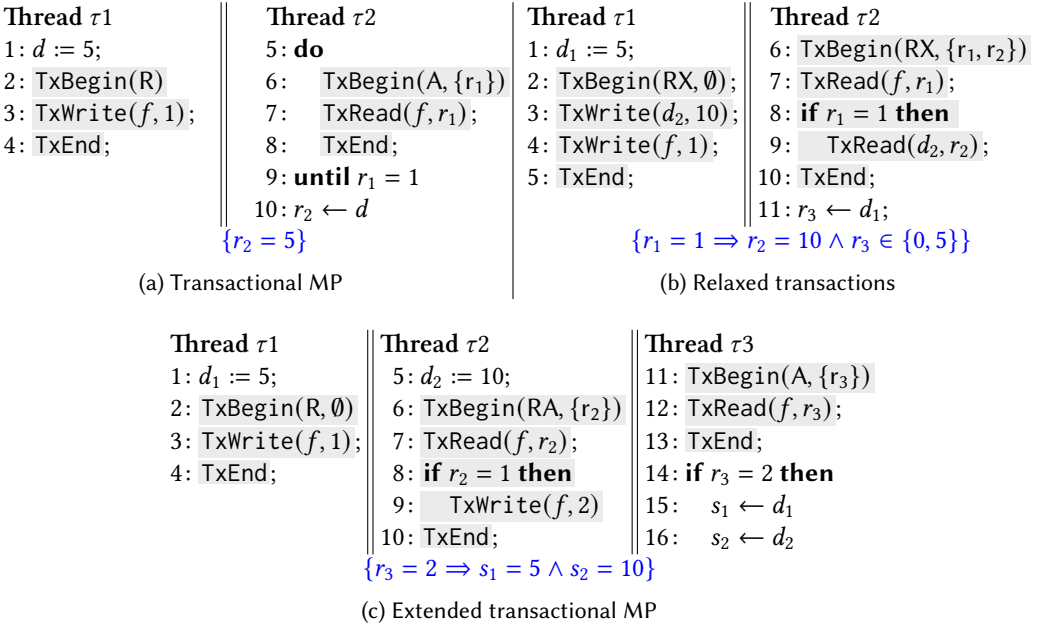

  \begin{minipage}[b]{0.48\columnwidth}
   \scalebox{1.0}{$\begin{array}{l@{\quad}||@{\quad}l}
      \begin{array}[t]{@{}l@{}}
        {\bf Thread }\ \tidn{1}\\
        1\!: d := 5; \\
        2\!:\raext{${\tt TxBegin}(\sf R)$} \\
        3\!:\raext{${\tt TxWrite}(f, 1)$};\\
        4\!:\raext{${\tt TxEnd}$};\\
      \end{array}
      & 
      \begin{array}[t]{@{}r@{~}l@{}}
        \multicolumn{2}{l}{{\bf Thread }\ \tidn{2}}\\
        5\!:& \kwdo \\
        6\!:& \quad \raext{${\tt TxBegin}(\sf A, \{r_1\})$} \\
        7\!:& \quad \raext{${\tt TxRead}(f, r_1)$};\\
        8\!:& \quad \raext{${\tt TxEnd}$};\\
        9\!:& \kwuntil\ r_1 = 1 \\
        10\!:& r_2 \gets d\\
      \end{array}
   \end{array}$}
   
 \hfill {\color{blue} \qquad $\{ r_2=5\}$} \hfill{}
 \subcaption[caption]{Transactional MP 
 }
 \label{fig:mp-tm}
\end{minipage}
\hfill \vline \hfill
\begin{minipage}[b]{0.48\columnwidth}
   \scalebox{1.0}{$\begin{array}{@{}l@{~~~~}||@{~~~~}l@{}}
      \begin{array}[t]{@{}l@{}}
        {\bf Thread }\ \tidn{1}\\
        1\!: d_1 := 5; \\
        2\!:\raext{${\tt TxBegin}(\sf RX,\emptyset)$}; \\
        3\!:\raext{${\tt TxWrite}(d_2,10)$}; \\
        4\!:\raext{${\tt TxWrite}(f, 1)$};\\
        5\!:\raext{${\tt TxEnd}$};\\
      \end{array}
      & 
      \begin{array}[t]{@{}r@{~}l@{}}
        \multicolumn{2}{l}{{\bf Thread }\ \tidn{2}}\\
        6\!:& \raext{${\tt TxBegin}(\sf RX, \{r_1, r_2\})$} \\
        7\!:& \raext{${\tt TxRead}(f, r_1)$};\\
        8\!:& \raext{\kwif\ $r_1 = 1$\ \kwthen\ }\\
        9\!:& \raext{\quad${\tt TxRead}(d_2, r_2)$};\\
        10\!:& \raext{${\tt TxEnd}$};\\
        11\!:& r_3\gets d_1; 
      \end{array}
   \end{array}$}
   
 \hfill {\color{blue} \qquad $\{r_1 = 1 \imp r_2=10 \wedge r_3 \in \{0,5\}\}$} \hfill{}
 \subcaption[caption]{Relaxed transactions} 
 \label{fig:mp-tm-2}
\end{minipage}
\bigskip

\smallskip

\hfill
 \begin{minipage}[b]{0.65\columnwidth}
   \scalebox{1.0}{$\begin{array}{@{}l@{~~~~}||@{~~~~}l@{~~~~}||@{~~~~}l}
      \begin{array}[t]{@{}l@{}}
        {\bf Thread }\ \tidn{1}\\
        1\!: d_1 := 5; \\
        2\!:\raext{${\tt TxBegin}(\sf R, \emptyset)$} \\
        3\!:\raext{${\tt TxWrite}(f, 1)$};\\
        4\!:\raext{${\tt TxEnd}$};\\
      \end{array}
      & 
      \begin{array}[t]{@{}r@{~}l@{}}
        \multicolumn{2}{@{}l}{{\bf Thread }\ \tidn{2}}\\
        5\!:& d_2 := 10; \\
        6\!:& \raext{${\tt TxBegin}(\sf RA, \{r_2\})$} \\
        7\!: & \raext{${\tt TxRead}(f, r_2)$};\\
        8\!:& \raext{$\kwif\ r_2 = 1\ \kwthen$} \\
        9\!:& \raext{$\quad {\tt TxWrite}(f, 2)$} \\
        10\!:&\raext{${\tt TxEnd}$};\\
      \end{array}
      & 
      \begin{array}[t]{@{}r@{~}l@{}}
        \multicolumn{2}{@{}l}{{\bf Thread }\ \tidn{3}}\\
        11\!:& \raext{${\tt TxBegin}(\sf A, \{r_3\})$} \\
        12\!: & \raext{${\tt TxRead}(f, r_3)$};\\
        13\!:& \raext{${\tt TxEnd}$};\\
        14\!:& \kwif\ r_3 = 2\  \kwthen \\
        15\!:& \ \ \ \ s_1 \gets d_1\\
        16\!:& \ \  \ \ s_2 \gets d_2\\
      \end{array}
   \end{array}$}
   
 \hfill {\color{blue} $\{r_3 = 2 \imp s_1=5 \wedge s_2=10\}$}\hfill {}

 \subcaption[caption]{Extended transactional
   MP 
 }
 \label{fig:ra-trans-mem} 
\end{minipage}
\hfill{}


\caption{Transactional memory client interactions}
\Description{Transactional memory client interactions}
\label{fig:trans-client}
\end{figure*}

We now describe the guarantees provided by our transactional model in
the context of a client program. Like standard reads and writes in
C11, we allow transactions to be combined with a synchronising
annotation, which may be one of relaxed ({\sf RX}), releasing ({\sf
  R}), acquiring ({\sf A}), or release-acquiring ({\sf RA}). These
annotations dictate whether or not transaction induces a client-side
happens before.  In particular, client-side happens-before is induced
from thread $\tidn{1}$ to thread $\tidn{2}$ if
\begin{enumerate*}[label=(\roman*)]
\item a read in transaction $t_2$ executed by $\tidn{2}$ reads-from a write
  in transaction $t_1$ executed by $\tidn{1}$,
\item $t_1$ contains a release annotation (either {\sf R} or {\sf RA}), and 
\item $t_2$ contains an acquire annotation (either {\sf A} or {\sf
    RA}).
\end{enumerate*}
We illustrate the implications of these annotations via the examples
in Fig.~\ref{fig:trans-client}, where the \raext{highlights} are used
to identify the transactions. We assume that a client provides a
transaction with a set of registers that it may use when the
transaction begins (see \refsec{sec:release-acquire-tm}).

\reffig{fig:mp-tm} describes a transactional variation of MP. Thread
$\tidn{1}$ comprises a (non-transactional) relaxed write on $d$ followed by
a transactional write of the flag, $f$. Thread $\tidn{2}$ contains a
transactional read of $f$ within a loop that terminates if $\tidn{2}$ reads
$1$ for $f$. After the loop terminates, $\tidn{2}$ performs a
(non-transactional) relaxed read of $d$. In this example, like
\reffig{fig:po-message}, the release and acquire annotations induce a
happens-before relation from $\tidn{1}$ to $\tidn{2}$ and hence ensure that the
read of $d$ in $\tidn{2}$ does not return the stale value, $0$.

\reffig{fig:mp-tm-2} describes a program that uses a relaxed
transaction. The postcondition of the program considers the case where
the transaction in $\tidn{1}$ occurs before the transaction in
$\tidn{2}$ since the antecedent assumes that $r_1 = 1$, \ie the read
of $f$ at line~7 reads the write of $f$ at line~4. In this example,
both transactions are relaxed, and hence, the ordering of transactions
above does not induce a happens before from $\tidn{1}$ to
$\tidn{2}$. Thus, the read of $d_1$ at line~11 is not guaranteed to
see the write of $d_1$ at line~1, \ie the final value of $r_3$ is
either $0$ or $5$. However, since the write and read of $d_2$ occurs
within the transactions of $\tidn{1}$ and $\tidn{2}$, respectively, if
$r_1=1$, then $\tidn{2}$ is guaranteed to read $10$ for $d_2$.

Finally, \reffig{fig:ra-trans-mem} demonstrates a program with an {\sf
  RA} transaction. The antecedent of the program's postcondition
implies that the transaction in $\tidn{3}$ occurs after the
transaction in $\tidn{2}$, which in turn occurs after the transaction
in $\tidn{1}$. Here, the transaction annotations ensure that the
writes to $d_1$ and $d_2$ (at lines 1 and 5) performed by the {\em
  client} are seen by the client reads at lines 15 and 16. This is
because the transaction in $\tidn{2}$ (annotated by {\sf RA}) is
guaranteed to synchronise with the transaction in $\tidn{1}$
(annotated by {\sf R}), and similarly, the transaction in $\tidn{2}$
(annotated by {\sf A}) is guaranteed to synchronise with the
transaction in $\tidn{2}$ (annotated by {\sf RA}). Note that if the
transaction in $\tidn{2}$ was only releasing, then $\tidn{1}$ and
$\tidn{2}$ would not synchronise, and the read at line 15 may return
either $0$ or~$5$. Yet, the read at line 16 would still be guaranteed
to return $10$ for $d_2$ since $\tidn{2}$ and $\tidn{3}$
synchronise. If the transaction in $\tidn{2}$ was only acquiring, then
$\tidn{2}$ and $\tidn{3}$ would not synchronise. In this case,
although $\tidn{1}$ and $\tidn{2}$ have synchronised, neither of the
reads at lines~15~and~16 are guaranteed to return the new writes at
lines~1 and~5.

Deciding a transaction's synchronisation flag ultimately comes down to
the needs of a client program, much like {\tt memory\_order}
parameters on {\tt atomic\_compare\_exchange} instructions in
C11~\cite{CMP-EX}. Client programs that require message passing
through transactions would use release-acquire, while others may only
require relaxed annotations.


\newcommand{\OV}{{\it OV}}
\newcommand{\regset}{{\it regSet}}

\section{Release-acquire TM specification}
\label{sec:release-acquire-tm}

With the basic requirements for release-acquire and transactional
synchronisation in place, we work towards a formal TM
specification. Our specification will be closely tied to an
operational semantics for C11 with timestamped writes and per-thread
views~\cite{DBLP:journals/corr/PodkopaevSN16,Dolan:2018:LDRF,DBLP:conf/ecoop/KaiserDDLV17,DBLP:conf/popl/KangHLVD17,ECOOP20}
(see \refsec{sec:views}). We integrate this model with a TM
specification in \refsec{sec:wtms-1}.


\subsection{View-based operational semantics}
\label{sec:views}
As discussed above, in our model, the C11 relaxed memory state is
formalised by \emph{timestamped writes}. Instead of mapping each
location to a value, 
the state contains a set of writes
$\writes \subseteq \Write$,
where $\Write=\Var \times \Val \times \TS$ represents a write to a
location $\Var$ with value $\Val$ and $TS \eqdef \mathbb{Q}$ is the
set of possible timestamps. If $w \in \Write$ and
$w = (\loc, \val, \ts)$, then we let $\xxloc(w) \eqdef \loc$,
$\xxval(w) \eqdef \val$, $\xxtst(w) \eqdef \ts$, be the functions that
extract the location, value and timestamp of $w$, respectively.

A {\em view} is a mapping from a location to a write of that location,
\ie $\View \eqdef \Var \to \Write$.
To define the allowable reads by each thread to each location, the
state also records a {\em thread view} for each thread define by a
function
\[
  \tview: \TID \to \View
\]
where $\TID \eqdef \nat$ is the set of thread identifiers.  A thread
may read from \emph{any} write whose timestamp is no smaller than the
thread's current view. Thus, the {\em observable values} ($\OV$), \ie
the set of values that thread $\tid$ can read for location $\loc$ is
  \begin{align*}
    \OW_\tid(x) & \eqdef \left\{\wr \in \writes 
                  \begin{array}{@{~}|l@{}}
                    \xxloc(\wr) = x \wedge {} 
                    \xxtst(\wr) \geq \xxtst(\tview_\tid(\loc))
                  \end{array}\right\}
    \\
    \OV_\tid(x) & \eqdef \{\xxval(\wr) \mid \wr \in \OW_\tid(x)\}
  \end{align*}
  A write may be introduced at any timestamp greater than the thread's
  current view (with a caveat that ensures atomicity of
  read-modify-writes, see \cite{DBLP:conf/ppopp/DohertyDWD19,ECOOP20}
  for details).

Finally, to formalise release-acquire synchronisation, a state in the
timestamp model also includes a notion of a {\em modification view},
\[
  \mview: \Write \to \View
\]
which is a function that records the thread view of the executing
thread when a new write is introduced to $\writes$. In particular, if
thread $\tid$ introduces a new write $\wr$ to $\writes$ and
$\tview_\tid$ is updated to $view$ in this new state, then $\mview$ is
also updated so that $\mview_\wr = view$ in the new state. This
information is used to update thread views in case release-acquire
synchronisation occurs.

Formally, when threads synchronise, a new view is calculated using
$\otimes$, which is defined as follows. Given $V_1, V_2 \in \View$, we
have
\[
V_1 \otimes V_2  \eqdef \lambda \loc 
.\ 
\begin{array}[t]{@{}l@{}}
  \textbf{if}~\xxtst(V_2(\loc)) \leq \xxtst(V_1(\loc))\ 
  \textbf{then}\ V_1(\loc)\ \textbf{else}\  V_2(\loc)
\end{array}
\]
which constructs a new view by taking the write with the larger
timestamp for each location $\loc$.

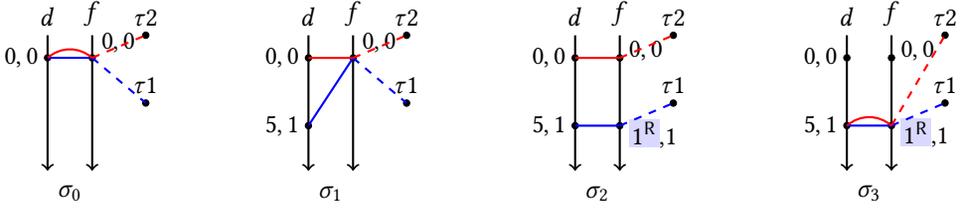
\begin{figure}[t]
  \begin{minipage}[t]{0.48\columnwidth}
  \begin{minipage}[b]{0.48\columnwidth}
    \begin{center}
      \scalebox{0.99}{
    \begin{tikzpicture}[scale=0.6]
      \draw[thick,->] (0,4) -- (0,1); \draw[thick,->] (1,4) --
      (1,1);
    
      \coordinate (d0) at (0,3.5); \coordinate (d5) at (0,2);
      \coordinate (f0) at (1,3.5); \coordinate (f1) at (1,1.5);
      \coordinate (sigma) at (0.5,0.5);
      \node at (sigma) {$\sigma_0$} ;

      \coordinate (T2-1) at (2.2,4);
      \coordinate (T1-1) at (2.2,2.5);
            
      \draw (0,4) node[above] {$d$}; \draw (1,4) node[above] {$f$};
      \draw (T2-1) node[above] {$\tidn{2}$};
      \draw (T1-1) node[above] {$\tidn{1}$};

      \filldraw [black] (d0) circle (2pt) node[left, black]
      {$0, 0$};
     
      \filldraw [black] (f0) circle (2pt) node[right, black,yshift=2mm]
      {$0, 0$};

      \filldraw [black] (T1-1) circle (2pt);
    
      \filldraw [black] (T2-1) circle (2pt);

      \path [draw, blue, thick] (d0) edge 
      (f0) (f0) edge [dashed] (T1-1);
      \path [draw, thick, red] (d0) edge
      [bend left=40] (f0) (f0) edge [dashed] (T2-1);

      \end{tikzpicture}}
  \end{center}
    \end{minipage}       
    \hfill
  \begin{minipage}[b]{0.48\columnwidth} 
    \begin{center}
    \scalebox{0.99}{
    \begin{tikzpicture}[scale=0.6]
      \draw[thick,->] (0,4) -- (0,1); \draw[thick,->] (1,4) --
      (1,1);
    
      \coordinate (d0) at (0,3.5); \coordinate (d5) at (0,2);
      \coordinate (f0) at (1,3.5); \coordinate (f1) at (1,1.5);
      \coordinate (sigma) at (0.5,0.5);
      \node at (sigma) {$\sigma_1$} ;

      \coordinate (T2-1) at (2.2,4);
      \coordinate (T1-1) at (2.2,2.5);
            
      \draw (0,4) node[above] {$d$}; \draw (1,4) node[above] {$f$};
      \draw (T2-1) node[above] {$\tidn{2}$};
      \draw (T1-1) node[above] {$\tidn{1}$};

      \filldraw [black] (d0) circle (2pt) node[left, black]
      {$0, 0$};
     
      \filldraw [black] (f0) circle (2pt) node[right, black,yshift=2mm]
      {$0, 0$};

      \filldraw [black] (d5) circle (2pt) node[left,
        black] {$5,1$};

      \filldraw [black] (T1-1) circle (2pt);
    
      \filldraw [black] (T2-1) circle (2pt);

      \path [draw, blue, thick] (d5) -- (f0) (f0) edge
        [dashed] (T1-1);
      \path [draw, thick, red] (d0) -- (f0) (f0) edge
        [dashed] (T2-1);

      \end{tikzpicture}}
  \end{center}
    \end{minipage}
  \end{minipage}
  \hfill
  \begin{minipage}[t]{0.5\columnwidth}
      \begin{minipage}[b]{0.48\columnwidth}
        \begin{center}
          \scalebox{0.99}{
    \begin{tikzpicture}[scale=0.6]
      \draw[thick,->] (0,4) -- (0,1); \draw[thick,->] (1,4) --
      (1,1);

      \coordinate (d0) at (0,3.5); \coordinate (d5) at (0,2);
      \coordinate (f0) at (1,3.5); \coordinate (f1) at (1,2);

      \coordinate (sigma) at (0.5,0.5);
      \node at (sigma) {$\sigma_2$} ;
      
      \coordinate (T2-1) at (2.2,4);
      \coordinate (T1-1) at (2.2,2.5);

      \draw (0,4) node[above] {$d$}; \draw (1,4) node[above] {$f$};
      \draw (T2-1) node[above] {$\tidn{2}$};
      \draw (T1-1) node[above] {$\tidn{1}$};

      \filldraw [black] (d0) circle (2pt) node[left, black]
      {$0,0$};
     
      \filldraw [black] (f0) circle (2pt) node[right, black,yshift=1mm]
      {$0,0$};

      \filldraw [black] (d5) circle (2pt) node[left,
        black] {$5,1$}; 

       \filldraw [black] (f1) circle (2pt) node[right,
        black,yshift=-1mm] {\highlight{$1^{\sf R}$},1}; 

      \filldraw [black] (T1-1) circle (2pt);
    
      \filldraw [black] (T2-1) circle (2pt);

      \path [draw, blue, thick] (d5) -- (f1) (f1) edge
        [dashed] (T1-1);
      \path [draw, thick, red] (d0) -- (f0) (f0) edge
        [dashed] (T2-1);
      \end{tikzpicture}}
      \end{center}
    \end{minipage}
    \hfill
      \begin{minipage}[b]{0.48\columnwidth}
        \begin{center}
          \scalebox{0.99}{
    \begin{tikzpicture}[scale=0.6]
      \draw[thick,->] (0,4) -- (0,1); \draw[thick,->] (1,4) --
      (1,1);
    
      \coordinate (d0) at (0,3.5); \coordinate (d5) at (0,2);
      \coordinate (f0) at (1,3.5); \coordinate (f1) at (1,2);
      \coordinate (sigma) at (0.5,0.5);
      \node at (sigma) {$\sigma_3$} ;

      \coordinate (T2-1) at (2.2,4);
      \coordinate (T1-1) at (2.2,2.5);
            
      \draw (0,4) node[above] {$d$}; \draw (1,4) node[above] {$f$};
      \draw (T2-1) node[above] {$\tidn{2}$};
      \draw (T1-1) node[above] {$\tidn{1}$};

      \filldraw [black] (d0) circle (2pt) node[left, black]
      {$0,0$};
     
      \filldraw [black] (f0) circle (2pt) node[right, black,yshift=1mm]
      {$0,0$};

      \filldraw [black] (d5) circle (2pt) node[left,
        black] {$5,1$}; 

       \filldraw [black] (f1) circle (2pt) node[right,
        black,yshift=-1mm] {\highlight{$1^{\sf R}$},1}; 

      \filldraw [black] (T1-1) circle (2pt);
    
      \filldraw [black] (T2-1) circle (2pt);

       \path [draw, blue, thick] (d5) edge 
        (f1) (f1) edge [dashed] (T1-1);

\path [draw, thick, red] (d5) edge [bend left=40] (f1) (f1) edge          
       [dashed] (T2-1);

     \end{tikzpicture}}
   \end{center}
   
 \end{minipage}
\end{minipage}

  \caption{Synchronised message passing views}
  \Description{Synchronised message passing views}
 \label{fig:po-message-sync-exec}
\end{figure}

\begin{example}[Synchronised MP]\label{ex:sync-mp}
  Consider \reffig{fig:po-message-sync-exec}, which depicts a possible
  execution of the program in \reffig{fig:po-message}. Each ``$v,i$''
  represents a ``value, timestamp'' pair for the location in
  question. The initial state is $\sigma_0$, where the views of
  threads $\tidn{1}$ and $\tidn{2}$ are both the initial writes. State
  $\sigma_1$ occurs after executing line~1, where the view of
  $\tidn{1}$ is updated to the new write on $d$. Similarly, $\sigma_2$
  occurs after executing line~2. Note that the new write is tagged
  with a release annotation. Moreover, the operational semantics
  guarantees that in $\sigma_2$, we have
  $\sigma_2.\mview_{(f, 1,1)}(d) = (d, 5, 1)$, \ie the modification
  view of the write $(f, 1, 1)$ returns $(d, 5, 1)$ for $d$ (since
  this was the thread view of $\tidn{1}$ for $d$ when the write at
  line 2 occurred).

  Finally, $\sigma_3$ depicts the state after execution of line~$3$,
  where the read returns the value $1$ for $f$. In this case, the
  thread view of $\tidn{2}$ for $f$ is updated to the new read. More
  importantly, due to release-acquire annotations the semantics
  enforces that the thread view of $\tidn{2}$ for $d$ in $\sigma_3$ is
  {\em also} updated to the new modification view, \ie
  $\sigma_2.\mview_{(f, 1, 1)}(d)$. Thus, after state $\sigma_3$,
  $\tidn{2}$ will no longer be able to return the stale value $0$ for
  $d$.
\end{example}
The key difference in execution of the unsynchronised example
(\reffig{fig:po-message-bad}) is that the read at line 3 does not
update $\tview_{\tidn{2}}(d)$. Hence, for the state of
\reffig{fig:po-message-bad} analogous to $\sigma_3$, the view of
$\tidn{2}$ for $d$ will remain at the initial write, allowing it to
return a stale value.


\subsection{TMS2}
\label{sec:tms2}

First, we consider the TMS2 specification, which is our \wtms
specification {\em without} any client-side release-acquire
guarantees. This is given by the unhighlighted components of
\reffig{fig:wtms}, which correspond precisely to the {\em internal
  actions} of
TMS2~\cite{DBLP:journals/fac/DohertyGLM13}.\footnote{\wtms, like TMS2
  is presented as an I/O automaton~\cite{DBLP:books/mk/Lynch96}. For
  simplicity, we eschew the external actions, but they can easily
  be included to formalise the TM interface. 
} Note that each action of \reffig{fig:wtms} is atomic and guarded by
the conditions defined in \pre. If all the conditions in \pre\ hold
the transition is {\em enabled}, and the corresponding action
atomically updates the state according to the assignments and
functions in \eff. If some condition in \pre\ does not hold then the
transition is {\em blocked}. We use $\sqcap$ to denote a
non-deterministic choice (see \cite{DBLP:books/mk/Lynch96} for
details).

TMS2 is a close operational approximation of {\em
  opacity}~\cite{2010Guerraoui}. The differences between TMS2 and
opacity are minor~\cite{LLM12}, and much of the discussion below
applies equally to opacity. 
TMS2 (and opacity) distinguishes between {\em completed} and {\em
  live} transactions, where completed transaction may either be {\em
  committed} or {\em aborted}. TMS2 guarantees the existence of a
total order $\prec$ over {\em all} transactions such that:
\begin{enumerate}[label=(\arabic*)]
\item if transaction $t_1$ executes {\tt TxEnd} before $t_2$ executes
  {\tt TxBegin}, then $t_1 \prec t_2$;
\item for any transaction $t$, if $\prec_{\downarrow t}$ is the strict
  downclosure of~$t$ w.r.t. $\prec$ and $m$ is the memory obtained by
  applying the committed transactions in $\prec_{\downarrow t}$ in
  order, then
  \begin{itemize}
  \item all internal reads in $t$ for a variable $x$ are consistent
    with the last write to $x$ in $t$, and
  \item all external reads of $t$ are consistent with $m$.
\end{itemize}
\end{enumerate}
Note that conditions (1) and (2) together imply strict
serialisability of the transactions. Condition (2) additionally
ensures that no transaction reads from an aborted or live transaction
since all external writes can be explained by the prior writes of
committed transactions only. Moreover, reads of all transactions
(including aborted and live transactions) never return a spurious
value, \ie each non-aborting read can be explained by prior committed
transactions.

The existence of the total order mentioned above is guaranteed by the
TMS2 specification as follows. Each transaction $t$ comprises a local
read set, {\tt rdSet}$_t$, local write set, {\tt wrSet}$_t$, and
variable, {\tt status}$_t$ that is used to model control flow within a
transaction. If the status of $t$ is {\tt NOTSTARTED}, $t$ may
transition to status {\tt READY} if a thread $\tid$ executes {\tt
  TxBegin}$_\tid$. Once ready, $\tid$ may execute some number of {\tt
  TxRead} and {\tt TxWrite} operations, or {\tt TxEnd}, which sets the
status of $t$ (the transaction that $\tid$ is executing) to {\tt
  COMMITTED}. Note that if transaction $t$ is {\tt READY}, it may transition to
status {\tt ABORT} at any time. Moreover, in some circumstances, $t$
may be forced to abort because all other transitions of $t$ are
blocked. For example, if $t$ is a writing transaction and $t$'s read
set is not a subset of $\last(M)$, then $t$ must abort.

To ensure read/write consistency, TMS2 uses a sequence of memories
$M$, where a memory is a mapping from locations to values.  A
transaction $t$ records the earliest memory it can read from by
setting {\tt beginIdx}$_t$ to the last index of $M$ when $t$ executed
{\tt TxBegin}$_{\tid}$. Moreover, each committing writing transaction $t$
constructs a new memory $N = \last(M) \oplus{}${\tt wrSet}$_t$ which
is the memory $\last(M)$ overwritten with the write set of $t$. It
then appends $N$ to the end of $M$ (see {\tt TxEndWR}).

We differentiate between internal reads {\tt TxReadInt} and external
reads {\tt TxReadExt}, by whether the read location $x$ is in the
write set of the executing transaction, $t$. An internal read of $x$
simply returns the value of $x$ in the write set of $t$. An external
read of $x$ non-deterministically picks a memory index $i$. This read
is enabled iff $i$ is a valid index (\ie is between {\tt beginIdx}$_t$
and the last memory index, $|M|-1$) and the read set of $t$ is
consistent with $M_i$ (\ie the memory at index $i$). In case an
external read occurs, the read set is updated and the value read is
returned. This means that all external reads in $t$ are validated with
respect to {\em some} memory snapshot between {\tt beginIdx}$_t$ and
the maximum memory index. Note that it is possible for two different
reads to validate w.r.t. {\em different} memory snapshots.

TMS2 prescribes a {\em lazy} write-back strategy via {\tt TxWrite},
where writes are cached in a local write set until the commit occurs
(as described above). However, as we shall see, this does not preclude
implementations that use {\em eager} write-backs, where writes occur
in memory at the time of writing (see~\cite{DBLP:journals/fac/DerrickDDSTW18}). In fact, the
\tmlra algorithm, our main case study in this paper, {\em is} such an
eager algorithm (see \refsec{sec:tml-c11}).

We split the commit phase into two cases: {\em read-only} (modelled by
{\tt TxEndRO}) and {\em writing} (modelled by {\tt TxEndWR}). Since
all reads are validated at the time of reading, a read-only
transaction can simply commit the transaction. On the other hand, the
writing transaction must ensure its reads are valid w.r.t. the last
memory snapshot. The effect of this transition is to install a new
memory snapshot as described above.

The final component of $t$ is a local set {\tt regs}$_{t}$ that is
used to keep track of the set of registers that the transaction has
written to. A client provides the set of registers to be used by each
transaction when the transaction begins. These registers are set to a
special value $\bot$ when a transaction aborts to ensure that no value
read by $t$ is seen outside $t$.

\begin{figure}[t]
  \centering  \small
  {\tt \begin{tabular}[t]{@{}l@{~~~~~~~~~}l@{}}
         \begin{tabular}[t]{@{}l}
           TxBeginV$_{\tid}(\raext{$\sflag$},\raext{$m$}, $\regset$)$\\
           \begin{tabular}[t]{@{}l@{~}l@{}}
             \pre  & status$_{t}$ = NOTSTARTED  \\
             & txn$_\tid$ = $\bot$  \\
             & \raext{$m$ $\in$ $\vismem_{\tau}$} \\
             \eff  & wrSet$_{t}$:=$\emptyset$ \\
                   & rdSet$_{t}$:=$\emptyset$ \\
                   & \raext{beginIdx$_{t}$:= $m$}  \\
                   & \raext{seenIdxs$_{t}$:=$\emptyset$} \\
                   & \raext{synctype$_{t}$:=$\sflag$} \\
                   & regs$_t$:=$\regset$ \\
                   & txn$_\tid$:=$\txid$ \\
                   & status$_{t}$:=READY \\
           \end{tabular}
           \\\\
                      TxWrite$_{\tid}$($x, v$)\\
           \begin{tabular}[t]{@{}l@{~}l@{}}
             \pre  & status$_{t}$ = READY  \\
                   & txn$_\tid$ = $\txid$  \\
             \eff  & wrSet$_{t}$:=wrSet$_{t}$$\cup \{x \mapsto v\}$
           \end{tabular}
           \\\\
           TxReadInt$_{\tid}$($x, r$)\\
           \begin{tabular}[t]{@{}l@{~}l@{}}
             \pre  & status$_{t}$ = READY  \\
                   & $x \mapsto v \in$ wrSet$_{t}$ \\
                   & $r \in{}$regs$_t$ \\
                   & txn$_\tid$ = $\txid$  \\
             \eff  & $r$:=$\val$ 
           \end{tabular}
\\
         \\
         \begin{tabular}[t]{@{}l}
           TxReadExt$_{\tid}$($x, i, r$)\\
           \begin{tabular}[t]{@{}l@{~}l@{}}
             \pre  & status$_{t}$ = READY  \\
                   & $x \notin \dom$(wrSet$_{t}$) \\
                   & beginIdx$_t \le  i < |M|$ \\
                   & rdSet$_t \subseteq M_i$ \\
                   & txn$_\tid$ = $\txid$  \\
             \eff  & rdSet$_t$:=rdSet$_t \cup \{x \mapsto M_i(x)\}$\\
                   & \raext{\kwif\ $S_i \in \{{\sf R}, {\sf RA}\}$}\\
                   & \raext{\kwthen\ seenIdxs$_t$:= seenIdxs$_t\cup \{i\}$} \\
                   & $r$:=$M_i(x)$ 
           \end{tabular}
         \end{tabular}
           \\
           \\
           TxRead$_{\tid}(x, r)$ = \\
           \quad \begin{tabular}[t]{@{}l@{}}
               TxReadInt$_{\tid}(x) \sqcap{}$ $\bigsqcap_{i}$TxReadExt$_{\tid}(x, i, r)$
             \end{tabular}

         \end{tabular}
         &
         \begin{tabular}[t]{@{}l}
           TxEndRO$_{\tid}$\\
           \begin{tabular}[t]{@{}l@{~}l@{}}
             \pre  & status$_{t}$ = READY  \\
                   & wrSet$_t=\emptyset$ \\
                   & txn$_\tid$ = $\txid$  \\
             \eff  & status$_{t}$:=COMMIT \\
                   & \raext{\kwif\ synctype$_t\in \{{\sf A}, {\sf RA}\} \wedge{}$seenIdxs$_t \neq \emptyset $} \\
                   & \raext{\kwthen}\\
                   & \raext{\ \ \kwlet\ nv${}=\view$(seenIdxs$_t$,V) \kwin} \\
                   & \raext{\ \ $\tview_\tid$:= $\tview_\tid \otimes{}$nv} \\
                   & txn$_{\tid}$:=$\bot$ \\
                  & \raext{$\txview_{\tid}$:= $\max$(seenIdxs$_t$)}
           \end{tabular}
         \\
         \\
           TxEndWR$_{\tid}$\\
           \begin{tabular}[t]{@{}l@{~}l@{}}
             \pre  & status$_{t}$ = READY  \\
                   & wrSet$_t \neq \emptyset$ \\
                   & rdSet$_t \subseteq \last(M)$ \\
                   & txn$_\tid$ = $\txid$  \\
             \eff  & $M$:= $M \cdot (\last(M) \oplus{}$wrSet$_t)$ \\ 
                   & status$_{t}$:=COMMIT \\
                   & \raext{\kwif\ synctype$_t\in \{{\sf A}, {\sf RA}\} \wedge {}$seenIdxs$_t \neq \emptyset $} \\
                   & \raext{\kwthen\ } \\ 
                   & \raext{\ \ \kwlet\ nv${} = \view$(seenIdxs$_t$,V) \kwin} \\
                   & \raext{\ \ \ \ $V$:= $V \cdot{}(\tview_\tid \otimes{}$nv)} \\
                   & \raext{\ \ \ \ $\tview_\tid$:= $\tview_\tid \otimes{}$nv} \\
                   & \raext{\kwelse\ $V$:= $V \cdot{}\tview_\tid$} \\
                   & \raext{$S$:= $S \cdot{}$synctype$_t$} \\
                   & txn$_{\tid}$:=$\bot$ \\
                   & \raext{$\txview_{\tid}$:= $\max$(seenIdxs$_t$)}

           \end{tabular}
         \\
         \\
           Abort$_{\tid}$\\
           \begin{tabular}[t]{@{}l@{~}l@{}}
             \pre  & status$_{t}$ = READY  \\
                   & txn$_\tid$ = $\txid$  \\
             \eff  & $\forall s \in{}$regs$_t$.\,$s$:=$\bot$ \\
                   & txn$_{t}$:=$\bot$  \\
                   & status$_{t}$:=ABORT
           \end{tabular}
           \\
           \\
           TXBegin$_{\tid}(\sflag, \regset)$ = $\bigsqcap_{m}$ TXBeginV$_{\tid}(\sflag, m, \regset) $ \\
             TxEnd$_{\tid}$ = 
             \begin{tabular}[t]{@{}l@{}}
               TxEndRO$_{\tid} \sqcap{}$TxEndWR$_{\tid}$
             \end{tabular}
         \end{tabular}
       \end{tabular}}

\bigskip

\raggedright{
  \begin{tabular}[t]{@{}l@{\qquad}l}
    {\bf where} \\
    $M : \seq (\Var \to \Val)$, initially $M = \langle (\lambda v \in \Var. 0)\rangle $ & 
    \raext{$S : \seq \{\sf RX, R, A, RA\}$, initially $S = \langle \sf RX \rangle$} \\
    \raext{$V : \seq (\Var \to \TS)$, initially $V = \langle (\lambda v \in \Var.0) \rangle$} &
    \raext{$\view({\it Idxs}, {\it Vf}) = \lambda l \in \Var.\ \maxWr \{{\it Vf}_i(l) \mid i \in {\it Idxs} \}  $} \\
    \raext{$\vismem_{\tau} =  \{ n ~|~ n \geq \txview_{\tau} \land n \leq |M|-1 \}  $}
  \end{tabular}}
\caption{\wtms specification: \raext{highlighted} components are
  extensions necessary for client synchronisation for C11
  transactions. We assume that the transactions are executed by thread
  $\tid$. Moreover, let $Q \cdot a$ be the sequence $Q$ appended with
  element $a$ and $f \oplus g$ be the function $f$ overridden by
  function $g$. Finally, let $\maxWr$ be a function that returns the
  write with the largest timestamp in the given set of writes. }
\Description{WTMS specification}
  \label{fig:wtms}
\end{figure}

\subsection{\wtms}
\label{sec:wtms-1}

\begin{figure*}[t]
  \centering
    \begin{minipage}[t]{0.45\linewidth}
      \begin{minipage}[b]{0.48\columnwidth}
        \begin{center}
          \scalebox{0.9}{
    \begin{tikzpicture}[scale=0.6]
      \coordinate (d0) at (0,3.5);
      \coordinate (d5) at (0,2);
      \coordinate (f0) at (-2.5,3.5);
      \coordinate (f1) at (-2.5,2);
      \coordinate (sigma) at (0,1.75);

      \draw[thick,->] (0,4.3) -- (0,1);

      \coordinate (T2-1) at (1.2,4.3);
      \coordinate (T1-1) at (1.2,2.5);
            
      \draw (0,4.3) node[above] {$d$};
      \draw (-2.5,4.3) node[above] {$M$};
      \draw (T2-1) node[above] {$\tidn{2}$};
      \draw (T1-1) node[above] {$\tidn{1}$};

      \filldraw [black] (d0) circle (2pt) node[right, black, yshift=-1mm] {$0,0$};
      \filldraw [black] (d5) circle (2pt) node[right, black, yshift=-1mm] {$5,1$}; 

      \node at (f0) (f0n) {$\{f \mapsto 0\}^{\sf RX}$}; 
      \node at (f1) (f1n) {$\{f \mapsto 1\}^{\sf R}$};

      \filldraw [black] (T1-1) circle (2pt);
    
      \filldraw [black] (T2-1) circle (2pt);

      \path [draw, blue, thick] (d5) edge [dashed] (T1-1);
      \path [draw, thick, red]  (d0) edge [dashed] (T2-1);
      \path [draw, dotted, orange!70!black, line width=0.4mm]
      ($(f1n.east) + (-0.1, 0)$) edge[-stealth] (d5)
      ($(f0n.east) + (-0.2, 0)$) edge[-stealth] (d0) ;
    \end{tikzpicture}}
  
      $\sigma_2$ (after executing line 4)
      \end{center}
    \end{minipage}
    \hfill
      \begin{minipage}[b]{0.48\columnwidth}
        \begin{center}
          \scalebox{0.9}{
            \begin{tikzpicture}[scale=0.6]
              \coordinate (d0) at (0,3.5);
              \coordinate (d5) at (0,2);
              \coordinate (f0) at (-2.5,3.5);
              \coordinate (f1) at (-2.5,2);
              
              \draw[thick,->] (0,4) -- (0,1);
              
              \coordinate (T2-1) at (1.2,4);
              \coordinate (T1-1) at (1.2,2.5);
              
              \draw (0,4) node[above] {$d$};
              \draw (-2.5,4) node[above] {$M$};
              \draw (T2-1) node[above] {$\tidn{2}$};
              \draw (T1-1) node[above] {$\tidn{1}$};
              
              \filldraw [black] (d0) circle (2pt) node[right, black, yshift=-1mm] {$0,0$};
              \filldraw [black] (d5) circle (2pt) node[right, black, yshift=-1mm] {$5,1$}; 
              
              \node at (f0) (f0n) {$\{f \mapsto 0\}^{\sf RX}$}; 
              \node at (f1) (f1n) {$\{f \mapsto 1\}^{\sf R}$}; 
              
              \filldraw [black] (T1-1) circle (2pt);
              
              \filldraw [black] (T2-1) circle (2pt);
              
              \path [draw, blue, thick] (d5) edge [dashed] (T1-1);
              \path [draw, thick, red]  (d5) edge [dashed] (T2-1);
              \path [draw, dotted, orange!70!black, line width=0.4mm]
               ($(f1n.east) + (-0.1, 0)$) edge[-stealth] (d5)
               ($(f0n.east) + (-0.2, 0)$) edge[-stealth] (d0) ;
            \end{tikzpicture}}
           
          $\sigma_3$ (after executing line 8)
        \end{center}
      \end{minipage}
      \subcaption{Views for transactional MP (\reffig{fig:mp-tm})}
      \label{fig:txmp-view}
    \end{minipage}
    \hfill \vline
    \begin{minipage}[t]{0.48\linewidth}
      \begin{minipage}[b]{0.48\columnwidth}
        \begin{center}
          \scalebox{0.9}{
    \begin{tikzpicture}[scale=0.6]
      \coordinate (d0) at (0,3.5);
      \coordinate (d5) at (0,1.7);
      \coordinate (f0) at (-2.5,3.5);
      \coordinate (f1) at (-2.5,1.7);

      \draw[thick,->] (0,4.3) -- (0,1);

      \coordinate (T2-1) at (1.2,4.3);
      \coordinate (T1-1) at (1.2,2.2);
            
      \draw (0,4.3) node[above] {$d_1$};
      \draw (-2.5,4.3) node[above] {$M$};
      \draw (T2-1) node[above] {$\tidn{2}$};
      \draw (T1-1) node[above] {$\tidn{1}$};

      \filldraw [black] (d0) circle (2pt) node[right, black, yshift=-1mm] {$0,0$};
      \filldraw [black] (d5) circle (2pt) node[right, black, yshift=-1mm] {$5,1$}; 

      \node at (f0) (f0n) {$
        \left\{\begin{array}{@{}l@{}}
          f \mapsto 0, \\ d_2 \mapsto 0
        \end{array}\right \}^{\sf RX}
$}; 

      \node at (f1) (f1n) {$
        \left\{\begin{array}{@{}l@{}}
          f \mapsto 1, \\ d_2 \mapsto 10 
        \end{array}\right \}^{\sf RX}
$}; 

      \filldraw [black] (T1-1) circle (2pt);
    
      \filldraw [black] (T2-1) circle (2pt);

      \path [draw, blue, thick] (d5) edge [dashed] (T1-1);
      \path [draw, thick, red]  (d0) edge [dashed] (T2-1);
      \path [draw, dotted, orange!70!black, line width=0.4mm]
      ($(f1n.east) + (-0.5, 0)$) edge[-stealth] (d5)
      ($(f0n.east) + (-0.5, 0)$) edge[-stealth] (d0) ;
    \end{tikzpicture}}
  
      $\sigma_2$ (after executing line 5)
      \end{center}
    \end{minipage}
    \hfill
      \begin{minipage}[b]{0.48\columnwidth}
        \begin{center}
          \scalebox{0.9}{
            \begin{tikzpicture}[scale=0.6]
              \coordinate (d0) at (0,3.5);
              \coordinate (d5) at (0,1.7);
              \coordinate (f0) at (-2.5,3.5);
              \coordinate (f1) at (-2.5,1.7);
              
              \draw[thick,->] (0,4.3) -- (0,1);
              
              \coordinate (T2-1) at (1.2,4);
              \coordinate (T1-1) at (1.2,2.2);
              
              \draw (0,4.3) node[above] {$d_1$};
              \draw (-2.5,4.3) node[above] {$M$};
              \draw (T2-1) node[above] {$\tidn{2}$};
              \draw (T1-1) node[above] {$\tidn{1}$};
              
              \filldraw [black] (d0) circle (2pt) node[right, black, yshift=-1mm] {$0,0$};
              \filldraw [black] (d5) circle (2pt) node[right, black, yshift=-1mm] {$5,1$}; 
              
      \node at (f0) (f0n) {$
        \left\{\begin{array}{@{}l@{}}
          f \mapsto 0, \\ d_2 \mapsto 0
        \end{array}\right \}^{\sf RX}
$}; 

      \node at (f1) (f1n) {$
        \left\{\begin{array}{@{}l@{}}
          f \mapsto 1, \\ d_2 \mapsto 10 
        \end{array}\right \}^{\sf RX}
$}; 
              
              \filldraw [black] (T1-1) circle (2pt);
              
              \filldraw [black] (T2-1) circle (2pt);
              
              \path [draw, blue, thick] (d5) edge [dashed] (T1-1);
              \path [draw, thick, red]  (d0) edge [dashed] (T2-1);
              \path [draw, dotted, orange!70!black, line width=0.4mm]
              ($(f1n.east) + (-0.5, 0)$) edge[-stealth] (d5)
              ($(f0n.east) + (-0.5, 0)$) edge[-stealth] (d0) ;
            \end{tikzpicture}}
          
          $\sigma_3$ (after executing line 10)
        \end{center}
\end{minipage}
\subcaption{Views for relaxed transactions (\reffig{fig:mp-tm-2})}
  \label{fig:txmp2-view}
\end{minipage} \bigskip

\bigskip


\begin{minipage}[t]{\linewidth}
      \begin{minipage}[b]{0.33\columnwidth}
        \begin{center}
          \scalebox{0.85}{
            \begin{tikzpicture}[scale=0.6]
              \coordinate (f0) at (-2.5,3.5);
              \coordinate (f1) at (-2.5,1.7);
              \coordinate (f2) at (-2.5,-0.1);
              \coordinate (d0) at (0,3.5);
              \coordinate (d5) at (0,1.7);
              \coordinate (d20) at (2,3.5);
              
              \draw[thick,->] (0,4) -- (0,-0.3);
              \draw[thick,->] (2,4) -- (2,-0.3);
              
              \coordinate (T3-1) at (4,4);
              \coordinate (T2-1) at (4,2.5);
              \coordinate (T1-1) at (4,1);
              
              \draw (-2.5,4) node[above] {$M$};
              \draw (0,4) node[above] {$d_1$};
              \draw (2,4) node[above] {$d_2$};
              \draw (T3-1) node[above] {$\tidn{3}$};
              \draw (T2-1) node[above] {$\tidn{2}$};
              \draw (T1-1) node[above] {$\tidn{1}$};
              
              \filldraw [black] (d0) circle (2pt) node[right, black, yshift=-1mm] {$0,0$};
              \filldraw [black] (d5) circle (2pt) node[right, black, yshift=-1mm] {$5,1$}; 

              \filldraw [black] (d20) circle (2pt) node[right, black, yshift=-1mm] {$0,0$};

      \node at (f0) (f0n) {$
        \left\{\begin{array}{@{}l@{}}
          f \mapsto 0
        \end{array}\right \}^{\sf RX}
$}; 

      \node at (f1) (f1n) {$
        \left\{\begin{array}{@{}l@{}}
          f \mapsto 1
        \end{array}\right \}^{\sf R}
$}; 

              \filldraw [black] (T1-1) circle (2pt);
              \filldraw [black] (T2-1) circle (2pt);
              \filldraw [black] (T3-1) circle (2pt);              
              \path [draw, green!70!black, thick]
              (d0) -- (d20) (d20) edge [dashed] (T3-1);
              \path [draw, blue, thick]
              (d5) -- (d20) (d20) edge [dashed] (T1-1);
              \path [draw, thick, red]  (d0) edge[bend right = 30] (d20) (d20) edge [dashed] (T2-1);
              \path [draw, dotted, orange!70!black, line width=0.4mm]
              ($(f0n.east) + (-0.5, 0)$) edge[-stealth] (d0) 
              ($(f0n.east) + (-0.5, 0)$) edge[-stealth,bend left = 30] (d20)
              ($(f1n.east) + (-0.5, 0)$) edge[-stealth] (d5) 
              ($(f1n.east) + (-0.5, 0)$) edge[-stealth] (d20) ;
            \end{tikzpicture}}
          
          $\sigma_2$ (after executing line 4)
        \end{center}
      \end{minipage}
      \hfill
      \begin{minipage}[b]{0.33\columnwidth}
        \begin{center}
          \scalebox{0.85}{
            \begin{tikzpicture}[scale=0.6]
              \coordinate (f0) at (-2.5,3.5);
              \coordinate (f1) at (-2.5,1.7);
              \coordinate (f2) at (-2.5,-0.1);
              \coordinate (d0) at (0,3.5);
              \coordinate (d5) at (0,1.7);
              \coordinate (d20) at (2,3.5);
              \coordinate (d25) at (2,1.7);
              
              \draw[thick,->] (0,4) -- (0,-0.3);
              \draw[thick,->] (2,4) -- (2,-0.3);
              
              \coordinate (T3-1) at (4,4);
              \coordinate (T2-1) at (4,2.5);
              \coordinate (T1-1) at (4,1);
              
              \draw (-2.5,4) node[above] {$M$};
              \draw (0,4) node[above] {$d_1$};
              \draw (2,4) node[above] {$d_2$};
              \draw (T3-1) node[above] {$\tidn{3}$};
              \draw (T2-1) node[above] {$\tidn{2}$};
              \draw (T1-1) node[above] {$\tidn{1}$};
              
              \filldraw [black] (d0) circle (2pt) node[right, black, yshift=-1mm] {$0,0$};
              \filldraw [black] (d5) circle (2pt) node[right, black, yshift=-2mm] {$5,1$}; 

              \filldraw [black] (d20) circle (2pt) node[right, black, yshift=-1mm] {$0,0$};
              \filldraw [black] (d25) circle (2pt) node[right, black, yshift=-1mm] {$10,1$}; 

      \node at (f0) (f0n) {$
        \left\{\begin{array}{@{}l@{}}
          f \mapsto 0
        \end{array}\right \}^{\sf RX}
$}; 

      \node at (f1) (f1n) {$
        \left\{\begin{array}{@{}l@{}}
          f \mapsto 1
        \end{array}\right \}^{\sf R}
$}; 

      \node at (f2) (f2n) {$
        \left\{\begin{array}{@{}l@{}}
          f \mapsto 2
        \end{array}\right \}^{\sf RA}
$};

              \filldraw [black] (T1-1) circle (2pt);
              \filldraw [black] (T2-1) circle (2pt);
              \filldraw [black] (T3-1) circle (2pt);              
              \path [draw, green!70!black, thick]
              (d0) -- (d20) (d20) edge [dashed] (T3-1);
              \path [draw, blue, thick]
              (d5) -- (d20) (d20) edge [dashed] (T1-1);
              \path [draw, thick, red]  (d5) -- (d25) (d25) edge [dashed] (T2-1);
              \path [draw, dotted, orange!70!black, line width=0.4mm]
              ($(f0n.east) + (-0.5, 0)$) edge[-stealth] (d0) 
              ($(f0n.east) + (-0.5, 0)$) edge[-stealth,bend left = 30] (d20)
              ($(f1n.east) + (-0.5, 0)$) edge[-stealth] (d5) 
              ($(f1n.east) + (-0.5, 0)$) edge[-stealth] (d20) 
              ($(f2n.east) + (-0.25, 0)$) edge[-stealth] (d5) 
              ($(f2n.east) + (-0.25, 0)$) edge[-stealth] (d25) ;
            \end{tikzpicture}}
          
          $\sigma_4$ (after executing line 10)
        \end{center}
\end{minipage}
      \begin{minipage}[b]{0.32\columnwidth}
        \begin{center}
          \scalebox{0.85}{
            \begin{tikzpicture}[scale=0.6]
              \coordinate (f0) at (-2.5,3.5);
              \coordinate (f1) at (-2.5,1.7);
              \coordinate (f2) at (-2.5,-0.1);
              \coordinate (d0) at (0,3.5);
              \coordinate (d5) at (0,1.7);
              \coordinate (d20) at (2,3.5);
              \coordinate (d25) at (2,1.7);
              
              \draw[thick,->] (0,4) -- (0,-0.3);
              \draw[thick,->] (2,4) -- (2,-0.3);
              
              \coordinate (T3-1) at (4,4);
              \coordinate (T2-1) at (4,2.5);
              \coordinate (T1-1) at (4,1);
              
              \draw (-2.5,4) node[above] {$M$};
              \draw (0,4) node[above] {$d_1$};
              \draw (2,4) node[above] {$d_2$};
              \draw (T3-1) node[above] {$\tidn{3}$};
              \draw (T2-1) node[above] {$\tidn{2}$};
              \draw (T1-1) node[above] {$\tidn{1}$};
              
              \filldraw [black] (d0) circle (2pt) node[right, black, yshift=-1mm] {$0,0$};
              \filldraw [black] (d5) circle (2pt) node[right, black, yshift=-2mm] {$5,1$}; 

              \filldraw [black] (d20) circle (2pt) node[right, black, yshift=-1mm] {$0,0$};
              \filldraw [black] (d25) circle (2pt) node[right, black, yshift=-2mm] {$10,1$}; 

      \node at (f0) (f0n) {$
        \left\{\begin{array}{@{}l@{}}
          f \mapsto 0
        \end{array}\right \}^{\sf RX}
$}; 

      \node at (f1) (f1n) {$
        \left\{\begin{array}{@{}l@{}}
          f \mapsto 1
        \end{array}\right \}^{\sf R}
$}; 

      \node at (f2) (f2n) {$
        \left\{\begin{array}{@{}l@{}}
          f \mapsto 2
        \end{array}\right \}^{\sf RA}
$};

              \filldraw [black] (T1-1) circle (2pt);
              \filldraw [black] (T2-1) circle (2pt);
              \filldraw [black] (T3-1) circle (2pt);              
              \path [draw, green!70!black, thick]
              (d5) edge[bend left = 30] (d25) (d25) edge [dashed] (T3-1);
              \path [draw, blue, thick]
              (d5) -- (d20) (d20) edge [dashed] (T1-1);
              \path [draw, thick, red]  (d5) -- (d25) (d25) edge [dashed] (T2-1);
              \path [draw, dotted, orange!70!black, line width=0.4mm]
              ($(f0n.east) + (-0.5, 0)$) edge[-stealth] (d0) 
              ($(f0n.east) + (-0.5, 0)$) edge[-stealth,bend left = 30] (d20)
              ($(f1n.east) + (-0.5, 0)$) edge[-stealth] (d5) 
              ($(f1n.east) + (-0.5, 0)$) edge[-stealth] (d20) 
              ($(f2n.east) + (-0.25, 0)$) edge[-stealth] (d5) 
              ($(f2n.east) + (-0.25, 0)$) edge[-stealth] (d25) ;
            \end{tikzpicture}}
          
          $\sigma_5$ (after executing line 13)
        \end{center}
\end{minipage}
\subcaption{Views for release-acquire transaction chain
  (\reffig{fig:ra-trans-mem})}
  \label{fig:ra-trans-mem-view}
\end{minipage}
\caption{Views for the transaction-based client program from \reffig{fig:trans-client}}
\Description{Views for the transaction-based client program from \reffig{fig:trans-client}}
\label{fig:trans-mem-views}
\end{figure*}
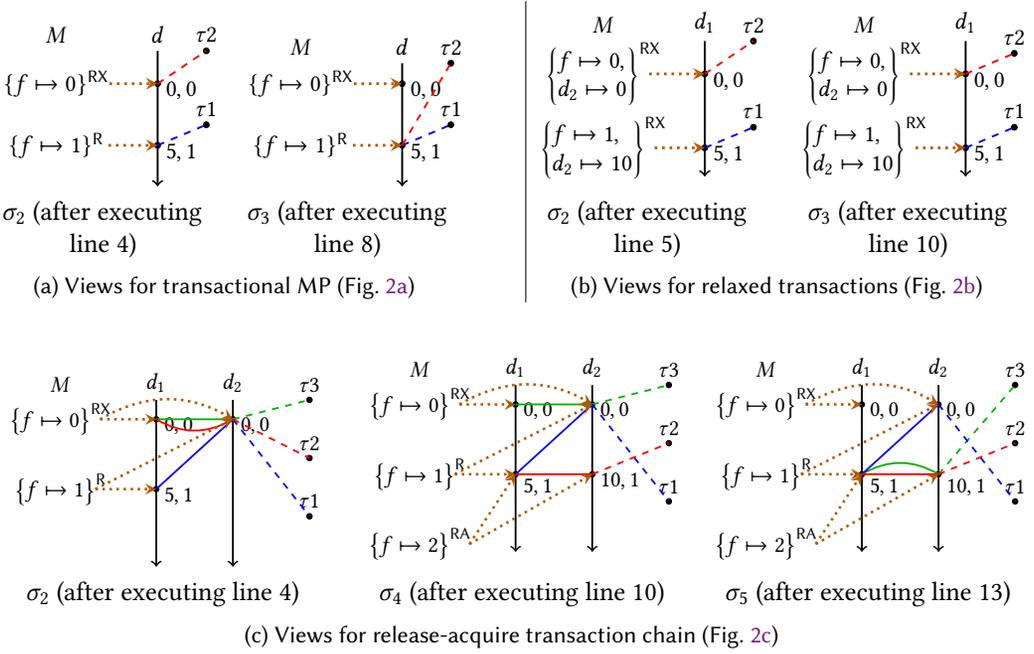

We now discuss the release-acquire extensions of \wtms, as defined by
the I/O automata algorithm in \reffig{fig:wtms}, including the
highlighted components. The key extension of \wtms is its ability to
synchronise client threads, thus allowing it to cope with the examples
in \reffig{fig:trans-client}. Formally, this is achieved by ensuring
\wtms synchronises the {\em thread view} of the client whenever
transactional release-acquire synchronisation occurs.

We introduce two new local variables in transaction~$t$. Namely, {\tt
  synctype}$_t$, which records the type of synchronisation of $t$, and
{\tt seenIdxs}$_t$, which records the set of all memory indices seen
by $t$ that are either releasing or
release-acquiring. We also introduce a new thread local variable 
$\txview$, which records the 
{\it transaction thread view} of $\tau$. The transaction thread view
 is similar to 
{\it thread view} introduced in \refsec{sec:views}. The difference here is 
in the definition of $\View$. In this context the $\View$ is a function that 
maps the threads to memory indexes of $M$. The transaction view of
$\tau$ is the smallest memory in $M$ that can be read by any transaction 
$t$ that was begun by thread $\tau$.
We also introduce two global variables. Namely $S$, which is a
sequence recording the type of each committed writing transaction that
installs each new memory in $M$, and $V$ which is a sequence of
modification views for each new memory in $M$. Thus, in \wtms, memory
$M_i$ has synchronisation type $S_i$, and modification view
$V_i$. 

The transactional operations of TMS2 are modified as follows. In {\tt
  TxBegin}$_t$, we take as input the type of synchronisation
transaction $t$ is to perform, and store this value in {\tt
  synctype}$_t$. Another input to {\tt TxBegin}$_t$ is $m$, an index to
  a visible memory $M$.
  We also initialise {\tt seenIdxs}$_t$ to the empty
set. In {\tt TxReadExt}$_t(x,i)$, \ie a transition for external read
of $x$ from memory index $i$, we record the index $i$ in {\tt
  seenIdxs}$_t$ if the memory $M_i$ is releasing or release-acquiring.

When a transaction ends (for both read-only and writing transactions),
if the transaction is acquiring or release-acquiring and {\tt
  seenIdxs}$_t$ is non-empty, we construct a new view {\tt nv} to be
the maximum modification view for each transaction in {\tt
  seenIdxs}$_t$ using the function $\view$. We use this to synchronise
the client thread's view by updating $\tview_\tid$ to
$\tview_\tid \otimes{}${\tt nv}. For a writing transaction, we record
this new view of the client in $V$ so that any future transactions
that synchronise with this new transaction does so with respect to
this view. Finally, once a transaction ends, it updates $\txview$
to the largest index in $\seenIdxs$.

We demonstrate the interaction of $\wtms$ and a client program by
considering the views of three possible executions of the programs in
\reffig{fig:trans-mem-views}. Unlike the trace considered in
\reffig{fig:po-message-sync-exec}, we only show the most critical
transitions. The memory sequence $M$ of $\wtms$ is clear from the
figures. We represent $S$ by the superscripts on each state of $M$,
and the modification views $V$ by the dotted arrows (\orangearrow)
from each state of $M$.

\reffig{fig:txmp-view} represents part of an execution of the program
in \reffig{fig:mp-tm}. In the execution depicted, we assume that all
of thread $\tidn{1}$ executes before $\tidn{2}$. Here, $\sigma_2$ is
the state after executing line~$4$, where $\tidn{1}$ has introduced a
new write to $d$ and then executed its (releasing) transaction,
introducing a new memory snapshot whose modification view becomes the
new write of $d$ (since $\tidn{1}$'s thread view is at this new
write). Then, when $\tidn{2}$ executes its (acquiring) transaction
that reads $1$ from $f$, it {\em synchronises} with the latest memory
snapshot, causing $\tidn{2}$'s thread view to be the new write of $d$
as well. This is analogous, as required, to the way in which views are
updated in C11 (see \reffig{fig:po-message-sync-exec}).

\reffig{fig:txmp2-view} represents a part execution of the program in
\reffig{fig:mp-tm-2}. Again, we assume a complete execution of thread
$\tidn{1}$ followed by $\tidn{2}$. State $\sigma_2$ is the state after
executing line~5, where $\tidn{1}$ has introduced a new write to
$d_1$. Now consider the state $\sigma_3$ (the state after execution of
line $10$), where $\tidn{1}$ introduces a new memory snapshot in $M$
with annotation {\sf RX} and modification view pointing to the new
write on $d_1$. When $\tidn{2}$ continues execution, its transaction
must be ordered after the latest memory snapshot, but this will {\em
  not} induce a release-acquire synchronisation. This means that
$\tidn{2}$'s view of $d_2$ will not be updated. However, since
$\tidn{2}$'s transaction occurs after $\tidn{1}$'s transaction,
$\tidn{2}$ is guaranteed to read $10$ for $d_2$. Note that the
transaction executed by $\tidn{2}$ is a read-only relaxed transaction,
the view of $\tidn{2}$ of the client variable ($d_1$) is
unchanged. However, $\tidn{2}$'s view of the transactional memory (not
shown in the diagrams) will be updated to the new memory state
$\{f \mapsto 1, d_2 \mapsto 10\}$.

Finally, \reffig{fig:ra-trans-mem-view} represents part of an execution
of \reffig{fig:ra-trans-mem} comprising the complete execution of
$\tidn{1}$, $\tidn{2}$ then $\tidn{3}$ in order.  State $\sigma_2$
represents the state after executing line~4, where the modification
view of the newly installed memory is consistent with the view of the
executing thread $\tidn{1}$. Then, in $\sigma_4$ (the state after
execution of line~10), we have a new write on $d_2$ with value 10 and
a further new snapshot that synchronises with the snapshot
$\{f \mapsto 1\}^{\sf R}$ causing the thread view of $\tidn{2}$ and
the modification view of the new snapshot to be updated to the last
writes of $d_1$ and $d_2$. Next, in $\sigma_5$, when $\tidn{3}$
executes its transaction, this transaction is guaranteed to
synchronise with $\{f \mapsto 2\}^{\sf RA}$, causing $\tidn{3}$'s view
to be updated to the latest writes of $d_1$ and $d_2$, which is
inherited from the modification view of $\{f \mapsto 2\}^{\sf RA}$.


\subsection{Modular operational semantics}

\begin{figure*}[!t]
  \centering  \small
  $\inference[{\sc TxBegin}$_\tid(\sflag, m, \regset)$] {
  \gamma.\status_\txid = {\tt NOTSTARTED} \qquad \gamma.\txn_\tau = \bot \qquad m \in vmems_\tau}
    {\lst, \gamma, \beta\  \strans{~~~~~~~~~\ ~~~~~~~~~~~~}_\tid\  \lst, \gamma\left[
      \begin{array}[c]{@{}l@{}}
        \beginIdx_\txid \asgn m, \seenIdxs_\txid \asgn \emptyset, \rset_\txid \asgn \emptyset, \\
        \wset_\txid \asgn \emptyset, \synctype_\txid \asgn \sflag, \regSet_t := \regset, 
          \\
        \status_\txid = {\tt READY},  \txn_\tid := \txid 
        
      \end{array}\right], \beta}$
    \medskip

    \bigskip
    $\inference[{\sc TxWrite}$_{\tid}(l, v)$] { \gamma.\status_\txid = {\tt READY} \qquad \gamma.\txn_\tid = \txid }
    {\lst, \gamma, \beta\  \strans{~~~~~~~~~~~\ ~~~~~~~~~~}_\tid\  \lst, \gamma\left[
      \begin{array}[c]{@{}l@{}}
        \wset_\txid \asgn \gamma.\wset_\txid \cup \{l \mapsto v\}
      \end{array}\right], \beta}$
      \bigskip

      \bigskip
      $\inference[{\sc TxRead}$_{\tid}(l, r)$] { \gamma.\status_\txid = {\tt READY} \qquad  \gamma.\txn_\tid = \txid \qquad r \in \gamma.\regSet_\txid \\
      (l \in \dom(\gamma.\wset_\txid) \lor (\gamma.\beginIdx_\txid \leq i ~ \land \gamma.\rset_\txid \subseteq \gamma.\memories_i)) \\
      \ensuremath{
        \begin{array}[t]{@{}r@{~}l}
          v = & \kwif\  l \notin  \dom(\gamma.\wset_\txid) ~ \kwthen ~ \gamma.\memories_i(l) ~ \kwelse~ \gamma.\wset_\txid(l) \\[2pt]
      \seenIdxs' = & 
          \kwif\  l \notin  \dom(\gamma.\wset_\txid) \wedge \gamma.S_i \in \{\R, \R\A\} \\
          & \kwthen ~ \gamma.\seenIdxs_\txid \cup \{i\} ~ \kwelse~ \gamma.\seenIdxs_\txid \\[2pt]
            \rset' = & \kwif\  l \notin  \dom(\gamma.\wset_\txid) ~ \kwthen ~ \gamma.\rset_\txid \cup \{l \mapsto v \} ~ \kwelse~ \gamma.\rset_\txid
                               \end{array}}
                       }
      {\lst, \gamma, \beta\  \strans{~~~~~~~~~~\ ~~~~~~~~~~~}_\tid\  \lst[r:= v], \gamma\left[
        \begin{array}[c]{@{}l@{}}
          \rset_\txid \asgn \rset', \seenIdxs_\txid \asgn \seenIdxs'
        \end{array}\right], \beta}$
        \bigskip

        \bigskip
        $\inference[{\sc TxEndRO}$_{\tid}$] { \gamma.\status_\txid = {\tt READY} \qquad \gamma.\txn_\tid = \txid \qquad
        \gamma.\wset_\txid = \emptyset \\
        \tview' =
        \ensuremath{\begin{array}[t]{@{}l}
          \kwif ~ \gamma.\rset_\txid \neq \emptyset \land \gamma.\synctype_\txid 
          \in \{\A,\R\A\} \wedge \gamma.\seenIdxs_\txid \neq \emptyset \\
          \kwthen~ \beta.\tview_\tid \otimes \view(\gamma.\seenIdxs_\txid, \gamma.V) ~\kwelse~ \beta.\tview_\tid
        \end{array}}
      }        {\lst, \gamma, \beta\  \strans{~~~~~~~~~~~\ ~~~~~~~~~~}_\tid\  \lst, \gamma\left[
          \begin{array}[c]{@{}l@{}}
            \status_\txid \asgn {\tt COMMITTED}, \\
             \txview_{\tid}:= \max(\gamma.\seenIdxs_t)
          \end{array}\right], \beta[\tview_\tid \asgn \tview']}$

          \bigskip

          \bigskip

          $\inference[{\sc TxEndWR}$_{\tid}$] { \gamma.\status_\txid = {\tt READY} \qquad \gamma.\txn_\tid = \txid \qquad 
            \gamma.\wset_\txid \neq \emptyset \qquad  i = |\gamma.\memories| \\
          \mrel' = \gamma.\synctype_{\txid} \qquad  mem' = (\last(\gamma.\memories) \oplus \gamma.\wset_\tid)\\
        \tview' = \ensuremath{\begin{array}[t]{@{}l}
          \kwif ~ \gamma.\rset_\txid \neq \emptyset \land \gamma.\synctype_\txid 
          \in \{\A,\R\A\} \wedge \gamma.\seenIdxs_\txid \neq \emptyset \\
          \kwthen~ \beta.\tview_\tid \otimes \view(\gamma.\seenIdxs_\txid, \gamma.V) ~\kwelse~ \beta.\tview_\tid
        \end{array}}
          }        {\lst, \gamma, \beta\  \strans{~~~~~~\ ~~~~~~~~~~~~~~~}_\tid\  \lst, \gamma\left[
            \begin{array}[c]{@{}l@{}}
              \status_\txid \asgn {\tt COMMITTED}, \memories_i \asgn mem' \\
              \mrel_i := \mrel', 
              \mmview_i := \tview' \\
              \txview_{\tid}:= \max(\gamma.\seenIdxs_t)
            \end{array}\right], \beta[\tview_\tid \asgn \tview']}$

            \bigskip

            \bigskip $\inference[{\sc
              TxAbort}$_{\tid}$] {\gamma.\status_\txid = {\tt READY}
              \qquad \gamma.\txn_\tid = \txid \qquad \lst' = \lambda
              r \in \Reg.\ \kwif\ r \in \gamma.\regSet_\txid\ \kwthen\ \bot\ \kwelse\
              \lst(r)} {\lst, \gamma, \beta\ \strans{~~~~~~~~~~\
                ~~~~~~~~~~~}_\tid\ \lst', \gamma\left[
              \begin{array}[c]{@{}l@{}}
                \status_\txid \asgn {\tt ABORTED}\end{array}\right], \beta}$

          \caption{Operational semantics for TMS2-RA}
          \Description{Operational semantics for TMS2-RA}
  \label{fig:tms2-opsem}
\end{figure*}

To reason about clients that use abstract \wtms transactions in a
modular fashion, we use configurations that are triples
$(\lst, \gamma, \beta)$, where $\lst: \Reg \to \Val$ denotes the local
register state, $\gamma$ is the \wtms state (which includes all
transactional variables described in \reffig{fig:wtms}) and $\beta$ is
the C11 state of the client
(see~\cite{ECOOP20,DBLP:journals/corr/abs-2004-02983},
\refsec{sec:oper-semant}).

The transition relation for transactional operations is given in
\reffig{fig:tms2-opsem}. These follow the automata-style description
given in \reffig{fig:wtms}, but make state components that are affect
more precise. The most interesting aspect of these rules is the
interaction between a releasing writing transaction and subsequent
committing reading transaction.

Note that each releasing writing transaction sets $S_i$ (where $i$ is
last index in $M$ at the time of writing) to either $\R$ or
$\R\A$. Additionally, the view of the thread at the time of writing is
recorded in $V_i$.
A later transaction with an acquiring annotation calculates a new view
using the function $\view$ as defined in \reffig{fig:wtms} and
updates, among other components, the executing thread's view in
$\beta$. This means that, as expected, if there is a release-acquire
synchronisation through a transactional memory library, then the
client's view will be updated to match the synchronisation that
occurs.




\section{A C11 STM implementation}
\label{sec:tml-c11}

In this section, we develop a release-acquire version of a
transactional mutex lock, that we call \tmlra, based on an SC
implementation by Dalessandro et
al~\cite{DBLP:conf/europar/DalessandroDSSS10}.  Our algorithm is
provided in \reffig{fig:TML-RA}, where the highlights indicate the
fragments of code that we have introduced or modified. The
\raext{grey} highlights represent code additional to Dalessandro
\etal's original implementation, and the \highlight{blue} highlights
represent the necessary release-acquire synchronisation.

We first discuss the core features of TML (\refsec{sec:tml}), then
discuss the extensions introduced in \tmlra to optimise for C11
release-acquire synchronisation (\refsec{sec:tmlra}). We present the
benchmarking results for both algorithms in
\refsec{sec:benchmarking}. In \refsec{sec:cont-refin-b}, we present a
proof that \tmlra implements \wtms, \ie any observation a client
program makes when it uses \tmlra is a possible observation when it
uses \wtms.

\begin{figure*}[t]
  \centering
  \begin{minipage}[b]{1\linewidth}
    \scalebox{0.99}{$\begin{array}{l@{\qquad}l}
                       \multicolumn{2}{l}{${\bf Init: } ${\tt glb}:=0$ $}\\[1mm]
                       \begin{array}[t]{@{}l@{}}
                         {\tt TxBegin}($\regset$) \\
                         B1\!: {\tt regs} := $ \regset$ \\
                         B2\!: \raext{${\tt hasRead} := \False$}; \\
                         B3\!: \kwdo\ {\tt loc}\ \highlight{$\gets^{\sf A}$}\ {\tt glb}; \\
                         B4\!: \kwuntil\ even({\tt loc}) 
                         \\[2mm]
                         {\tt TxWrite}(x, v) \\
                         W\hspace{-1pt}1\!: \kwif\ even({\tt loc})\ \kwthen \\
                         W\hspace{-1pt}2\!: \quad r_1 \gets \highlight{$\kwcas^{\sf RA}$}({\tt glb}, {\tt loc}, {\tt loc+1})\\
                         W\hspace{-1pt}3\!:  \quad  \kwif\ \neg r_1\ \kwthen\\
                         W\hspace{-1pt}4\!: \qquad \forall s \in {\tt regs}.\, s:=\bot;
                         \kwreturn ; \texttt{// ABORT} \\
                         W\hspace{-1pt}5\!: \quad  \kwelse\ {\tt loc}:={\tt loc}+1; \\
                         W\hspace{-1pt}6\!: x\ \highlight{$:=^{\sf R}$}\ v; \texttt{ // WRITE OK} 
                         \\[2mm]
                         {\tt TxEnd} \\   
                         E1\!: \kwif\ odd({\tt loc})\  \kwthen\  \\
                         E2\!: \quad  {\tt glb}\ \highlight{$:=^{\sf R}$}\ {\tt loc + 1};  
                       \end{array}
                       & 
                       \begin{array}[t]{@{}l@{~}l}
                         \multicolumn{2}{@{}l}{{\tt TMRead}(x, r)} \\  
                         R1\!: &  \kwif\ r \in {\tt regs}\ \kwthen \\
                         R2\!: &  \quad r\ \highlight{$\gets^{\sf A}$}\ x; \\
                         R3\!: &  \quad\kwif\ \raext{${}\neg {\tt hasRead} \wedge even({\tt loc})$}\ \kwthen \\
                         R4\!: &  \quad\quad \raext{$r_1 \gets \highlight{$\kwcas^{\sf RA}$}({\tt glb}, {\tt loc}, {\tt loc})$} \\
                         R5\!: &  \quad\quad \raext{$\kwif\ r_1\ \kwthen$}  \\
                         R6\!: &  \quad\qquad \raext{${\tt hasRead} := true;$}\\
                         R7\!: &  \quad\qquad \raext{$
                                 \begin{array}[t]{@{}l@{}}
                                   \kwreturn ; \ \  \texttt{// READ OK}
                                 \end{array}
$} \\ 
                         R8\!: &  \quad\kwelse \\
                         R9\!: &  \quad\quad r_1 \gets {\tt glb}; \\ 
                         R10\!: &  \quad\quad \kwif\ r_1 = {\tt loc}\ \kwthen  \\
                         R11\!: &  \quad\qquad
                                  \begin{array}[t]{@{}l@{}}
                                    \kwreturn ;\ \  \texttt{// READ OK}
                                  \end{array}
\\
                         R12\!: &  \forall s \in {\tt regs}.\, s:=\bot;  \texttt{// ABORT}
                       \end{array}
                     \end{array}$}
\end{minipage}
\caption{\tmlra: A release-acquire transactional mutex lock. For
  simplicity, the thread id is omitted}
\Description{A release-acquire transactional mutex lock}
\label{fig:TML-RA}
\end{figure*}

\subsection{TML}
\label{sec:tml}
 TML is synchronised using
a single global counter {\tt glb}, initialised to $0$, where {\tt glb}
is even iff no writing transaction is currently executing.

A transaction begins by taking a snapshot of {\tt glb} in local
variable {\tt loc} and only begins if the value read is even.

A write operation checks that {\tt loc} is even and if so, it attempts
to increment {\tt glb} using the $\kwcas$ at line
$W\hspace{-1pt}2$. If this $\kwcas$ succeeds, it increments {\tt loc}
(line $W\hspace{-1pt}5$), then immediately updates the location $\loc$
(line $W\hspace{-1pt}6$). If the $\kwcas$ fails, the transaction
aborts. Note that if {\tt loc} is odd then the current transaction
``owns'' the lock, meaning that lines
$W\hspace{-1pt}2$-$W\hspace{-1pt}5$ can be bypassed.

A read operation (ignoring lines $R3$-$R7$ for now) reads the given
location into the given register $r$ (line $R2$). At lines $R9$ and
$R10$ it checks that {\tt glb} is consistent with {\tt loc}. If so, the
read succeeds, otherwise, the transaction aborts.

A transaction ends by checking whether the current transaction is a
writing transaction. This can be determined by checking whether {\tt
  loc} is odd since a writing transaction must have incremented {\tt
  glb} via the $\kwcas$ at line $W\hspace{-1pt}2$ and {\tt loc} via
the write at line $W\hspace{-1pt}5$ making both their values
odd. Therefore, a writing transaction must increment {\tt glb} to make
it even again.

\subsection{\tmlra}
\label{sec:tmlra}
We now describe the necessary modifications to TML and the
synchronisation induced by \wtms. We assume that transactions in
\tmlra are all release-acquiring and hence we omit the transaction
annotation in {\tt TxBegin}.

We assume all accesses to shared variables are either relaxed (\eg the
read at line $R9$), releasing (\eg the write at line $E2$), acquiring
(\eg the read at line $B2$) or release-acquiring (\eg the $\kwcas$ at
line $R4$). Additionally, we introduce a new local variable {\tt
  hasRead}, initially set to $\False$ and a code path $R3$-$R7$, which
is followed if a transaction performs a read without having previously
performed a read or a write. We explain the purpose of this code path
in more detail
below.  

\paragraph{Transaction synchronisation.}  Recall that
\wtms requires that transactions are consistent w.r.t. a single memory
snapshot and that external reads of a transaction synchronise with
\emph{some} memory snapshot. This may not occur in a relaxed memory
context without adequate synchronisation. In particular, a writing
transaction must perform a releasing write to {\tt glb} at line $E2$ so that if a later transaction reads from this write, it
synchronises with {\em all} of the writes performed by the writing
transaction. To ensure this, we require the read of {\tt glb} at line
$B3$ as well as the $\kwcas$ operations at lines $W\hspace{-1pt}2$ and
$R4$ to be acquiring. Note that this also guarantees release-acquire
{\em client synchronisation}.

The second key synchronisation is between $W\hspace{-1pt}6$ performed
by a writing transaction $t_w$ and $R2$ performed by a (different)
reading transaction $t_r$. Suppose that both $t_w$ and $t_r$ are
live. If $t_r$ happens to read the write written at $W\hspace{-1pt}6$,
it must now abort because $t_r$'s snapshot of {\tt glb} will be
inconsistent with the latest value of {\tt glb} installed by the
$t_w$. The release-acquire synchronisation between $W\hspace{-1pt}6$
and $R1$ ensures that this will happen, \ie $t_r$ will see the new
{\tt glb} written by $t_w$, causing the test at $R10$ to fail and
$t_r$ to abort.



\paragraph{Causal linearizability.}  The design of
\tmlra ensures that all transactions, including read-only transactions
are {\em causally linearizable}~\cite{ifm18}, which is a condition
that additionally guarantees {\em compositionality} (or {\em
  locality}~\cite{HeWi90,DBLP:conf/podc/0001HP21}) of concurrent
objects. This notion of compositionality is that of Herlihy and
Wing~\cite{HeWi90}. In particular, under SC memory, given a history
comprising several concurrent objects, if the history restricted to
each object is linearizable, then the history as a whole is
linearizable. 
In a relaxed memory setting, Doherty et al~\cite{ifm18} have shown
that linearizability alone is insufficient to guarantee
compositionality, and it is necessary to induce a ``happens-before''
relation when a specification induces a particular linearization. 

The happens-before required by causal linearizability is naturally
achieved for writing transactions via the $\kwcas$ at line
$W\hspace{-1pt}2$. For a read-only transaction, we introduce the
$\kwcas$ at line $R4$, which installs a new write to {\tt glb} without
changing its value. All transactions that follow the $\kwcas$ at line
$R4$ will be {\em causally ordered} after the reading
transaction. Such a $\kwcas$ must only be performed once, thus we
introduce a local variable {\tt hasRead}, which is set to true if the
$\kwcas$ succeeds so that later reads from the same transaction can
avoid the code path from $R3$-$R7$.

Note that the conditions necessary to guarantee causal linearizability
(and hence compositionality) could have been introduced at the level
of $\wtms$. However, there are questions about whether the notion of
compositionality introduced by Herlihy and Wing~\cite{HeWi90} are
appropriate in a relaxed memory
context~\cite{DBLP:journals/pacmpl/RaadDRLV19}. Therefore we leave out
the causal linearizability conditions in $\wtms$ to avoid
over-constraining the specification.

 \begin{figure*}[!t]
  \centering
    \scalebox{0.3}{\includegraphics{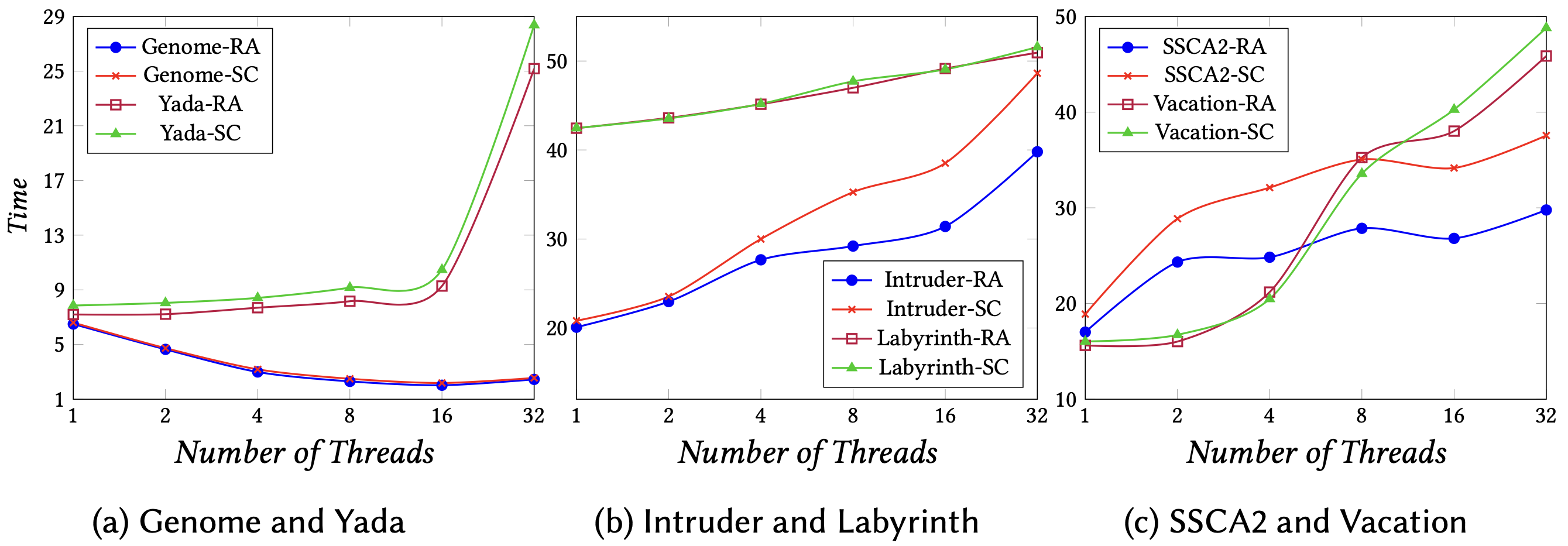}}

\vspace{-1em}
\noindent\rule{\textwidth}{0.5pt}
\vspace{-2em}
    \caption{Results of STAMP benchmarks for \tmlra and \tmlsc}
    \label{fig:stampres}
  \end{figure*}


  \subsection{Benchmarking}
  \label{sec:benchmarking}
  
  We implemented two versions of the TML algorithm: 
  \tmlra (see \reffig{fig:TML-RA}) and \tmlsc (the SC
  counterpart~\cite{DBLP:conf/europar/DalessandroDSSS10}) and
  benchmarked both using the STAMP benchmarking
  suite~\cite{DBLP:conf/iiswc/MinhCKO08}. 
  Each experiment was repeated 20 times to rule out external loads on
  the test machine and an average of these times was taken. The
  results of the six benchmarks that we ran with STAMP are presented
  in \reffig{fig:stampres}. \tmlra is equivalent to or outperforms
  \tmlsc in almost all cases, with a maximum improvement of 20\%. On
  average, \tmlra performs 8.2\% better than \tmlsc.

  Unsurprisingly, since TML optimises read-heavy workloads, its
  performance degrades under high write contention, and this is
  consistent with prior
  results~\cite{DBLP:conf/europar/DalessandroDSSS10}. However, it is
  interesting that the degradation of \tmlra is not as severe as
  \tmlsc for the Intruder and SSCA2 benchmarks. 

  \tmlra theoretically allows more parallelism than \tmlsc since a
   read-only transaction $t_r$ is not forced to abort if a writing transaction
   $t_w$ executes after $t_r$'s first read operation - $t_r$ must only aborts
   if it sees $t_w$'s $glb$ update, or one of $t_w$'s writes.
   Both Intruder and SSCA2 have a large number of short transactions;
    SSCA2 additionally has small read/write sets \cite{DBLP:conf/iiswc/MinhCKO08}.
     Here, \tmlra may be able to exploit the theoretical parallelism.
      In the single-threaded case, \tmlra executes far fewer heavyweight CASs.

  As with prior
  results, we see that for the read-heavy benchmark Genome, the
  performance of both \tmlra and \tmlsc improves as the number of
  threads increases.



\newcommand{\txndv}{\stackrel{S}{=}}
\newcommand{\com}{{\it Com}}

\newcommand{\Assertion}{{\it Assertion}}

\newcommand{\initlabel}{\iota}
\newcommand{\finlabel}{\zeta}

\newcommand{\ina}{{\it in}}
\newcommand{\fina}{{\it fin}}
\newcommand{\ann}{{\it ann}}
\newcommand{\Ann}{{\it Ann}}

\section{\logic: A logic for release-acquire TM}
\label{sec:logic:-logic-release}
The development of view-based operational semantics for various
fragments of
C11~\cite{DBLP:conf/ecoop/KaiserDDLV17,ECOOP20,DBLP:conf/popl/KangHLVD17}
has provided foundations for several logics for reasoning about C11
programs. These include separation
logics~\cite{DBLP:conf/ecoop/KaiserDDLV17,DBLP:conf/esop/SvendsenPDLV18}
and extensions to Owicki-Gries
reasoning~\cite{DBLP:journals/corr/abs-2108-01418,DBLP:conf/icalp/LahavV15,DBLP:journals/corr/abs-2004-02983,ECOOP20}. Our
point of departure is the Owicki-Gries encoding for RC11
RAR~\cite{ECOOP20}, which is the fragment of C11 that we focus on in
this paper.\footnote{These frameworks are based on models that assume
  top-level parallelism only. Therefore, our framework similarly
 re assumes top-level paralellism. This model can be extended to support
  dynamic parallelism, but such extensions are uninteresting for the
  purposes of this paper.}

A key benefit of the logic in~\cite{ECOOP20} is that it enables reuse
of {\em standard} Owicki-Gries proof decomposition rules and
straightforward mechanisation in
Isabelle/HOL~\cite{DBLP:journals/darts/DalvandiDDW20,DBLP:journals/corr/abs-2004-02983}. As
we shall see, we maintain these benefits in the context of C11 with
release-acquire transactions. Our reasoning framework, called \logic,
like Dalvandi \etal~\cite{ECOOP20,DBLP:conf/ppopp/DalvandiD21} uses
view-based assertions to abstractly describe the system state,
allowing reasoning about the {\em current view} of a thread, and {\em
  view transfer} from one thread to another through release-acquire
synchronisation. \logic introduces additional assertions to enable
reasoning about transactional views.

\subsection{View-based assertions}
\label{sec:view-based-assert}
In this section, we discuss the assertions and proof rules of \logic
abstractly. The proof rules can be used to reason syntactically about
a program without having to understand the low-level operational
semantics of the C11 model. Our operational semantics is an extension
of prior
works~\cite{DBLP:conf/ecoop/KaiserDDLV17,ECOOP20,DBLP:conf/popl/KangHLVD17}
that include an encoding of \wtms. 

\begin{figure}[t]
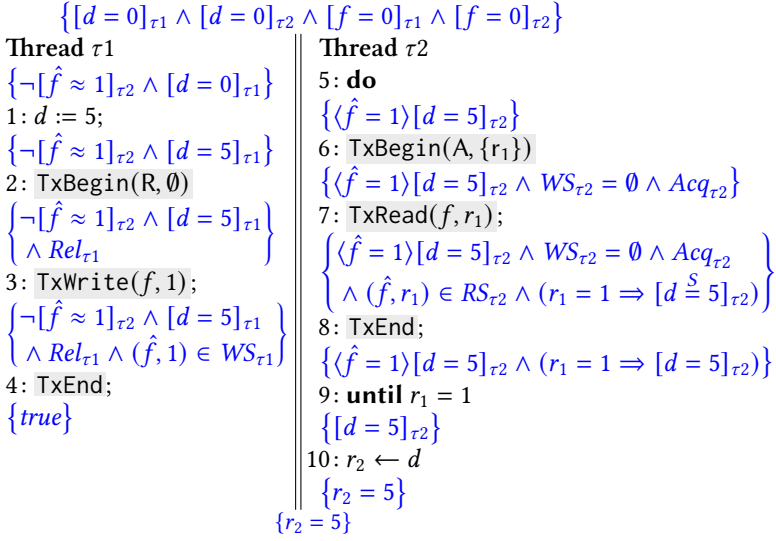

  \centering
  \scalebox{1}{
   $\begin{array}{@{}l@{~}||@{\,}l@{}}
      \multicolumn{2}{@{}l@{}}{
      \qquad \assert{[d = 0]_{\tidn{1}} \wedge [d = 0]_{\tidn{2}} \wedge [f = 0]_{\tidn{1}} \wedge [f = 0]_{\tidn{2}}}} \\
      \begin{array}[t]{@{}l@{}}
        {\bf Thread }\ \tidn{1}\\
        \assert{\neg [\hat{f} \approx 1]_{\tidn{2}} \wedge [d = 0]_{\tidn{1}}}\\
        1\!: d := 5; \\
        \assert{\neg [\hat{f} \approx 1]_{\tidn{2}} \wedge [d = 5]_{\tidn{1}}}\\
        2\!:\raext{${\tt TxBegin}(\sf R, \emptyset)$} \\
        \assert{\neg [\hat{f} \approx 1]_{\tidn{2}} \wedge [d = 5]_{\tidn{1}}  \\ {} \wedge \rel_{\tidn{1}}}\\
        3\!:\raext{${\tt TxWrite}(f, 1)$};\\
        \assert{\neg [\hat{f} \approx 1]_{\tidn{2}} \wedge [d = 5]_{\tidn{1}}  \\ {} \wedge \rel_{\tidn{1}} \wedge (\hat{f}, 1) \in \wrset_{\tidn{1}} }\\
        4\!:\raext{${\tt TxEnd}$};\\
        \assert{\True }\\
      \end{array}
      & 
      \begin{array}[t]{@{}r@{~}l@{}}
        \multicolumn{2}{l}{{\bf Thread }\ \tidn{2}}\\
        5\!:& \kwdo \\
        \multicolumn{2}{l@{}}
        {\assert{\langle \hat{f} = 1\rangle[d = 5]_{\tidn{2}}}} \\
        6\!:& \raext{${\tt TxBegin}(\sf A, \{r_1\})$} \\
        \multicolumn{2}{l@{}}
        {\assert{\langle \hat{f} = 1\rangle[d = 5]_{\tidn{2}} \wedge \wrset_{\tidn{2}} = \emptyset  \wedge \acq_{\tidn{2}}}} \\
        7\!:& \raext{${\tt TxRead}(f, r_1)$};\\
        \multicolumn{2}{l@{}}
        {\assert{\langle \hat{f} = 1\rangle[d = 5]_{\tidn{2}} \wedge \wrset_{\tidn{2}} = \emptyset \wedge \acq_{\tidn{2}}
        \\
        {}  \wedge (\hat{f}, r_1) \in \rdset_{\tidn{2}}
         \wedge (r_1 = 1 \imp 
        [d \txndv  5]_{\tidn{2}})}} \\
        8\!:& \raext{${\tt TxEnd}$};\\
        \multicolumn{2}{l@{}}
        {\assert{\langle \hat{f} = 1\rangle[d = 5]_{\tidn{2}}   \wedge (r_1 = 1 
        \imp  [d =  5]_{\tidn{2}})}} \\
        9\!:& \kwuntil\ r_1 = 1 \\
        \multicolumn{2}{l@{}}
        {\assert{[d =  5]_{\tidn{2}}}} \\
        10\!:& r_2 \gets d\\
        \multicolumn{2}{l@{}}
        {\assert{r_2 = 5}} \\
      \end{array}
   \end{array}$}
   
 \hfill \scalebox{0.9}{\color{blue}  $\{ r_2=5\}$\qquad \qquad \qquad } \hfill{}

  \vspace{-0.5em}
  \caption{Proof outline for transactional MP from \reffig{fig:mp-tm}}
  \Description{Proof outline for transactional MP}
 \label{fig:mp-tm-proof}
 \vspace{-0.5em}
\end{figure}

To motivate \logic, consider the proof outline in
\reffig{fig:mp-tm-proof} for the transactional message passing program
from \reffig{fig:mp-tm}. We use\ `\ $\hat{}$\ '\ to distinguish
transactional locations in a proof. For the program in
\reffig{fig:mp-tm-proof}, we have a transactional location $\hat{f}$.

\subsubsection{View assertions}
The proof outline contains three assertions from~\cite{ECOOP20}
describing the {\em views} that each thread may have of the system
state. Recall (\refsec{sec:views}), that we can define the set of
values that a thread can see in each state using the function $OV$.
\begin{itemize}
\item A {\em definite value} assertion, denoted $[\loc = \val]_\tid$,
  holds iff thread $\tid$ sees the last write to location $\loc$ and
  this write has value $\val$. Thus,
  $[\loc = \val]_\tid \imp OV_\tid(\loc) = \{\val\}$.
\item A {\em possible value} assertion, denoted
  $[\loc \approx \val]_\tid$, which holds iff when $\tid$ can see a
  write to $\loc$ with value $\val$. $[\loc \approx \val]_\tid$ is
  shorthand for $\val \in OV_\tid(\loc)$.
\item A {\em conditional value} assertion, denoted
  $\langle \locb = \valb \rangle[\loc = \val]_\tid$, which holds iff
  an acquiring read of $\locb$ by $\tid$ that returns the value
  $\valb$ is guaranteed to induce a release-acquire synchronisation so
  that $[\loc = \val]_\tid$ holds after this read.
\end{itemize}

\begin{example}
  Consider the third state, \ie $\sigma_2$ in
  \reffig{fig:po-message-sync-exec}. There, we have
  $[d = 5]_{\tidn{1}} \wedge [f = 1]_{\tidn{1}}$ as well as
  $[f \approx 0]_{\tidn{2}} \wedge [f \approx
  1]_{\tidn{2}}$. Moreover, we have
  $\langle f =1 \rangle [d = 5]_{\tidn{2}}$.
\end{example}
We ask the interested reader to consult
\cite{ECOOP20,DBLP:journals/corr/abs-2004-02983} for further details
of these assertions.

\subsubsection{Transactional assertions}

As alluded to above, \logic introduces several new assertions to
describe the transactional state. These assertions are, in general,
local to the transaction being executed, and hence, stable under the
execution of other threads. \reffig{fig:mp-tm-proof} contains the
following transaction local assertions:
\begin{itemize}
\item $\rel_\tid$, which holds iff $\tid$ is executing a releasing or
  release-acquiring transaction.
\item $\acq_\tid$, which holds iff $\tid$ is executing an acquiring
  or release-acquiring transaction.
\item $(\hat{\loc}, \val) \in \wrset_\tid$ (and
  $(\hat{\loc}, \val) \in \rdset_\tid$), which holds iff $\tid$ is
  executing a transaction whose write set (resp. read set) contains a
  write to (resp. read of) $\hat{\loc}$ with value $\val$.
\item $[\loc \txndv \val]_{\tid}$, which holds iff $\tid$ is executing
  a transaction such that committing this transaction results in the
  definite value assertion $[\loc = \val]_{\tid}$ (see above).
\end{itemize}

In addition, we include a number of assertions 
This section provides the formal definition for 
 the assertion language used in the verification
of client programs that use TMS2-RA (See \refsec{sec:view-based-assert}). The assertion language presented 
here is heavily inspired by the view-based assertion 
language presented in \cite{ECOOP20}. 


A memory $i$ is visible to a transaction executed by a thread $\tid$
iff $i$ is greater than the transaction thread view of $\tau$ ($\txview_{\tau}$)
and is less than the maximum index of the memory ($|M| - 1$).  We define the set of visible memories
$\vismem_t$ to be:

\begin{align*}
  \vismem_{\tau} =  \{ n ~|~ n \geq \txview_{\tau} \land n \leq |M|-1 \}  
\end{align*}

\begin{itemize}
\item A {\em transactional definite observation} assertion, denoted
  $[\hat{x} = v]_{\txid}$, holds iff for all memory versions $i$, where
  $i$ is greater than or equal to $\beginIdx_t$, the value of
  $\memories_i(x)$ is $v$. Formally, for a transactional state $\gamma$:
    \begin{align*}
       [\hat{x} = v]_{\tid}(\gamma) \ \  \eqdef\ \  \forall i \in \gamma.\vismem_\tid.\ 
                 \gamma.\memories_i(\hat{x}) = v
    \end{align*}
    These are lifted to client-object states $(\gamma, \beta)$ in the normal manner, e.g.,
    $[\hat{x} = v]_{\tid}(\gamma, \beta) = [\hat{x} = v]_{\tid}(\gamma)$
  \item A {\em transactional possible observation} assertion, denoted
    $[\hat{x} \approx v]_\tid$, holds iff there exists a memory version
    $i$ that has value $v$ for $\hat{x}$. Formally: 
     \begin{align*}
        [\hat{x} \approx v]_{\tid}(\gamma)\ \ \eqdef \ \  \exists i \in \gamma.\vismem_\tid.\ 
                \gamma.\memories_i(\hat{x}) = v
    \end{align*}
  \item A {\em transactional conditional observation} assertion,
    denoted $\langle \hat{\locb} = \valb \rangle[\loc = \val]_\tid$,
    holds iff an acquiring transactional read of $\locb$ by $\tid$
    that returns a value $\valb$ is guaranteed to induce a
    release-acquire synchronisation so that $[\loc \txndv \val]_\tid$ holds
    in the client state after the reading transaction successfully
    commits. Formally: 
\begin{align*}
  &\langle \hat{\locb} = \valb \rangle[\loc = \val]_\tid (\gamma, \beta) \ \  \eqdef\ \
    \begin{array}[t]{@{}l@{}}
      \forall i \in \gamma.\vismem_\tid.\    
      \gamma.\memories_i(\hat{\locb}) = \valb \imp \\ 
      \qquad \gamma.\txnview_i(x) = \beta.last(x) \wedge \valu(\beta.\last(x)) = v 
      \land \gamma.\mrel_i
    \end{array}
\end{align*}
where $last(x)$ is the last write to $\loc$ in the modification
order. It is important to note that this assertion is over two states:
transaction state $\gamma$ and client state $\beta$, explaining
transfer of information across two threads using the transactional
memory. In particular, the transactional view $V_i$ for the memory
index $i$ must see the last write to $x$ in the client
state~$\beta$. This means that the thread that committed the
transaction writing the value $u$ to $\hat{y}$ did so when it saw the
last write to $x$.


%
 \end{itemize}

 \begin{example}
   Returning to our transactional MP example (\reffig{fig:mp-tm-proof}),
the precondition of line 1 contains assertions
$\neg [\hat{f} \approx 1]_{\tidn{2}}$ and $[d = 0]_{\tidn{1}}$, which
ensure that, prior to executing line 1, thread $\tidn{2}$ {\em cannot}
see the value $1$ for $\hat{f}$ and thread $\tidn{1}$ {\em must} see
the value $0$ for $d$, respectively. In the postcondition of line 2,
$[d = 0]_{\tidn{1}}$ changes to $[d = 5]_{\tidn{1}}$ since $\tidn{1}$
performs a write to $d$ with value $5$. The other view-based
assertions in $\tidn{1}$ are similar. We explain the transactional
assertions involving $\rel$ and $\wrset$ below.

Now consider the assertions in thread $\tidn{2}$. The precondition of
line 6 (which is also the precondition of line 5) contains a
conditional value assertion
$\langle \hat{f} = 1\rangle[d = 5]_{\tidn{2}}$. This assertion ensures
that, if $\tidn{2}$ reads the value 1 for $\hat{f}$ via an acquiring
(or release-acquiring) transaction, and this transaction successfuly
commits, then its view is guaranteed to be updated so that
$[d = 5]_{\tidn{2}}$ holds. In a transactional setting, we establish
this fact in three steps.
\begin{enumerate}[label=(\arabic*)]
\item After executing line~7, we use
  $\langle \hat{f} = 1\rangle[d = 5]_{\tidn{2}}$ to establish that
  $r_1 = 1 \imp [d \txndv 5]_{\tidn{2}}$ holds. Note that $r_2$ stores
  the value $1$ returned by a transactional read of $\hat{f}$. Thus,
  $\langle \hat{f} = 1\rangle[d = 5]_{\tidn{2}}$ is transformed into
  an implication after the execution of line~7. The assertion
  $[d \txndv 5]_{\tidn{2}}$ is a new assertion introduced in \logic,
  which states that if the transaction executed by $\tidn{2}$ commits,
  then $[d = 5]_{\tidn{2}}$ holds in the post-state.
\item If the transaction sucessfully commits (line~8), we use
  $r_1 = 1 \imp [d \txndv 5]_{\tidn{2}}$ to establish
  $r_1 = 1 \imp [d = 5]_{\tidn{2}}$ in the postcondition. Recall that
  all registers used by a transaction are set to $\bot$ when a
  transaction aborts, so if $\tidn{2}$ reaches line 9 by aborting the
  transaction, then this assertion is trivially true.
\item We use $r_1 = 1 \imp [d = 5]_{\tidn{2}}$ to establish
  $[d = 5]_{\tidn{2}}$ after the \kwdo-\kwuntil\ loop, using the guard
  $r_1 = 1$ at line~9.
\end{enumerate}
Finally, we use $[d = 5]_{\tidn{2}}$ in the precondition of line~10 to
establish the postcondition $r_2 = 5$. This is because
$[d = 5]_{\tidn{2}}$ guarantees that the only value $\tidn{2}$ can
read for $d$ is $5$.
 \end{example}

\subsection{\logic: Transactional Owicki-Gries Reasoning }
\label{sec:owicki-gries-reas}
Now that we have introduced the assertions used by \logic, we now
review the Owicki-Gries proof obligations. As discussed above, the use
of view-based assertions allows us to use the standard Owicki-Gries
theory. Regardless, we review the theory in the context of our
language, which supports (abstract) TM operations.  Formally, we model
programs as a labelled transition system, given by the syntax in
\reffig{fig:syntax}.

A command (of type $\AComm$) is either a local assignment $r := \Exp$,
a {\em store} to a shared location $\loc :=^{[{\sf R}]} \Exp$, a {\em
  load} from a shared location $r \gets^{[{\sf A}]} \loc$, a
compare-and-swap
$r \gets \kwcas^{\sf [RX] [R] [A]}(\loc, \valb, \val)$, or a
transactional operation. The annotations $\sf RX$, $\sf R$ and $\sf A$
are optional, as indicated by the brackets `[' and `]'. Thus, for
example, both $\loc := e$ and $\loc :=^{\sf R} e$ are valid load
commands; the former is relaxed and the latter is releasing. A
$\kwcas$ may be annotated to be relaxed, or release and/or
acquire. Note that a $\kwcas$ returns a boolean to indicate whether or
not the compare-and-swap has been successful.
$\txbegin{[{\sf RX}] [{\sf R}] [{\sf A}]}$, $\txread{r}{\loc}$,
$\txwrite{\loc}{\val}$ and $\txend$ are transactional operations, as
defined by the \wtms automata in \reffig{fig:wtms}.

\begin{figure}[t]
  \centering
$\begin{array}{@{} l @{}}
	\valb,\val \!\in\! \Val 
  \eqdef \nat
	\qquad
	\loc, \locb,\ldots \!\in\! \Var 
	\qquad
  r, r_1, r_2 \ldots \!\in\! \Reg \qquad 
	\tid, \tidn{1}, \tidn{2},\ldots \!\in\! \TID \eqdef \nat
  \qquad 
	i, j, k,\ldots \!\in\! \Label \\
\begin{array}{@{} r @{\hspace{2pt}} l @{}} 
  e \in \Exp ::= & v \mid r \mid e {+} e \mid \cdots 
  \\
  B \in \BExp ::= & 
                  \True \mid B \land B \mid \cdots \\
  \alpha \in \AComm ::=     &  
                              r := \Exp
                   \mid \loc :=^{\sf [R]} \Exp \mid 
 r \gets^{\sf [A]} \loc \mid r \gets \kwcas^{\sf [RX] [R] [A]}(\loc, \valb, \val)
                   \mid {} 
  \\
                 & \txbegin{[{\sf RX}] [{\sf R}] [{\sf A}], 2^\Reg} \mid \txread{r}{\loc}
                   \mid  \txwrite{\loc}{\val} \mid \txend
  \\[1mm]
  	ls \in \LComm ::= 
  	&  \stepgoto{\alpha}{j} \mid \ifgoto{B}{j}{k} 
  \\
	\prog \in \Prog \eqdef 
	& \TID \times \Label \to \LComm 	
\end{array} 
 \end{array}$
 \vspace{-0.5em}

  \caption{Language syntax}
  \Description{Language syntax}
  \label{fig:syntax}
  \vspace{-1em}
\end{figure}

We use a program counter variable $pc: \TID \to Label$ to model
control flow, and model a program $\prog$ as a function mapping each
pair $(\tid, i)$ of thread identifier and label to the \emph{labelled
  statement} (in $\LComm$) to be executed.  A labelled statement may
be
\begin{enumerate*}[label=(\roman*)]
\item a plain statement of the form $\stepgoto{\alpha}{j}$, comprising
  an atomic statement $\alpha$ to be executed and the label $j$ of the
  next statement; or
\item a conditional statement of the form $\ifgoto{B}{j}{k}$ to
  accommodate branching, which proceeds to label $j$ if $B$ holds and
  to $k$, otherwise.
\end{enumerate*}
We assume a designated label, $\initlabel \in \Label$,
representing the \emph{initial label}; \ie each thread begins
execution with $\pc(\tid) = \initlabel$.  Similarly,
$\finlabel \in \Label$ represents the \emph{final label}.

We let $\Assertion$ be the set of \emph{assertions} that use
view-based expressions.  We model program
annotations via an \emph{annotation function},
$\ann \in \Ann = \TID \times \Label \to \Assertion$, associating each
program point $(\tid, i)$ with its associated assertion.  A {\em proof
  outline} is a tuple $(\ina, \ann, \fina)$, where
$\ina, \fina \in \Assertion$ are the initial and final assertions.

\begin{definition}[Validity]
\label{def:outline}
A proof outline $(\ina, \ann, \fina)$ is {\em valid} for a program
$\prog$ iff each of the following holds:

\begin{description}
\item [Initialisation] For all $\tid \in \TID$,
    $\ina \imp \ann(\tid, \initlabel)$.
  \item[Finalisation] $(\forall \tid \in \TID.\, \ann(\tid, \finlabel)) \imp \fina$
  \item[Local correctness] For all $\tid \in \TID$ and $i \in \Label$, either:
    \begin{itemize}
    \item $\prog(\tid, i) = \stepgoto{\alpha}{j}$ and
      ${\assert{\ann(\tid, i)}}\ \alpha\ {\assert{\ann(\tid, j)}}$; or 
    \item $\prog(\tid, i) = \ifgoto{B}{j}{k}$ and both
      $\ann(\tid, i) \wedge B \imp \ann(\tid,j)$ and
      $\ann(\tid, i) \wedge \neg B \imp \ann(\tid,k)$ hold.
    \end{itemize}
  \item[Stability]     For all $\tid_1, \tid_2 \in \TID$ such that $\tid_1 \neq \tid_2$ and
    $i_1, i_2 \in \Label$ if $\prog(\tid_1, i_1) = \stepgoto{\alpha}{j}$, then

      $\assert{\ann(\tid_2, i_2) \wedge \ann(\tid_1, i_1)}\ \alpha\ \assert{\ann(\tid_2, i_2)}$ 
\end{description}
\end{definition}
\noindent
Intuitively, {\sf Initialisation} (resp. \textsf{Finalisation})
ensures that the initial (resp. final) assertion of each thread holds
at the beginning (resp. end); {\sf Local correctness} establishes
validity for each thread; and {\sf Stability} ensures that each
(local) thread annotation is {\em interference-free} under the
execution of other threads~\cite{DBLP:journals/acta/OwickiG76}.


To support Owicki-Gries reasoning, we have proved a number of
high-level rules, extending those of Dalvandi et
al.~\cite{DBLP:conf/ppopp/DalvandiD21,ECOOP20} to cope with
transactional assertions from \refsec{sec:view-based-assert} and the
transactional commands. For instance, the following rules are used in
the proof of transactional message passing. A number of other rules
are provided as part of our Isabelle/HOL development.
\begin{lemma} \label{lemma:proof-rules-logic}
  Suppose $\tidn{1} \neq \tidn{2}$. Then each of the following holds:
    \begin{align*} 
      & \assert{\True}\ \raext{${\tt TxWrite}_\tid(x, v)$} \ \assert{(\hat{x},v) \in \wrset_\tid} 
      \\
      & \assert{(x,u) \in \wrset_{\tidn{1}} \wedge \rel_{\tidn{1}} \wedge {} {[}\hat{x}\not\approx u]_{\tidn{2}} \land [y = v]_{\tidn{1}}}\ \raext{${\tt TxEnd}_{\tidn{2}}$} \ \assert{ \langle \hat{x} =u \rangle [y = v]_{\tidn{2}}} 
      \\
      & \assert{(x, \_) \not \in \wrset_{\tid} \wedge Acq_\tid {}  {} \wedge \langle \hat{x} =u \rangle [y = v]_\tid }\ \raext{${\tt TxRead}_\tid(x, r)$} \ \assert{ (\hat{x}, r) \in \rdset_\tid \wedge {}  (r = u \imp [y \txndv v]_\tid) } 
      \\
      & \assert{ (\hat{x}, r) \in \rdset_\tid \wedge 
      (r = u \imp [y \txndv v]_\tid) }\ \raext{${\tt TxEnd_\tid}$} \ \assert{ r = u  \Rightarrow [y =m]_\tid} 
  \end{align*}

\end{lemma}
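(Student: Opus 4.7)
Each of the four triples is a local-correctness obligation and is discharged by unfolding the view-based and transactional assertions of \refsec{sec:view-based-assert} and then applying the corresponding rule from \reffig{fig:tms2-opsem}. No induction or inter-thread reasoning is required; each proof reduces to a case analysis on the atomic update performed by the rule on a configuration $(\lst, \gamma, \beta)$.

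The first triple is immediate. When the guard of \textsc{TxWrite}$_\tid$ fires, the effect extends $\wset_\tid$ with $\{x \mapsto v\}$, yielding $(\hat{x}, v) \in \wrset_\tid$; when the guard fails the command does not execute and the triple holds vacuously, so the precondition $\True$ suffices. For the second triple I read the command as $\txend_{\tidn{1}}$, treating the stated subscript as a typographic slip since thread $\tidn{2}$ has no active transaction in the hypothesis. Firing \textsc{TxEndWR}$_{\tidn{1}}$ appends a fresh memory at index $i = |\gamma.\memories|$ with $\memories_i = \last(\gamma.\memories) \oplus \gamma.\wset_{\tidn{1}}$, sets $\mrel_i = \gamma.\synctype_{\tidn{1}}$, and sets $\mmview_i = \tview'_{\tidn{1}}$. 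The four conjuncts of the hypothesis then give: $\memories_i(\hat{x}) = u$ from the write set; $\mrel_i \in \{\R, \R\A\}$ from $\rel_{\tidn{1}}$; $\mmview_i(y) = \beta.\last(y)$ with $\valu(\beta.\last(y)) = v$ from $[y = v]_{\tidn{1}}$; and, crucially, $[\hat{x} \not\approx u]_{\tidn{2}}$ tells us that no pre-existing $M_j \in \vismem_{\tidn{2}}$ carries $\hat{x} \mapsto u$, so the newly appended $M_i$ is the unique visible memory to $\tidn{2}$ satisfying the antecedent of the conditional-observation assertion. Unfolding $\langle \hat{x} = u \rangle [y = v]_{\tidn{2}}$ against this singleton then discharges the postcondition.

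The third triple fires the external-read branch of \textsc{TxRead}$_\tid$ (the first conjunct of the precondition forces $\hat{x} \notin \dom(\wset_\tid)$), which picks some $i \in \vismem_\tid$ with $\memories_i(\hat{x}) = r$ and extends $\rdset_\tid$ with $(\hat{x}, r)$. Under the assumption $r = u$, the hypothesis $\langle \hat{x} = u \rangle [y = v]_\tid$ supplies $\mmview_i(y) = \beta.\last(y)$, $\valu(\beta.\last(y)) = v$, and $\mrel_i \in \{\R, \R\A\}$. Because $\tid$ is acquiring, $i$ is added to $\seenIdxs_\tid$, so $\view(\seenIdxs_\tid, \gamma.V)(y)$ dominates the last write of $y$, which is precisely the content of the pending-commit assertion $[y \txndv v]_\tid$. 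For the fourth triple we fire \textsc{TxEndRO}$_\tid$ (the read-only branch, since only reads have occurred), whose effect on $\beta$ is $\tview_\tid \asgn \tview_\tid \otimes \view(\seenIdxs_\tid, \gamma.V)$; assuming $r = u$, the hypothesis $[y \txndv v]_\tid$ says exactly that this $\otimes$-update promotes $\tview_\tid(y)$ to the last write of $y$ with value $v$, yielding $[y = v]_\tid$ in the post-state (reading the printed $m$ as a typo for $v$).

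\textbf{Main obstacle.} Triple~(2) is the only non-routine case, since its postcondition quantifies over \emph{all} memory indices visible to a different thread in the post-state. The hypothesis $[\hat{x} \not\approx u]_{\tidn{2}}$ must be used at precisely the right point to collapse this quantification to the single newly appended memory; without it one cannot rule out that other $M_j \in \vismem_{\tidn{2}}$ also carry $\hat{x} \mapsto u$, and nothing in the precondition constrains their $\mrel_j$ or $\mmview_j$. The remaining three triples reduce to algebraic manipulation of $\otimes$ together with the trivial effects of \textsc{TxWrite} and the external-read branch of \textsc{TxRead}, and I expect each to be a handful of Isabelle steps.
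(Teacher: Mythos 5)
Your overall strategy is the paper's: Lemma~\ref{lemma:proof-rules-logic} carries no pen-and-paper argument in the paper (it is discharged in Isabelle/HOL directly against the operational semantics of \reffig{fig:tms2-opsem}), and your per-rule unfolding is exactly that argument. Your two typo repairs are also the right reading: the ${\tt TxEnd}$ in the second triple is executed by the releasing writer $\tidn{1}$ (this is how the rule is used for the stability obligation at line~4 of \reffig{fig:mp-tm-proof}, and it matches the corresponding rule in Appendix~\ref{sec:proof-rules}, which additionally guards the postcondition by the committed status), and the $m$ in the fourth postcondition is $v$. Your identification of the role of $[\hat{x}\not\approx u]_{\tidn{2}}$ — collapsing the quantification over $\vismem_{\tidn{2}}$ in the post-state to the single newly appended memory, whose $\mrel$ and modification view are pinned down by $\rel_{\tidn{1}}$ and $[y=v]_{\tidn{1}}$ — is indeed the only non-routine point, and your argument there is sound (note ${\tt TxEndRO}$ is blocked since $\wrset_{\tidn{1}}\neq\emptyset$, so only the writer branch need be considered, as you implicitly do).

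Two small inaccuracies, neither fatal. In the third triple, the index $i$ enters $\seenIdxs_\tid$ because $S_i$ (equivalently $\mrel_i$) is releasing — a fact you do extract from the conditional-observation hypothesis — not because $\tid$ is acquiring; the acquire annotation matters only at commit, where together with $\seenIdxs_\tid\neq\emptyset$ it enables the $\otimes$-update of $\beta.\tview_\tid$. In the fourth triple you may not restrict attention to ${\tt TxEndRO}$: nothing in the precondition forces $\wrset_\tid=\emptyset$, so the ${\tt TxEndWR}$ branch must also be handled. Fortunately it performs the identical conditional view join, and appending a new transactional memory does not affect the client-side assertion $[y=v]_\tid$, so the same argument applies verbatim — but the case split is required for the triple as stated (and, since the paper never formally defines $[y \txndv v]_\tid$, your proof of rules three and four is correct modulo your — reasonable — reading of that pending-commit assertion as recording exactly the view promotion performed at a successful commit).
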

The rules in \reflem{lemma:proof-rules-logic} have been verified in
Isabelle/HOL w.r.t. the operational semantics. 
Once proved, they can be used to show validity of proof outlines such
as those in \reffig{fig:mp-tm-proof} without having to consult the
operational semantics.

\begin{theorem}
  The proof outline in \reffig{fig:mp-tm-proof} is valid.
\end{theorem}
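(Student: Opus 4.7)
The plan is to discharge each of the four obligations of \refdef{def:outline} in turn, appealing to the proof rules of \reflem{lemma:proof-rules-logic} together with the standard view-based rules for non-transactional commands from \cite{ECOOP20,DBLP:conf/ppopp/DalvandiD21}. \textbf{Initialisation} is immediate: the stated global precondition asserts $[d=0]_{\tid}$ and $[f=0]_{\tid}$ for both threads, and in particular $[d=0]_{\tidn{1}}$ together with $\neg[\hat{f}\approx 1]_{\tidn{2}}$ (the latter holds because at the initial state the only write to $\hat{f}$ in $M$ has value $0$, so no memory snapshot visible to $\tidn{2}$ carries value $1$). \textbf{Finalisation} reduces to noting that thread $\tidn{1}$'s final assertion is \emph{True} and thread $\tidn{2}$'s final assertion is $r_2=5$, whose conjunction implies the postcondition.

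For \textbf{local correctness}, I would walk through the two threads line by line. In $\tidn{1}$: line~1 is a relaxed write to a client variable, handled by the standard assignment rule that updates $[d=0]_{\tidn{1}}$ to $[d=5]_{\tidn{1}}$ and preserves $\neg[\hat{f}\approx 1]_{\tidn{2}}$ (since no transactional write has been committed); the ${\tt TxBegin}$ at line~2 uses the syntactic side condition that the synctype flag is $\sf R$ to add $\rel_{\tidn{1}}$; line~3 applies the first rule of \reflem{lemma:proof-rules-logic} to introduce $(\hat{f},1)\in\wrset_{\tidn{1}}$; and the ${\tt TxEnd}$ at line~4 yields $\True$ trivially. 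In $\tidn{2}$: the ${\tt TxBegin}$ at line~6 introduces $\wrset_{\tidn{2}}=\emptyset$ and $\acq_{\tidn{2}}$; line~7 uses the third rule of \reflem{lemma:proof-rules-logic} to turn the conditional observation into the implication $r_1=1\imp[d\txndv 5]_{\tidn{2}}$; line~8 uses the fourth rule to promote this to $r_1=1\imp[d=5]_{\tidn{2}}$; and the until-loop at line~9, combined with the guard $r_1=1$, yields $[d=5]_{\tidn{2}}$, from which the standard relaxed-read rule gives $r_2=5$ after line~10.

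The hard part is \textbf{stability}, in particular the preservation and establishment of $\langle\hat{f}=1\rangle[d=5]_{\tidn{2}}$ at the precondition of line~6. This assertion must survive every step of $\tidn{1}$ and, crucially, is \emph{established} by the commit step at line~4. For the $\tidn{1}$-under-$\tidn{2}$ direction, the transaction-local components of $\tidn{2}$'s assertions ($\wrset_{\tidn{2}}$, $\rdset_{\tidn{2}}$, $\acq_{\tidn{2}}$, registers) are unaffected by $\tidn{1}$'s actions because they index $\tidn{2}$'s own transaction record, so only the view-based parts need attention. The write at line~1 preserves $\neg[\hat{f}\approx 1]_{\tidn{2}}$ (it writes $d$, not $\hat{f}$) and trivially preserves $\langle\hat{f}=1\rangle[d=5]_{\tidn{2}}$ because no memory snapshot with $\hat{f}\mapsto 1$ yet exists in $M$. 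The commit at line~4 is the delicate step: we must verify that the precondition of that commit --- which holds $(\hat{f},1)\in\wrset_{\tidn{1}}$, $\rel_{\tidn{1}}$, $\neg[\hat{f}\approx 1]_{\tidn{2}}$ and $[d=5]_{\tidn{1}}$ --- matches the hypothesis of the second rule of \reflem{lemma:proof-rules-logic}, thereby discharging the stability obligation by \emph{establishing} $\langle\hat{f}=1\rangle[d=5]_{\tidn{2}}$ in $\tidn{2}$'s annotations rather than merely preserving them. For the $\tidn{2}$-under-$\tidn{1}$ direction, $\tidn{2}$'s transactional reads and the final relaxed read do not modify any shared state visible to $\tidn{1}$, so each of $\tidn{1}$'s annotations is trivially stable. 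Collecting these verifications yields the validity of the outline.
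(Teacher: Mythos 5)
Your proposal is correct and takes essentially the same route as the paper: the paper's own ``proof'' is simply an appeal to the Isabelle/HOL mechanisation, which discharges exactly the Owicki-Gries obligations of \refdef{def:outline} using the generic rules of \reflem{lemma:proof-rules-logic} (the conditional-observation assertion established at $\tidn{1}$'s releasing commit, converted to $r_1 = 1 \imp [d \txndv 5]_{\tidn{2}}$ at the transactional read, promoted to $r_1 = 1 \imp [d = 5]_{\tidn{2}}$ at commit, and resolved by the loop guard), which is precisely the walkthrough you give. Your sketch glosses a few routine points that the mechanisation handles explicitly (the initial vacuity of $\langle \hat f = 1\rangle[d=5]_{\tidn{2}}$, the abort path at ${\tt TxEnd}$ where registers become $\bot$, and stability of $\neg[\hat f \approx 1]_{\tidn{2}}$ under $\tidn{2}$'s own view-advancing commit), but none of these affects the soundness of the argument.
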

\noindent
This theorem has been verified in Isabelle/HOL, and it makes extensive
use of generic proof rules such as the ones proved in
\reflem{lemma:proof-rules-logic}. In particular, given such lemmas,
like in previous
works~\cite{DBLP:conf/esop/BilaDLRW22,DBLP:journals/corr/abs-2004-02983,DBLP:conf/ppopp/DalvandiD21,ECOOP20},
Isabelle/HOL is automatically able to find and apply the appropriate
proof rule using the built-in {\tt sledgehammer}
tool~\cite{DBLP:conf/cade/BohmeN10}. This automation has been key to
scaling mechanised verification of proof outlines in view-based
logics. 
For example, the proofs of TML-RA (see \refsec{sec:cont-refin-b})
requires verification of complex invariants and proof outlines, and
these proofs make use of the proof rules developed in prior
work~\cite{ECOOP20}. Similarly, TARO can be applied to verify more
complex porgrams that use transactions, for instance if one were to
develop transactional data structures. Interestingly, because
transactions provide isolation guarantees, many of the proofs are
simplified since the stability checks for in-flight transactions
become trivial.

The proof outlines for the programs in Figs.~\ref{fig:mp-tm-2} and
\ref{fig:ra-trans-mem} are provided in
Appendix~\ref{sec:additional-examples}. 


\section{Proving correctness of \tmlra}
\label{sec:cont-refin-b}

We now turn to the question of correctness of \tmlra with respect to
the \wtms specification.


\subsection{Refinement and Simulation for Weak Memory}
\label{sec:refin-simul-weak}

Since we have an operational semantics with an interleaving semantics
over weak memory states, the development of our refinement theory
closely follows the standard approach under
SC~\cite{DBLP:books/cu/RoeverE1998}.
Suppose $P$ is a program with initialisation $\Init$. An
\emph{execution} of $P$ is defined by a possibly infinite sequence
$\Delta_0\, \Delta_1\, \Delta_2\,\dots$ such that
\begin{enumerate}[leftmargin=*]
\item each $\Delta_i$ is a 4-tuple $(P_i, \ls_i, \gamma_i, \beta_i)$
  comprising a program to be executed, local state, global library
  state and global client state, and
\item
  $(P_0, \ls_0, \gamma_0, \beta_0) = (P, \ls_\Init, \gamma_\Init,
  \beta_\Init)$, and
\item for each $i$, we have $\Delta_i \Longrightarrow \Delta_{i+1}$, where
  $\Longrightarrow$ is the transition relation of the program (as
  defined by the operational semantics).
\end{enumerate}
Let $\LVar_P$ be the set of local variables corresponding to a program
$P$. If $P$ is a client, a \emph{client trace} corresponding to an
execution $\Delta_0\,\Delta_1\,\Delta_2\dots$ is a sequence
$\ct \in \Sigma_P^*$ such that
$\ct_i = (\pi_{2}(\Delta_i)_{|P}, \pi_{4}(\Delta_i))$, where $\pi_n$ is a
projection function that extracts the $n$th component of a given tuple
and $\ls_{|P}$ restricts the given local state $\ls$ to the
variables in $\LVar_P$. Thus, each $\ct_i$ is the global client state
component of $\Delta_i$. After such a projection, the concrete
implementation may contain (finite or infinite)
stuttering~\cite{DBLP:books/cu/RoeverE1998}, i.e., consecutive states
in which the client state is unchanged. We let ${\it rem\_stut}(\ct)$
be the function that removes all stuttering from the trace $\ct$,
i.e., each consecutively repeating state is replaced by a single
instance of that state. We let $\Tr(P)$ denote the set of
\emph{stutter-free traces} of a program $P$, i.e., the stutter-free
traces generated from the set of all executions of $P$.

Below we refer to the client that uses the abstract object as the
\emph{abstract client} and the client that uses the object's
implementation as the \emph{concrete client}. The notion of contextual
refinement that we develop ensures that a client is not able to distinguish 
the use of a concrete implementation in place of an abstract
specification. In other words, each thread of the concrete client
should only be able to observe the writes (and updates) in the client
state (i.e., $\gamma$ component) that the thread could already observe
in a corresponding of the client state of the abstract client.
First we define trace refinement for weak memory states. 
\begin{definition}[State and Trace Refinement]
  \label{def:cont-refin-1}
  We say a concrete client state $(\ls, \beta_C)$ is a
  \emph{refinement} of an abstract client state $(\als, \beta_A)$,
  denoted $(\ls, \beta_C) \leq (\als, \beta_A)$ iff
  $\ls = \als$ and 
  for all threads $\tid$ and $x \in \GVar$, we have
  $\beta_C.\OW_\tid(x) \subseteq \beta_A.\OW_\tid(x)$. 
  We say a concrete client trace $\ct$ is a \emph{refinement} of an
  abstract client trace $\at$, denoted $\ct \leq \at$, iff
  $\ct_i \leq \at_i$ for all $i$.
\end{definition}
This now leads to a natural trace-based definition of contextual refinement. 
\begin{definition}[Program Refinement]
  \label{def:prog-ref}
  A concrete program $P_C$ is a \emph{refinement} of an abstract
  program $P_A$, denoted $P_C \leq P_A$, iff for any (stutter-free)
  client trace $\ct \in \Tr(P_C)$ there exists a (stutter-free) client
  trace $\at \in \Tr(P_A)$ such that $\ct \leq \at$.
\end{definition}
Finally, we obtain a notion of contextual refinement for abstract
objects. We let $P[O]$ be the client program calling operations from
object $O$. Note that $O$ may be an abstract object, in which case
execution of each method call follows the abstract object semantics, or a concrete implementation.
\begin{definition}[Contextual refinement]
  \label{def:cref}
  We say a concrete object $CO$ is a \emph{contextual refinement} of
  an abstract object $AO$ 
   iff for any client
  program $P$, we have $P[CO] \leq P[AO]$.
\end{definition}


Here, we use a \emph{simulation-based} proof
method, which is a standard
technique from the literature that establishes refinement between
\wtms and \tmlra. The difference in a relaxed memory setting is that
the refinement relation is between more complex configurations of the
form $(\ls, \gamma, \beta)$, where $\ls$ describes the local state,
$\gamma$ is the client state and $\beta$ is a state of the TM in
question. 
In particular, a {\em simulation relation}, $R$, relates triples
$\Gamma_A \eqdef (\als, \gamma_A, \beta_A)$ of the abstract system
with triples $\Gamma_C \eqdef (\ls, \gamma_C, \beta_C)$ of the
concrete system. 

The definition below assumes a reflexive relation
$\gamma_C \trans{\tview_\tid} \gamma_C'$ for each thread $\tid$ that
arbitrarily advances the thread view of $\tid$ (for one or more
locations).
\begin{definition}[Forward simulation] \label{def:fsim} For an
  abstract object $AO$ and a concrete object $CO$, for a client
  program $P$, we say
  $R(\Gamma_A, \Gamma_C) \eqdef  R_V((\als, \beta_A), (\ls, \beta_C)) \wedge
  R_O((\als_{|AO}, \gamma_A), (\ls_{|CO}, \gamma_C))$ is a forward
  simulation between $A$ and $C$ iff each of the following holds:
  \begin{description}
  \item [Client observation.]\ 
    
    $
    \begin{array}[t]{@{}l}
      R_V((\als, \beta_A), (\ls, \beta_C)) =  \begin{array}[t]{@{}l@{}}
                      \als_{|P} = \ls_{|P} \wedge (\forall \tid\in \TID, \loc \in \Var.\ \ 
                      \beta_A.\tview(t, \loc) \leq \beta_C.\tview(t, \loc))
                    \end{array}
    \end{array}
    $
    
    
  \item [Thread view stability.] For any thread $\tid$,
    
    $\begin{array}[t]{@{}l@{}}
      R_O((\als_{|AO}, \gamma_A), (\ls_{|CO}, \gamma_C)) \wedge
      (\gamma_C \trans{\tview_\tid} \gamma_C') 
       \Rightarrow 
      R_O((\als_{|AO}, \gamma_A), (\ls_{|CO}, \gamma_C'))
    \end{array}$


  \item [Initialisation.] For any 
    concrete initial state $\Gamma_C^0$, there exists an abstract initial state
    $\Gamma_A^0$ such that
    $R(\Gamma_A^0, \Gamma_C^0)$.
\item [Preservation.]For any concrete states $\Gamma_C$, $\Gamma_C'$ such that $C$
  can take an atomic transition from $\Gamma_C$ to $\Gamma_C'$, if
  $\Gamma_A$ is an abstract state such that $R(\Gamma_A, \Gamma_C)$,
  then either
    \begin{itemize}
    \item $R(\Gamma_A, \Gamma_C')$, or \hfill (stuttering step)
    \item there exists a transition of $A$ from $\Gamma_A$ to some
      state $\Gamma_A'$ such that $R(\Gamma_A', \Gamma_C')$.
      \hfill (non-stuttering step)
    \end{itemize}
  \end{description}
\end{definition} {\em Initialisation} and {\em preservation} are
standard components of a forward simulation. {\em Client observation}
is necessary in a relaxed memory context to ensure that the
client-side observations of the concrete system are possible
observations of the abstract system. In particular, if an abstract
object specifies a particular client-side synchronisation, then this
synchronisation must also be present in the concrete implementation
(see~\cite{DBLP:conf/ppopp/DalvandiD21}). {\em Thread view stability}
guarantees that the $R_O$ component of the refinement relation is
preserved when the thread view in the library is shifted forward,
e.g., due to synchronisation within a client. 

Note that \refdef{def:fsim} only guarantees preservation of safety. To
additionally preserve liveness, further progress guarantees are
required in an
implementation~\cite{DBLP:conf/icalp/GotsmanY11,DBLP:conf/icfem/DongolG16}. We
leave liveness preservation through refinement for future work since
notions of fairness and progress of weak memory models is still at the
early stages~\cite{DBLP:journals/pacmpl/LahavNOPV21}.

\begin{restatable}{theorem}{FSimSound}
  If $R$ is a forward simulation between $AO$ and $CO$,
  then for any client $P$ we have $P[CO] \leq P[AO]$.
\end{restatable}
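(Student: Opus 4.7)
The plan is to fix an arbitrary client $P$ and show $P[CO] \leq P[AO]$ by constructing, for each stutter-free trace $\ct \in \Tr(P[CO])$, a matching stutter-free trace $\at \in \Tr(P[AO])$ such that $\ct_i \leq \at_i$ at every index. The construction proceeds by induction on the length of an underlying execution of $P[CO]$ whose projection yields $\ct$; once the matching abstract execution is built, client-trace refinement follows by projection, stutter removal, and unfolding Definition~\ref{def:cont-refin-1}.

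First I would fix an execution $\Delta_0^C \Longrightarrow \Delta_1^C \Longrightarrow \cdots$ of $P[CO]$ and obtain an abstract initial state $\Delta_0^A$ with $R(\Delta_0^A, \Delta_0^C)$ from the \textsf{Initialisation} clause. For the inductive step, suppose $R(\Delta_i^A, \Delta_i^C)$ holds; applying \textsf{Preservation} to $\Delta_i^C \Longrightarrow \Delta_{i+1}^C$ yields either $\Delta_{i+1}^A = \Delta_i^A$ (stuttering) or an abstract transition to some $\Delta_{i+1}^A$ with $R(\Delta_{i+1}^A, \Delta_{i+1}^C)$. The \textsf{Thread view stability} clause is used to discharge the $R_O$ component in sub-cases where a concrete client step advances $\beta_C.\tview$ (through release-acquire synchronisation outside the library) without altering the abstract library state; such steps can be treated as stuttering at the abstract level while preserving $R_O$. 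For infinite executions the construction is iterated in the standard $\omega$-chain fashion, and the limiting abstract execution is itself a valid execution of $P[AO]$ because each finite prefix is.

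Next I would project both executions to client traces and apply $\mathit{rem\_stut}$. Because $R_V$ enforces $\als_{|P} = \ls_{|P}$ throughout, a concrete transition alters the client-observable state exactly when the matched abstract transition does: a genuine change in $\ct$ forces a non-stuttering abstract step, and Preservation's stuttering case only fires when the concrete step is itself invisible on the client projection. Hence $\at$ (after stutter removal) has the same length as $\ct$ with $R$ holding pointwise. From $\beta_A^i.\tview(\tid,\loc) \leq \beta_C^i.\tview(\tid,\loc)$ I would derive $\beta_C^i.\OW_\tid(x) \subseteq \beta_A^i.\OW_\tid(x)$, using the definition of $\OW_\tid$: only writes whose timestamp is at least $\xxtst(\tview_\tid(x))$ are observable, so shrinking the view in the abstract can only enlarge the observable set. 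Combined with $\als^i_{|P} = \ls^i_{|P}$ this yields $\ct_i \leq \at_i$.

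The main obstacle I expect is the length-alignment argument underlying the previous paragraph: one must show that every \emph{non-stuttering} abstract step produced by Preservation really does correspond to a step that changes the client projection $(\ls_{|P}, \beta)$, and conversely that every change in the client projection on the concrete side triggers the non-stuttering branch. This needs a careful case split on the kind of concrete transition (local register update, shared-memory access, library internal step, library commit that synchronises client views), appealing to \textsf{Client observation} to exclude spurious mismatches and to \textsf{Thread view stability} to absorb concrete transitions that are invisible to the abstract library but still shift $\beta.\tview$. Once this bookkeeping is in place, the remaining steps are routine and the argument mechanises naturally as a coinductive trace-matching proof in Isabelle/HOL.
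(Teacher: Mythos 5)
There is a genuine gap in how you discharge the inductive step for \emph{client} transitions. In Definition~\ref{def:fsim}, the \textsf{Preservation} clause only quantifies over atomic transitions of the concrete \emph{object} $CO$ (library steps); it gives you nothing for steps of the client program $P$. Your proposal applies Preservation uniformly to every transition of $P[CO]$, so the induction does not go through for client steps. The paper's proof instead case-splits: for a library step it invokes Preservation, while for a client step it lets the \emph{abstract} system execute the identical client step (possible because $P[CO]$ and $P[AO]$ run the same client code), and then argues directly that $R_V$ is preserved (both sides update $\ls$/$\als$ and the client thread views in the same way) and that $R_O$ is preserved --- trivially for a non-synchronising client step, since the library state is untouched, and via \textsf{Thread view stability} for a synchronising one.

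Your use of Thread view stability is also inverted, and the resulting construction would break the invariant you rely on later. You propose to treat a synchronising concrete client step as stuttering at the abstract level; but such a step changes the concrete client-observable state (it updates a register in $\ls$ and advances $\beta_C.\tview$), so if the abstract side does not move, the conjunct $\als_{|P} = \ls_{|P}$ of $R_V$ fails and the pointwise matching $\ct_i \leq \at_i$ cannot be recovered by stutter removal --- the concrete change is genuine, not stuttering. In the paper, Thread view stability addresses a different issue: when the abstract and concrete clients both take the synchronising step, that step may advance the thread view recorded inside the concrete \emph{library} state $\gamma_C$, and the clause guarantees that $R_O((\als_{|AO},\gamma_A),(\ls_{|CO},\gamma_C'))$ survives exactly this kind of forward shift. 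With that correction the rest of your argument (Initialisation for the base case, projection, and deriving $\beta_C.\OW_\tid(x) \subseteq \beta_A.\OW_\tid(x)$ from the view comparison in $R_V$, which also tacitly uses that both sides hold the same client writes because client writes are mirrored) matches the paper's proof, and the elaborate length-alignment bookkeeping you anticipate largely disappears, since client steps are matched one-for-one by construction.
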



\newcommand{\wc}{\mathtt{wc}}
\newcommand{\odd}{\mathtt{odd}}
\newcommand{\even}{\mathtt{even}}
\newcommand{\grel}{\mathtt{gRel}}
\newcommand{\lastfn}{\mathtt{lastval}}
\newcommand{\txnrel}{\mathtt{txnRel}}
\newcommand{\hasRead}{\mathtt{hasRead}}
\newcommand{\hasWritten}{\mathtt{hasWritten}}
\newcommand{\mviewrel}{\mathtt{mViewRel}}
\newcommand{\cviewrel}{\mathtt{cViewRel}}
\newcommand{\evenwsetempty}{\mathtt{locValWrSetRel}}
\newcommand{\oddwsetempty}{\mathtt{locOddValWrSetRel}}
\newcommand{\hasreadrsetempty}{\mathtt{hasReadRdSetRel}}
\newcommand{\wsetlastval}{\mathtt{wrSetLastValRel}}
\newcommand{\validIdx}{\mathtt{validIdx}}

\subsection{Forward Simulation for  \tmlra}

Perhaps the most technically challenging aspect of this paper is the
proof of \refthm{sec:forw-simul-tmlra} below, which ensures the
correctness of \tmlra w.r.t. \wtms. %

This section describes the simulation relation used to prove
refinement between \tmlra and TMS2-RA. Validity of the forward
simulation itself has been verified using Isabelle/HOL. The refinement
relation
\[
  R((\als, \gamma_A, \beta_A), (ls, \gamma_C, \beta_C)) \ \ \eqdef\ \
  \begin{array}[t]{@{}l@{}}
  R_V((\als, \beta_A), (\ls, \beta_C)) \wedge \refeq{eq:rr0} \wedge {} \\
  (\forall \txid.\ 
  \refeq{rr1} \land \refeq{rr2}  \land \refeq{rr4} \land \refeq{rr5} \land \refeq{rr6} \land \refeq{rr7} \land \refeq{rr8} \land \refeq{rr9})
  \end{array}
\] 

\noindent The first conjunct \refeq{eq:rr0} in the refinement relation
$R$ states that the value of the last write to $\glb$ divided by 2
($\wc(n) \eqdef n \div 2$) is equal to the last version of history
written to $\memories$.
\begin{equation}
  \label{eq:rr0}
  \wc(\gamma_C.\lastfn(glb)) = |\gamma_A.\memories|
\end{equation}
The next conjunct, \refeq{rr1}, states that the last value written to
any location $l$ in $\gamma_C$ is either the value of $l$ in the last
abstract memory index or in the write set of the executing transaction
\begin{equation} \label{rr1}
  \begin{split}
        & \forall l . ~l \neq \glb \Rightarrow
        \begin{array}[t]{@{}l@{}}
          \gamma_C.\lastfn(l) \in \{\gamma_A.\memories_{|\gamma_A.\memories|}(l), 
          \gamma_A.\wset_\txid(l)\}
        \end{array}
\end{split}
\end{equation}

\noindent where $\lastfn(\loc)$ is a function that returns the value of the
last write written to a location $\loc$.


The next conjunct \refeq{rr2} is an on-the-fly simulation relation 
(i.e. the transaction has begun and is not committed or aborted) and
 states that for all threads $\tid$ if transaction $\txid$ ($\gamma_A.\txn_\tid = \txid$) is on-the-fly,
  the value of $\wc(\gamma_C.loc_\tid)$ will be greater
 than or equal to $\beginIdx_\txid$
and the read set of $\txid$ will be consistent with memory verison 
$\wc(\gamma_C.loc_\tid)$:
\begin{equation}\label{rr2}
    \begin{split}
        \gamma_A.\beginIdx_\txid \leq \wc(ls.loc_\txid)  \land \gamma_A.\rset_\txid \subseteq \gamma_A.\memories_{\wc(ls.loc_\txid)}
    \end{split}
\end{equation}

Conjunct \refeq{rr4} states that if the value of $ls.loc_\txid$ is even then write set of $\gamma_A$
must be empty: 
\begin{equation}\label{rr4}
    \begin{split}\even(ls.loc_\txid) \Rightarrow \gamma_A.\wset_\txid  = \emptyset
    \end{split}
\end{equation}

Also if a transaction $\txid$ that has already written to a 
location then the write set of 
$\gamma_A$ is not empty:
\begin{equation}\label{rr5}
    \begin{split}
          ls.\hasWritten_\txid  \Rightarrow \gamma_A.\wset_\txid \noteq \emptyset
    \end{split}
\end{equation}

If a transaction has not read any location yet in the concrete state,
 then the read set of the abstract state should be empty:
\begin{equation}\label{rr6}
    \begin{split}
       \neg ls.\hasRead_\txid \Rightarrow \gamma_A.\rset_\txid = \emptyset
    \end{split}
\end{equation}

If there is a write in the write set of the abstract state, then the value
should match the value of the last write written to that location by the concrete
 implementation:
\begin{equation}\label{rr7}
    \begin{split}
        \forall  l  \in \dom(\gamma_A.\wset_\txid) . \ \ \gamma_A.\wset_\txid(l) = \gamma_C.\lastfn(l)
\end{split}
\end{equation}

The value of a visible write of thread $\tau$ to variable $glb$ divided by two
is a visible memory by thread $\tau$ of the abstract state:
\begin{equation}\label{rr8}
    \begin{split}
\forall w \in \gamma_C.OW_{\tau}(glb) . \ \  \wc(\valu(w)) \in \gamma_A.vmems_{\tau}
\end{split}
\end{equation}

All seen memory indices by the abstract transaction $t$ are less the value of thread view of 
$glb$ for thread $\tau$ divided by 2:
\begin{equation}\label{rr9}
  \begin{split}
\forall i \in \gamma_A.\seenIdxs_{t} . \ \  i \leq \wc(\valu(\gamma_C.\tview_{\tau}(glb))) 
\end{split}
\end{equation}

\begin{theorem}
  \label{sec:forw-simul-tmlra}
  $R$ is a forward simulation between \wtms and \tmlra.
\end{theorem}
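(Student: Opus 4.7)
The plan is to prove each of the four clauses of \refdef{def:fsim} separately, with the bulk of the effort going into the preservation clause, which requires a case analysis over the atomic transitions of \tmlra. For \textbf{client observation}, the relation $R_V$ is asserted directly as a conjunct of $R$, so this clause is immediate. For \textbf{thread view stability}, I would inspect each of the conjuncts \refeq{eq:rr0}--\refeq{rr9} and observe that they constrain only the abstract state $\gamma_A$, the local state $\ls$, and the library portion of the concrete state $\gamma_C$ (through $\lastfn$, $\wset$, $\rset$, etc.), together with the visible writes and thread view of the concrete state on the library variables $\glb$ and the transactional locations. Since advancing $\tview_\tid$ only moves the thread view forward within already existing write sets on client variables (and \refeq{rr8}, \refeq{rr9} are monotone under such advancement), stability follows. \textbf{Initialisation} is discharged by pairing the unique concrete initial state with the abstract initial state in which $\gamma_A.\memories = \langle (\lambda v. 0) \rangle$, $S = \langle \sf RX \rangle$ and all transaction-local components are empty; each conjunct reduces to a trivial equality.

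For \textbf{preservation}, I would case-split on the line of \tmlra being executed by thread $\tid$. The easy cases are the reads at $B3$ when $\loc$ is odd (pure stuttering; retry), the failed validation branches $R10$ and $W3$ (match the abstract \textsc{TxAbort}), the internal writes $W6$ (stuttering, since the abstract $\wset$ was already updated when the CAS at $W2$ committed), the register assignments at $W4$, $R12$, and the local bookkeeping at $B1, B2, R6, W5$. The substantive matches are: $B3$--$B4$ succeeding with even $\loc$ matches \textsc{TxBegin}$_\tid(\sflag, \wc(\loc), \regset)$, using \refeq{eq:rr0} and \refeq{rr8} to witness that $\wc(\loc) \in \vismem_\tid$; a successful CAS at $W2$ together with the subsequent $W6$ matches \textsc{TxWrite} (with $\wset$ receiving the new mapping and \refeq{rr5}, \refeq{rr7} reestablished); the read at $R2$ followed by its validation path $R8$--$R11$ matches \textsc{TxRead} with an external read at index $\wc(\loc)$, which is consistent with $\rdset$ by \refeq{rr2}; the CAS at $R4$ and the associated setting of $\hasRead$ is stuttering at the abstract level; and $E2$ (the releasing write of $\loc+1$ to $\glb$) matches \textsc{TxEndWR}, while the even-$\loc$ skip of $E2$ matches \textsc{TxEndRO} (using \refeq{rr4} to discharge $\wset_\txid = \emptyset$).

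The hard part will be the preservation step for $E2$, that is, aligning the \tmlra commit of a writing transaction with \textsc{TxEndWR} in \wtms. I need to show four things simultaneously at that point: (i) that the read set of the abstract transaction is contained in $\last(\gamma_A.\memories)$, which uses \refeq{rr2} together with the invariant that no other writing transaction can have committed since the current one acquired the lock at $W2$ (a consequence of the CAS semantics under release-acquire and the fact that odd $\glb$ excludes concurrent owners); (ii) that the new concrete ``last values'' of all locations coincide with $\last(\gamma_A.\memories) \oplus \wset_\txid$, reestablishing \refeq{rr1} and \refeq{eq:rr0} after the increment of $\glb$; (iii) that the modification view attached to the new memory snapshot via $\mmview$ matches the thread view installed on the releasing write to $\glb$, which is the core release-acquire argument justifying the $R_V$ conjunct going forward; and (iv) that \refeq{rr9} is reestablished for the updated $\txview_\tid$. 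Invariant (i) is the deepest because it requires ruling out interference by other writing transactions across the whole window from $W2$ to $E2$, and this needs a separately established global invariant on \tmlra --- essentially that while $\glb$ is odd only the lock-holder can write to client locations or update $\glb$. I plan to prove this invariant first as a lemma (using the reasoning framework of \refsec{sec:logic:-logic-release} on \tmlra itself) and then feed it into the simulation proof; this is the step that the Isabelle/HOL mechanisation will focus most of its effort on.
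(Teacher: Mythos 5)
Your overall strategy---discharging the four clauses of \refdef{def:fsim}, with a per-line case analysis of \tmlra for preservation and a separately proved global exclusion invariant fed into the commit case---is the same route the paper takes: the paper's proof text is simply an appeal to the Isabelle/HOL mechanisation, and your reading of what the conjuncts of $R$ are for (\eg \refeq{rr8} witnessing $m \in \vismem_\tid$ when matching {\tt TxBegin}, and \refeq{rr2} together with \refeq{eq:rr0} giving $\rset_\txid \subseteq \last(M)$ at the writer's commit) matches the role those conjuncts are designed to play.

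There is, however, one concrete misstep: your placement of the abstract {\tt TxWrite}. You fire it at the successful $\kwcas$ at $W\hspace{-1pt}2$ and declare the store at $W\hspace{-1pt}6$ stuttering. With the relation as defined, preservation then fails in the intermediate states: between $W\hspace{-1pt}2$ and $W\hspace{-1pt}6$ the abstract write set contains $x \mapsto v$ while $\gamma_C.\lastfn(x)$ still holds the old value, contradicting \refeq{rr7}, and until $W\hspace{-1pt}5$ executes ${\tt loc}$ is still even while $\wset_\txid \neq \emptyset$, contradicting \refeq{rr4}. The choice is also not uniform: for every write after the first, $W\hspace{-1pt}2$--$W\hspace{-1pt}5$ are bypassed, so there is no $\kwcas$ at which the abstract step could fire. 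The matching must go the other way: the $\kwcas$ at $W\hspace{-1pt}2$ and the local update at $W\hspace{-1pt}5$ are stuttering steps (note $\wc$ is unchanged when ${\tt glb}$ moves from $2k$ to $2k+1$, so \refeq{eq:rr0} survives), and the abstract {\tt TxWrite} is taken at the eager store $W\hspace{-1pt}6$, where $\lastfn(x)$ and $\wset_\txid(x)$ change simultaneously and \refeq{rr1}, \refeq{rr5} and \refeq{rr7} are re-established. With that correction, the rest of your plan---including proving the ``odd ${\tt glb}$ implies a unique owner'' invariant separately and using it both for interference steps and for the {\tt TxEndWR} case---is the right shape and corresponds to what the mechanised proof has to do.
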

\begin{proof}
  This theorem has been verified in Isabelle/HOL.
\end{proof}

\section{Related work}
\label{sec:related-work}
\paragraph{Verifying C11 programs.}  There are now several different
approaches to program verification that support different aspects of
the C11 relaxed memory model using pen-and-paper proofs
(\eg~\cite{DBLP:conf/icalp/LahavV15,DBLP:conf/oopsla/TuronVD14,DBLP:conf/popl/AlglaveC17,DBLP:conf/esop/DokoV17}),
model checking
(\eg~\cite{DBLP:conf/pldi/Kokologiannakis19,DBLP:conf/pldi/AbdullaAAK19}),
specialised tools
(\eg~\cite{DBLP:conf/pldi/TassarottiDV15,DBLP:conf/esop/KrishnaEEJ20,DBLP:conf/esop/SvendsenPDLV18,DBLP:conf/tacas/Summers018}),
and generalist theorem provers (\eg~\cite{ECOOP20}). These cover a
variety of (fragments of) memory models and proceed via exhaustive
state space exploration, separation logics, or Hoare-style calculi. A
related approach to TARO that uses a view-based semantics for
persistent x86-TSO has been developed by
\citet{DBLP:conf/esop/BilaDLRW22}.

Another series of works has focussed on semantics that support the
{\em relaxed dependencies} that are allowed by
C11~\cite{Batty2020,DBLP:conf/popl/KangHLVD17,DBLP:conf/pldi/LeeCPCHLV20,DBLP:journals/pacmpl/JagadeesanJR20}. These
have been followed more recently by logics and verification over this
semantics~\cite{DBLP:journals/corr/abs-2108-01418,DBLP:conf/esop/SvendsenPDLV18}. However,
relaxed dependencies produce high levels of non-det\-er\-mi\-nism,
making verification significantly more complex. We consider a
verification framework that supports relaxed dependencies and STMs to
be a topic for future research.

More recent works include {\em robustness} of C11-style programs,
which aims to show ``adequate synchronisation'' so that the relaxed
memory executions reduce to executions under stronger memory
models~\cite{10.1145/3434285}. Such reductions, although automatic,
are limited to finite state systems, and a small number of
threads. Furthermore, it is currently unclear how they would handle
client-library synchronisation or relaxed (non-SC) specifications.

\paragraph{Correctness conditions under relaxed memory.}
Following the extensive literature on the semantics of relaxed memory
architectures, a natural next question has been the development of
library abstractions for relaxed memory. One aim has been to ensure {\em
  observational refinement} and {\em compositionality} of the
implemented objects. A series of works have considered reforumulations
of {\em
  linearizability}~\cite{DongolJRA18,ifm18,DBLP:journals/pacmpl/RaadDRLV19}
by presenting suitable weakenings fine-tuned to the underlying memory
model. This includes extensions of linearizability, \eg so that it is
defined in terms of axiomatic (aka declarative) relaxed memory
models~\cite{DongolJRA18,DBLP:journals/pacmpl/RaadDRLV19} and those
that are based on the more abstract concept of execution
structures~\cite{ifm18}. Recent works have covered verification of
relaxed memory concurrent data structures that have been developed to
satisfy the conditions described
above~\cite{DBLP:conf/esop/KrishnaEEJ20,DBLP:journals/pacmpl/RaadDRLV19,DBLP:conf/ppopp/DalvandiD21},
but none of these cover transactions. 

\citet{DBLP:conf/esop/KhyzhaL22} have recently developed notions of
abstraction for crash resilient libraries, providing correctness
conditions (extending linearizability) that ensure contextual
refinement for concurrent objects executed over the PSC (persistent
sequential consistency) model. They do so by exposing the internal
synchronisation mechanisms that are used to implement an object in the
history (in addition to the invocations and responses). Our work
differs since we consider transactional memory libraries as opposed to
concurrent objects, use a different memory model and focus on
verification of contextual refinement directly. Nevertheless, in
future work, it would be interesting to see if their methods provide
an alternative method for specifying concurrent object and
transactional memory libraries in C11.

Several papers have revisited transaction semantics in the context of
relaxed memory
models~\cite{DBLP:conf/vmcai/RaadLV19,DBLP:conf/ppopp/DongolJR19,DongolJR18,DBLP:conf/pldi/ChongSW18}. Raad
\etal have considered relaxed memory and \emph{snapshot
  isolation}~\cite{DBLP:conf/esop/RaadLV18,DBLP:conf/vmcai/RaadLV19},
which is a weaker condition that serializability (and hence opacity
and TMS2). The question of whether snapshot isolation can be fully
exploited by implementations in a relaxed memory setting remains a
topic of future research, with most transactional implementations
aiming to support at least
serializability~\cite{DBLP:journals/taco/ZardoshtiZBSS19}. \citet{DongolJR18}
and \citet{DBLP:conf/pldi/ChongSW18} have provided {\em axiomatic}
atransactional semantics integrated with relaxed memory models,
focussing on hardware memory models and hardware
transactions. \citet{DBLP:conf/pldi/ChongSW18} additionally propose a
model for C11 transactions, but these models are focussed on
transactions within the compiler, as opposed to STMs. 
Finally, the axiomatic models proposed
in these earlier works~\cite{DongolJR18,DBLP:conf/pldi/ChongSW18} are
not suitable for operational verification, \eg as supported by \logic,
where we require an operational semantics as provided by \wtms.

Another set of works has focussed on distributed (relaxed)
transactions~\cite{DBLP:conf/ecoop/XiongCRG19,DBLP:conf/esop/BeillahiBE21,DBLP:journals/lmcs/BeillahiBE21}. Although
there are analogues between transactions in distributed systems and
relaxed memory, constraints such as replication consistency and
session order are not factors in shared memory, and hence the
underlying issues are fundamentally different. \citet{DBLP:conf/ecoop/XiongCRG19} describe a
taxonomy of distributed transactional models supported by an
operational semantics. It would be
interesting to investigate whether \logic can be adapted to cope with
client-object systems in their models.

\paragraph{Relaxed memory TM implementations.} There is a set of
recent works on implementing TM algorithms in
C11~\cite{Spear2020,DBLP:journals/taco/ZardoshtiZBSS19}. The focus
here has been real-world implementability of STMs via compiler
support. Since the focus is on benchmarks and real-world workflows,
these works neither consider a formal semantics nor provide a
verification framework. Our work can thus be seen as providing a
formal basis to support to these efforts. In particular, we show how
the serialisability specifications assumed by
\citet{Spear2020,DBLP:journals/taco/ZardoshtiZBSS19} can be relaxed,
without impacting correctness, while improving performance.

\section{Conclusions}
\label{sec:conclusion}

In this paper, we have presented a new approach to release-acquire
transactions for RC11 RAR (a fragment of C11 that supports relaxed as
well as release-acquire atomics). We have developed a new TM
specification, \wtms, that extends TMS2 to a relaxed memory context by
describing the interactions between transactions and their clients. We
implement \wtms by \tmlra, which is an adaptation of an existing eager
algorithm, TML.  We show that \tmlra outperforms \tmlsc using the
STAMP benchmarks.

Our second set of contributions covers the verification of
release-acquire TM implementations. We focus on proofs at two levels:
\begin{enumerate*}[label=(\roman*)]
\item correctness of {\em client programs} that use \wtms, and
\item correctness of {\em implementations} of \wtms.
\end{enumerate*}
For (i), we have developed a logic, \logic,
extending~\cite{DBLP:conf/ppopp/DalvandiD21}, and used this logic to
prove that \wtms does indeed guarantee the desired client-side
synchronisation properties. For (ii), we have applied a simulation
method, simular to~\cite{DBLP:conf/ppopp/DalvandiD21} and proved a
forward simulation between \tmlra and \wtms. All proofs for (i) and
(ii) as well as all meta-level soundness results are fully mechanised
in the Isabelle/HOL proof assistant, providing a high level of
assurance to our results.

Our motivation for using TML as the main implementation case study was
to start with a simple algorithm with an existing proof in
SC~\cite{DBLP:journals/fac/DerrickDDSTW18}. TML performs a global
synchronisation through a CAS on a single location, which degrades
performance on write-heavy workloads. For improved scalability, there
are more sophisticated algorithms like
TL2~\cite{DBLP:conf/wdag/DiceSS06} that offer per-location locking as
well as hybrid TM implementations~\cite{DBLP:conf/asplos/MatveevS15}
that combine hardware and software TM. TMS2 is known to be a
sufficient abstraction for hybrid TMs in
SC~\cite{DBLP:conf/forte/ArmstrongD17}, so it is likely that \wtms
also provides a basis for developing and verifying relaxed and
release-acquire versions of these more sophisticated algorithms.  We
leave such studies for future work.

\begin{acks}
The authors would also like to thank the anonymous referees for
their valuable comments and helpful suggestions. Dalvandi and Dongol are supported by 
 \grantsponsor{TMSurrey}{EPSRC}{https://epsrc.ukri.org/} Grant \grantnum{TMSurrey}{EP/R032556/1}.  
Dongol is additionally supported by 
 \grantsponsor{BD2}{EPSRC}{https://epsrc.ukri.org/} Grant \grantnum{BD2}{EP/V038915/1}, 
 \grantsponsor{BD3}{EPSRC}{https://epsrc.ukri.org/} Grant \grantnum{BD3}{EP/R025134/2}, \grantsponsor{BD4}{ARC}{https://www.arc.gov.au/} Grant \grantnum{BD4}{DP190102142} and VeTSS. 
\end{acks}

\bibliography{references}

\newpage
\appendix



\begin{figure*}[!t]
  \centering \small
  $\inference[{\sc Read}] {a \in \{rd(\loc, \val), rd^\mathsf{A}(\loc,
    \val) \} \qquad w \in \gamma.\OW_\tid(\loc) \qquad
    \xxval(w) = \val \\
    \tview' = \kwif\ \gamma.\relst(w) \wedge a = rd^\mathsf{A}(\loc,
    \val)\ \kwthen\ \gamma.\tview_\tid \otimes\gamma.\mview_{w}\ \kwelse\ \gamma.\tview_\tid[x := w]}
    {\gamma\  \strans{a}_{\tid}\  \gamma[\tview_\tid \asgn \tview']}$
    \bigskip

    \bigskip
    
  $ \inference[{\sc Write}] {
    a \in \{ wr(\loc,\val), wr^{\sf R}(\loc,\val)\} \qquad \wr \in \gamma.\OW_\tid(x) \setminus \gamma.\covered \qquad \fresh_\gamma(\loc, q) \qquad  \xxtst(w) < q \\
    \writes' = \gamma.\writes \cup \{(\loc,\val, q)\} \qquad
    \tview' = \gamma.\tview_\tid[x := (\loc,\val, q)] \qquad  b = (a = wr^{\sf R}(\loc,\val))
    }
    {\gamma\  \strans{a}_{\tid}\  \gamma[\tview_\tid \asgn \tview', \mview_{(\loc,\val, q)} \asgn \tview', \writes \asgn \writes', \relst \asgn \gamma.\relst[w \asgn b]]}$
    \bigskip

        \bigskip
  $
    \inference[{\sc RMW-RA}] {
    a = rmw^{\sf
      RA}(\loc,\valb,\val) 
    \qquad \wr \in \gamma.\OW_\tid(\loc) \setminus \gamma.\covered
    \qquad
    \xxval(\wr) = \valb
    \qquad 
    \fresh_\gamma(\loc, q) \qquad \xxtst(w) < q \\ \writes' = \gamma.\writes \cup \{(\loc, \val, q)\} \qquad
    \covered' = \gamma.\covered \cup \{(\loc, \val, q)\} 
    \\
    \tview' = \kwif\ \gamma.\relst(w) \ \kwthen\ \gamma.\tview_\tid[x := (\loc, \val, q)] \otimes\gamma.\mview_{w}\ \kwelse\ \gamma.\tview_\tid[x := (\loc, \val, q)]
  }
    {\gamma\ \strans{a}_{t}\  \gamma\left[
      \begin{array}[c]{@{}l@{}}
        \tview_\tid \asgn \tview', \mview_{(\loc, \val, q)} \asgn \tview', \\
        \writes \asgn \writes',
        \covered \asgn \covered', \relst := \gamma.\relst[(\loc, \val, \ts) \asgn \True] 
      \end{array}\right]}$

  \caption{Memory semantics for reads, writes and updates, where
    $\fresh_\gamma(\loc, q)$ holds iff there is no write in
    $\gamma.\writes$ on $\loc$ with timestamp $q$, \ie $(\loc, \_, q) \notin \gamma.\writes$}
  \Description{Memory semantics for reads, writes and updates}
  \label{fig:surrey-opsem}
\end{figure*}

\section{RC11 RAR Operational Semantics}
\label{sec:oper-semant}
We now briefly review the C11 operational semantics (introduced by \cite{ECOOP20})
 used in our framework. 

\paragraph{Component State}
Here we detail the C11 state as modelled by the operational semantics. The first state component
is $\writes$ which is the set of all global writes to shared locations.
Each global write is represented by a tuple $(\loc, v, q)$, where 
$\loc$ is a shared location, $q$ is a rational number used as a timestamp, and $v$ is the value written by the
global write. 
The writes to each variable are totally ordered by timestamps.

For any write $w = (\loc, v, q)$, we have $\var(w)= \loc$,
$\tst(w) = q$ and $\valu(w) = v$.  The state needs to record the
writes that are observable by each thread. A function $\tview_t$ is
included in the state to record the viewfront of thread $t$ (i.e. the
latest write that a thread has seen so far). All the writes with a
timestamp greater than or equal to the timestamp of the viewfront of
thread $t$ are $observable$ by the thread.  Another component of the
state is a function $\mview_w$ that records the viewfront of write
$w$, which is set to be the viewfront of the thread that executed $w$
at the time of $w$’s execution.  $\mview_w$ is used to compute a new
value for $\tview_t$ if a thread $t$ synchronizes with $w$. The state
also must record If a global write is $releasing$. This is recorded by
a function $\relst_w$ which returns $\True$ if $w$ is releasing or
$\False$ otherwise.  Finally, the state maintains a variable
$\covered \subseteq \writes$.  The semantics assumes that each update
action occurs in the modification order immediately after the write
that it reads from to preserve the atomicity of updates.  To prevent
any newer write to intervene between any update and the write that it
reads from, we add all the writes read by an update operation to the
the covered set $\covered$ so newer writes should never interact with
covered writes.

 \paragraph{Initialisation} 
 The initial state $\gamma^\Init$ is defined as follows. 
\begin{align*}
  \gamma^\Init.\writes  &\eqdef  \{(\loc,0, 0) \mid \loc \in \Var\} \\
  \gamma^\Init.\covered  &\eqdef \emptyset \\
  \gamma^\Init.\relst  & \eqdef \lambda (\loc, 0, 0).\, \False \\
  \gamma^\Init.\tview_\tid  &\eqdef \lambda \loc \in \Var.\, (\loc, 0, 0) \\
  \gamma^\Init.\mview_{w}  &\eqdef \lambda \loc \in \Var.\, (\loc, 0, 0) 
\end{align*}

\paragraph{Transition semantics.}
The transition relation of our semantics for global reads and writes
is given in \reffig{fig:surrey-opsem}~\cite{ECOOP20}. 

\paragraph{\bf {\sc Read} transition by thread $\tid$.} Assume that $a$ is
either a relaxed or acquiring read to variable $x$, $w$ is a write to
$x$ that $t$ can observe (\ie $(w, q, v) \in \gamma.\OW_\tid(x)$), and the
value read by $a$ is the value written by
$w$. 
Each read causes the viewfront of $t$ to be updated. For an
unsynchronised read, $\tview_t$ is simply updated to include the new
write. A synchronised read causes the executing thread's view of the
executing component and context to be updated. In particular, for each
variable $x$, the new view of $x$ will be the later (in timestamp
order) of either $\tview_t(x)$ or $\mview_w(x)$. 


\paragraph{\bf {\sc Write} transition by thread $t$.} A write
transition must identify the write $(w, v, q)$ after which $a$
occurs. This $w$ must be observable and must {\em not} be covered ---
the second condition preserves the atomicity of read-modify-write (RMW)
updates. We must choose a fresh timestamp $q' \in \rat$ for $a$, which
for a C11 state $\gamma$ is formalised by $\fresh_\gamma(q, q') = q < q' \wedge \forall w' \in \gamma.\writes.\ q < \tst(w') \Rightarrow q' < \tst(w')$. 
That is, $q'$ is a new timestamp for variable $x$ and that
$(a,q', v')$ occurs immediately after $(w, v, q)$. The new write is added to
the set $\writes$. 
 
We update $\gamma.\tview_t$ to include the new
write, which ensures that $t$ can no longer observe any writes prior to
$(a, v', q')$. Moreover, we set the viewfront of $(a, v', q')$ to be the new
viewfront of $t$ in $\gamma$ together with the thread viewfront of the
environment state $\beta$. If some other thread synchronises with this
new write in some later transition, that thread's view will become at
least as recent as $t$'s view at this transition. Since $\mview$ keeps
track of the executing thread's view of both the component being
executed and its context, any synchronisation through this new write
will update views across components.

\paragraph{\bf {\sc Update} (aka RMW) transition by thread $t$.} These
transitions are best understood as a combination of the read and write
transitions. As with a write transition, we must choose a valid fresh
timestamp $q'$, and the state component $\writes$ is updated in the
same way. State component $\mview$ includes information from the new
view of the executing thread $t$.  As discussed earlier, in {\sc
  Update} transitions it is necessary to record that the write that
the update interacts with is now covered, which is achieved by adding
that write to $\covered$. Finally, we must compute a new thread view,
which is similar to a {\sc Read} transition, except that the thread's
new view always includes the new write introduced by the
update. 


\section{TMS2-RA Operational Semantics}
This section presents further details of the operational semantics
from \refsec{sec:wtms-1} of TMS2-RA as formalised in our Isabelle/HOL
development.

\paragraph{Component State} Here we present the transactional state of
TMS2-RA which builds on the earlier semantics of TMS2
\cite{DBLP:journals/fac/DohertyGLM13}.  The state space of TMS2-RA
extends the state space of TMS2 and comprises several components.  The
first component is $\memories$ which is a sequence that keeps track of
memory snapshots (memory states).  The status of each transaction
$\txid$ is stored in $\status_\txid$ which can have any of the
following values: ${\tt NOTSTARTED}$, ${\tt READY}$,
${\tt COMMITTED}$, ${\tt ABORTED}$. Each transaction $\txid$ also has
a write set ($\wset_\txid$) and a read set ($\rset_\txid$) where the
writes and reads performed by the transaction are stored. For each
transaction $\txid$ there is also a $\beginIdx_\txid$ variable which
is set to the most recent memory version when the transaction
begins. We extend the TMS2 state space with a number of new
components.  The $\seenIdxs_\txid$ variable stores the memory version
of all the reads performed by the transaction. $\mmview_i$ is the
{\em memory~modification~view} of a stored memory $i$
($0 \leq i \leq |\memories|-1$). Memory modification view of memory
$i$ is an snapshot of the transaction's client thread view at the time
of commiting taht memory. $\mrel_i$ indicates that if the transaction
that committed memory version $i$ was releasing.  

\paragraph{Initialisation} The initial state $\gamma^\Init$ is defined as follows:
\begin{align*}
  \gamma^\Init.\memories  &\eqdef  \langle (\lambda l\in Loc . ~ 0) \rangle\\
  \gamma^\Init.V  &\eqdef  \langle (\lambda l\in Loc.\ \ \  ~ \beta^\Init.\tview_\txid(l)) \rangle \\
  \gamma^\Init.\mrel &\eqdef \langle \False \rangle\\
  \gamma^\Init.\status_\txid  &\eqdef {\tt NOTSTARTED} \\
  \gamma^\Init.\rset_\txid  &\eqdef \emptyset \\
  \gamma_\init.\wset_\txid  & \eqdef \emptyset \\
  \gamma^\Init.\seenIdxs_\txid &\eqdef \emptyset \\
  \gamma^\Init.\beginIdx_\txid &\eqdef 0 
\end{align*}

\noindent where $\beta^\Init$ is the initial state of the client program.

\paragraph{Transition semantics of TMS2-RA} The transition relation of 
our TMS2-RA semantics for various transactional operations is given in 
\reffig{fig:tms2-opsem} and explained below.

\paragraph{\bf {\sc TxBegin} operation by transaction $\txid$.} A transaction
starts with a {\sc TxBegin} operation and should specify if it is a relaxed
or releasing/acquiring transaction. The {\sc TxBegin} operation takes $a$ and 
$r$ and sets the value of $\synctype_\txid$ accordingly. It also sets the
value $\beginIdx_\txid$ to be the latest memory version at the time of begining
the transaction and changes the $\status_\txid$ to ${\tt READY}$. Values of $\seenIdxs_\txid$,
$\rset_\txid$, and $\wset_\txid$ are set to empty.

\paragraph{\bf {\sc TxWrite} operation by transaction $\txid$.} Assuming that
$l$ is a location and $v$ is a value and $\status_\txid = {\tt READY}$, {\sc TxWrite}
operation of transaction $\txid$ adds the $l\mapsto v$ pair to the write set
of transaction $\txid$ (i.e. $\wset_\txid$). Other state components remain unchanged.

\paragraph{\bf {\sc TxRead} operation by transaction $\txid$.} The transactional read
operation ({\sc TxRead}) reads the value ($v$) of location $l$ from the transaction $\txid$'s
write set ($\wset_\txid$) if the transaction $\txid$ has previously wrote to the location $l$ 
($l \in \dom(\wset_\txid)$). In this case the state will remain unchanged.

If $\txid$ has not previously written to $l$ 
($l \notin \dom(\wset_\txid)$) then the value of $l$ will be read from a memory
version $i$ where $i$ is greater than or equal to $\beginIdx_\txid$ and 
the $\rset_\txid$ is consistent with that memory version (i.e. $\rset_\txid \subseteq \memories_i$).
In this case version $i$ will be added to $\seenIdxs_\txid$ and the transaction's 
read set ($\rset_\txid$) is also updated to include the $l\mapsto v$. $\seenIdxs$
is particularly important for synchronisation if the transaction is acquiring.

\paragraph{\bf {\sc TxEndRO} operation by transaction $\txid$.} 
A read-only transaction is committed by {\sc TxEndRO}.  If a
transaction is read-only ($\wset_\txid = \emptyset$) then it updates
$\status_\txid$ of the transaction state ($\gamma$) to be
${\tt COMMITTED}$ and will leave the rest of the state unchanged. If
the transaction was set to be acquiring by {\sc TxBegin}
($\isAcq_\txid = True$) and the transaction's read set is not empty
($\rset_\txid \neq \emptyset$) then the client's thread view
($\beta.\tview_\txid$) may also get synchronised as well.

 
  
\paragraph{\bf {\sc TxEndWR} operation by transaction $\txid$.} 
A writer transaction is committed by {\sc TxEndWR}. 
Simillar to a read-only transaction, a writer transaction 
($\wset_\txid \neq \emptyset$) will set $status_\txid$ to ${\tt COMMITTED}$. 
A writer transaction also add a new memory version $i$ to $\memories$
where $i$ is equal to the size of $\gamma.\memories$ meaning that 
the new memory version is added to the end of $\memories$ sequence.
The new memory version is obtained by overrwriting  the latest memory 
version in $\gamma.\memories$ with the transaction's write
set: $mem' = \last(\gamma.\memories) \oplus{} \wset_\txid$). 
The function $last$ is defined as $last(m) \eqdef m_{|m|-1}$.
The memory modification view ($V_i$) of the new memory version $i$ 
and the updated thread view of the clinet state ($\tview_\txid$) is going to 
be the same and determined by $\tview'$. The value of 
$\tview'$ is determined in the same way as explained for
{\sc TxEndRO}.

\paragraph{\bf {\sc TxAbort} operation by transaction $\txid$.}
If transaction $\txid$ aborts, {\sc TxAbort} will set $\status_\txid$
to ${\tt ABORTED}$.  At this point, other transaction operations
(except {\sc TxBegin}) cannot be executed. The client state $\beta$
will remain unchanged.












\newcommand{\ft}{{\it ft}}
\newcommand{\aft}{{\it aft}}

\section{Forward Simulation implies Observational Refinement}
\label{sec:forw-simul-impl}
We have already shown that there exists a forward simulation
between \wtms and \tmlra. In this section, we show that if there exists
a forward simulation between an abstract TM specification $AO$ and
a concrete implementation $CO$ as defined in \refdef{def:fsim}, then
for any client $P$, $P[CO]$ is a contextual refinement of $P[AO]$
($P[CO]\leq P[AO]$).

\FSimSound*
\begin{proof}[Proof]
  Assume $\ft$ is a full trace of $P[CO]$, where for each $i$, $\ft_i$
  is a triple is of the form $(\ls_i, \gamma_i, \beta_i)$. We show
  that there exists a full trace $\aft$ of $P[AO]$ such that
  $\xi(\ft) \leq \xi(\aft)$ (see \refdef{def:cont-refin-1}), where $\xi(\ft)$
  projects the full trace to the client trace, i.e., for each $i$,
  $\xi(\ft)_i = (\ls_i{_{|P}}, \beta_i)$ (and similarly $\xi(\aft)$),
  and additionally removes any stuttering.

  The proof is by induction over prefixes $\ft'$ of $\ft$.

  For the base case,
  $\ft' = \langle (\ls^\Init, \gamma_C^\Init, \beta_C^\Init) \rangle$
  is a trace containing just the initial state of $C$. By
  ``Initialisation'' of \refdef{def:fsim} there exists a
  $(\als^\Init, \gamma_A^\Init, \beta_A^\Init)$ such that
  \[
    R ((\als^\Init, \gamma_A^\Init, \beta_A^\Init), (\ls^\Init,
    \gamma_C^\Init, \beta_C^\Init))\]
  Moreover, by ``Client
  observation''  of \refdef{def:fsim}, we
  have $\als^\Init_{|P} = \ls^\Init_{|P}$ and for all threads
  $\tid$ and locations $x$,
  $\beta^\Init_C.\OW(\tid, x) \subseteq \beta^\Init_A.\OW(\tid,
  x)$. Thus, for $\xi(\ft)$ there exists an $\aft$ such that
  $\xi(\ft) \leq \xi(\aft)$.

  For our inductive hypothesis, assume the result holds for $\ft'$,
  i.e., there exists an abstrace prefix $\aft'$ of an abstract trace
  $\aft$ of $P[AO]$ such that $\xi(\ft') \leq \xi(\aft')$. Moreover,
  we assume that $R(\last(\aft'), \last(\ft'))$.  Suppose
  $\ft'' = \ft' \cdot \langle (\ls, \gamma_C, \beta_C) \rangle $ where
  $(\ls, \gamma_C, \beta_C)$ is generated by a concrete step from
  $\last(\ft')$. There are three possibilites based on the step taken
  by the concrete program.
\begin{itemize}
\item The first case is when the concrete program takes a library step
  and $\ft''$ is the program trace after the execution of the library
  step, because we already proved $CO \leq AO$ then by preservation
  rule of \refdef{def:fsim} we know that there exists an abstract
  trace
  $\aft'' = \aft' \cdot \langle (\als, \gamma_A, \beta_A) \rangle $
  such that $R((\als, \gamma_A, \beta_A), (\ls, \gamma_C,
  \beta_C))$. Moreover using ``Client observation'' of \refdef{def:fsim}, we have $\xi(\ft'') \leq \xi(\aft'')$.

\item The next case is when the concrete program takes a {\em
    non-synchronising} client step. Since the abstract and concrete
  takes the {\em same} non-synchronising step, the abstract and
  concrete library states remain unchanged, thus
  $R_O((\als_{|AO}, \gamma_A), (\ls_{|CO}, \gamma_C))$. Moreover, the
  thread view and local state of both abstract and concrete programs
  will be updated in the same way, thus we also have
  $R_V((\als, \beta_A), (\ls, \beta_C))$.  Finally, by the client
  observation property and the refinement relation $R$ are
  preserved. Since we match the concrete step at the abstract level,
  we have $\xi(\ft'') \leq \xi(\aft'')$.

\item The last case is when the concrete program takes a {\em
    synchronising} client step.  Like the unsynchronising case, the
  abstract and concrete client steps are identical and they both
  update the thread view and local state of the client in the same
  way, thus $R_V((\als, \beta_A), (\ls, \beta_C))$. Thus, we have
  $\xi(\ft'') \leq \xi(\aft'')$.  We must now show that the refinement
  relation holds in the post-state.  As opposed to the previous case,
  the synchronising client step potentially updates the library state
  by advancing the library thread view for the executing thread,
  $\tid$. However, this is equivalent to moving the library view
  forward using the ``Thread view stability'' property of
  \refdef{def:fsim}. Thus, we have
  $R_O((\als_{|AO}, \gamma_A), (\ls_{|CO}, \gamma_C))$.
\end{itemize}




\end{proof}













\section{Selection of TARO Proof Rules}
\label{sec:proof-rules}

We present a selection of TARO proof rules that we use in our
examples. The full development~\cite{Artifact} contains many other
rules, but we do not present all of these, since they are less
interesting. Note that $z$ below is a client variable.

\subsection{Assertions over {\tt TxBegin}}

\begin{enumerate}\addtolength{\itemsep}{5pt}
\item $\assert{\neg[\hat{x}   \approx   u]_{\tau}}~{\tt TxBegin}_{\_}(\_, \_)~\assert{\neg[\hat{x}   \approx   u]_{\tau}}$
\item $\assert{
    \langle\hat{x}   =   u\rangle[z   =   v]_{\tau} }~{\tt TxBegin}_{\_}(\_, \_)~\assert{\langle\hat{x}   =   u\rangle[z   =   v]_{\tau}   
  }$ 
\item 
  
  $\assert{[z = u]_{\tau}}~ {\tt TxBegin}_{\_}(\_, \_)~\assert{[z = u]_{\tau} }$  
   \item $\assert{
       \True 
   }~{\tt TxBegin}_\tau({\sf R}, \_)~\assert{Rel_\tau
      }$

   \item $\assert{\True}~{\tt TxBegin}_\tau({\sf A}, \_)~
   ~\assert{Acq_{\tau}
      }$
      
    \item $\assert{
        (\hat{y},v) \in \wrset_{\tau'} \wedge \tau   \neq      \tau'
      }~{\tt TxBegin}_{\tau}(\_,\_)
      ~\assert{(\hat{y},v) \in \wrset_{\tau'}   
      }$ 

    \end{enumerate}

\subsection{Assertions over {\tt TxRead}}

\begin{enumerate} \addtolength{\itemsep}{5pt}
    \item $\assert{\neg[\hat{x}   \approx   u]_{\tau}}~  {\tt TxRead}_{\_}(\_, \_) ~ \assert{\neg[\hat{x}   \approx   u]_{\tau}}$ 
    \item $\assert{
    [\hat{x}   =   u]_{\tau'}}~{\tt TxRead}_{\tau}(\_,\_)~\assert{status_{\tau}   =  {\tt READY} \imp [\hat{x}   =   u]_{\tau'}   
       }$ 
    \item $\assert{
    \wrset_\tid = \emptyset 
   }~{\tt TxRead}_{\tau}(\hat{x}, r)~\assert{(\hat x,r)   \in   \rdset_{\tid}   
      }$ 

    \item $\assert{
     %
    (\hat x, r)  \not\in  \wrset_\tau 
  }     ~ {\tt TxRead}_\tau(\hat x, r) ~ \assert{[\hat{x}   \approx   r]_\tau   
       }$

     \item 

       $\assert{
    [\hat{x}   =   u]_\tau
    \wedge \hat{x} \notin \dom(\wrset_{\tau})
    }~{\tt TxRead}_{\tau}(\hat{x}, r)~\assert{r   =   u
       }$


     \item $\assert{
         (\hat{y},v) \in \wrset_{\tau}   
       }~{\tt TxRead}_{\_}(\_, \_)~\assert{(\hat{y},v) \in \wrset_{\tau}   
       }$

     \item $\assert{
         (\hat y,v) \in \rdset_{\tau'} \wedge \tau   \neq      \tau'
       }     ~ {\tt TxRead}_{\tau}(\_, \_) ~ \assert{(\hat y,v) \in \rdset_{\tau'}   
       }$

\end{enumerate}

\subsection{Assertions over {\tt TxWrite}}

\begin{enumerate}\addtolength{\itemsep}{5pt}
\item $\assert{
    \neg[\hat{x}   \approx   u]_{\tau}} ~ {\tt TxWrite}_{\_}(\_ ,\_)~ 
  \assert{\neg[\hat{x}   \approx   u]_{\tau} }$

\item $\assert{[z   =   u]_{\tau}}~ {\tt TxWrite}_{\_}(\_ , \_)~\assert{[z   =   u]_{\tau}}$ 

\item $\assert{\True 
  }~ {\tt TxWrite}_\tau(\hat{y} ,  v)~\assert{(\hat{y},v) \in \wrset_\tid}$ 

     \item 
       $\assert{ \langle\hat{x} = u\rangle[z = v]_{\tau} \wedge w \neq
         u}~{\tt TxWrite}_{\tau}(\hat y, w) ~ \assert{\langle\hat{x} =
         u\rangle[z = v]_{\tau} }$
       
    \item $\assert{
        \hat x   \neq      \hat y
    \wedge (\hat{y},v) \in \wrset_{\tau}   
    }~ {\tt TxWrite}_{\_}(\hat x, \_) ~ \assert{(\hat{y},v) \in \wrset_{\tau}   
       }$

\end{enumerate}

\subsection{Assertions over {\tt TxEnd}}

\begin{enumerate} \addtolength{\itemsep}{5pt}
     \item $\assert{
       \langle\hat{x}   =   u\rangle[z   =   v]_{\tau'} \wedge \wrset_\tid = \emptyset 
     } ~{\tt TxEnd}_{\tau} ~ 
   \assert{\langle\hat{x}   =   u\rangle[z   =   v]_{\tau'}}$

 \item $\assert{
    \begin{array}[c]{@{}l@{}}
      (\hat{x}, u) \in \wrset_\tid 
      \wedge  \neg[\hat{x}\approx   u]_{\tau'} 
      \wedge {} \\ {[}z = v]_{\tau} 
      \wedge Rel_{\tau} 
      \wedge \tau \neq \tau'
    \end{array}
  }~    {\tt TxEnd}_{\tau} ~
  \assert{status_{\tid}   =   {\tt COMMITTED}  \imp \langle\hat{x}   =   u\rangle[z   =   v]_{\tau'}}$

   \item $\assert{\langle\hat{x}   =   u\rangle[z   =   v]_{\tau}   
   \wedge (\hat{x}, u) \notin \wrset_\tid   
   \wedge Rel_\tid
   }~{\tt TxEnd}_{\tau} ~\assert{\langle\hat{x}   =   u\rangle[z   =   v]_{\tau}   
      }$

    \item   $\assert{
    \langle\hat{x}   =   u\rangle[z   =   v]_{\tau'} 
   }~{\tt TxEnd}_{\tau} ~ \assert{status_{\tau}   =   {\tt ABORTED} \Rightarrow \langle\hat{x}   =   u\rangle[z   =   v]_{\tau'}   
      }$

   \item $\assert{
       \begin{array}[c]{@{}l@{}}
   \langle\hat{x}   =   u\rangle[z   =   v]_{\tau}   
   \wedge (\hat{x}, u) \in \rdset_\tau 
   \wedge  {} \\
    Acq_{\tau}
   \wedge \wrset_\tid = \emptyset 
       \end{array}
}~{\tt TxEnd}_{\tau}~\assert{status_{\tid}   = {\tt COMMITTED} \imp [z = v]_{\tau}}$


   \item $\assert{
   \neg[\hat{x}   \approx   u]_{\tau'}   
   \wedge   (\hat{x}, u) \notin \wrset_{\tid} 
 }~ 
    {\tt TxEnd}_{\tau} ~ 
   \assert{\neg[\hat{x}   \approx   u]_{\tau'}   
      }$
    \item $\assert{[y = v]_{\tau}} ~{\tt TxEnd}_{\tau} ~\assert{[y = v]_{\tau}}$

    \item 
      $\assert{
        [x   \approx   u]_{\tau'} \wedge \neg   Acq_{\tau}
      }~{\tt TxEnd}_{\tau}~\assert{[x   \approx   u]_{\tau'}   
      }$

     \item $\assert{(\hat{y},v) \in \wrset_{\tau'}}~ {\tt TxEnd}_{\tau} ~\assert{(\hat{y},v) \in \wrset_{\tau'}}$
       
    \end{enumerate}

\subsection{Assertions over client actions}

\begin{enumerate} \addtolength{\itemsep}{5pt}
   \item $\assert{
   \langle\hat{x}   =   u\rangle[z   =   v]_{\tau'}}  ~ r   \leftarrow_\tau   z' ~ \assert{\langle\hat{x}   =   u\rangle[z   =   v]_{\tau'}   
      }$

    \item $\assert{
    \langle\hat{x}   =   u\rangle[z   =   v]_{\tau}   
    \wedge
     z   \neq  z'}~z'   :=_{\tau}   m ~ \assert{\langle\hat{x}   =   u\rangle[z   =   v]_{\tau}   
       }$ 



\end{enumerate}

\section{Additional examples}
\label{sec:additional-examples}
This section provides the full proof outline for two additional examples.
The proof outline presented in \reffig{fig:rel-trans-po} includes two new 
assertions that were not introduced previously:
\begin{itemize}
  \item {\bf Memory-value assertion}: A memory-value assertion,
   denoted as $\memories[\hat x, v]_i$, holds iff the value of 
   location $x$ in memory version $i$ is $v$.
   \begin{align*}
    & \memories[\hat x, v]_i  \eqdef i \in \dom(\memories) \land \memories_i(\hat x) = v
 \end{align*}

 \item {\bf Never-written assertion}: A never-written assertion, 
 denoted as $NW[\hat x, v]$, holds iff none of the recorded memories in $\memories$
 has the value $v$ for location $x$.
 \begin{align*}
  & NW[\hat x, v] \eqdef \forall  i \in \dom(\memories).\  \neg \memories[\hat x, v]_i
  \end{align*}

\end{itemize}

A selection of proof rules over memory-value and never-written predicates are given below.

\begin{enumerate} \addtolength{\itemsep}{5pt}
\item $\assert{\memories[\hat{x},u]_i}
  ~{\tt TxBegin}_\tau(\_,\_)\assert{\memories[\hat{x}, u]_i}$

\item $\assert{\memories[\hat{x}, u]_i}
  ~{\tt TxRead}_\tau(\_,\_)\assert{\memories[\hat{x}, u]_i}$

\item $\assert{\memories[\hat{x}, u]_i}
  ~{\tt TxWrite}_\tau(\_,\_)\assert{\memories[\hat{x}, u]_i}$ 
  
   \item $\assert{
   (\hat{x}, u) \in \wrset_{\tau}   
   \wedge |\memories|  =   n
   }~{\tt TxEnd}_{\tau}~\assert{status_{\tau}   = {\tt COMMITTED} \imp \memories[\hat{x}   =   u]_n   
      }$

    \item $\assert{
        \memories[\hat{x}   =   v]_i  
        \wedge i   <   |\memories| -1  
      }~{\tt TxEnd}_\tau      ~\assert{\memories[\hat{x}   =   v]_i   
          }$
   \item $\assert{
  \memories[\hat{x}   =   u]_{i-1}   
   \wedge \memories[\hat{x}   =   v]_i   
   \wedge {} \\ \memories[\hat{y}   =   l]_{i-1}   
   \wedge \memories[\hat{y}   =   n]_i  \wedge {} 
   \\  u   \neq      v
   \wedge l   \neq      n \wedge i   =   |\memories|-1 \wedge{} \\
    \txn_\tid = t \wedge beginIndex_t  = i-1
    \wedge {} \\
    (\hat x,v)   \in   \rdset_{\tau}   
   \wedge \wrset_{\tau} = \emptyset   
 }~{\tt TxRead}_{\tau}(\hat y, r)~\assert{status_{\tau}   =   {\tt READY} \imp
   r   =   n
      }$ 
  
\item $\assert{
    NW[\hat x,   v]
  }~{\tt TxBegin}_\tau(\_,\_)~\assert{\neg[\hat{x}   \approx   v]_{\tau'}   
  }$

\item $\assert{
    NW[\hat x,   v]  
  }~{\tt TxBegin}_\tau(\_,\_)~\assert{NW[x,   v]   
  }$

\item $\assert{
    NW[\hat x,   v] 
  }~{\tt TxRead}_\tau(\_,  \_)~\assert{NW[x,   v]   
  }$ 

\item $\assert{
    NW[\hat x,   v]  
  }~{\tt TxWrite}_\tau(\_, \_)~\assert{NW[\hat x,   v]   
  }$ 
\item $\assert{
    NW[\hat x,   v]  
    \wedge  (\hat x,v) \not\in \wrset_\tau
  }~{\tt TxEnd}_\tau~\assert{NW[\hat x,   v]   
  }$
  
\end{enumerate}

\begin{theorem}
  The proof outlines in \reffig{fig:ra-trans-mem-po} and \reffig{fig:rel-trans-po} are valid.
\end{theorem}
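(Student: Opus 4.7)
The plan is to discharge each of the four obligations of \refdef{def:outline}—\textsf{Initialisation}, \textsf{Finalisation}, \textsf{Local correctness}, and \textsf{Stability}—for both proof outlines, reducing everything to the previously established TARO proof rules so that no direct appeal to the operational semantics is needed. Initialisation and finalisation are immediate implications about view-based and transactional assertions in the initial (resp. final) state, and will be dispatched by unfolding the assertion definitions. The bulk of the work is in \textsf{Local correctness} and \textsf{Stability}, both of which reduce to a finite collection of Hoare triples that can be checked mechanically once the right auxiliary invariants are in place.

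For \textsf{Local correctness}, I would proceed line by line through each thread, selecting the matching TARO rule from the appendix (rules over \texttt{TxBegin}, \texttt{TxRead}, \texttt{TxWrite}, \texttt{TxEnd}, and client actions) and verifying that the annotated pre- and post-conditions match the rule's shape, possibly after a weakening step. For the relaxed-transactions outline (\reffig{fig:rel-trans-po}), the novel ingredient is the use of the memory-value assertion $\memories[\hat{x},v]_i$ and the never-written assertion $NW[\hat{x},v]$, which are needed to carry information that $d_2$ must be read as $10$ whenever $r_1=1$, since relaxed transactions do not create a client-side happens-before. Here the key step is applying the $\texttt{TxRead}$ rule that consumes $\memories[\hat{x}=v]_{i-1} \wedge \memories[\hat{x}=v]_i \wedge \cdots$ to derive a definite equality on the returned register. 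For the RA chain outline (\reffig{fig:ra-trans-mem-po}), local correctness instead threads the conditional view assertions $\langle\hat{f}=1\rangle[d_1=5]_{\tidn{2}}$ and $\langle\hat{f}=2\rangle[d_2=10]_{\tidn{3}}$ across the successive transactions, each invocation of $\texttt{TxEnd}$ converting a conditional assertion into a definite one via the acquire-transaction rule.

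For \textsf{Stability}, each local assertion of thread $\tau$ must be preserved under every atomic action of every other thread $\tau'$. Most cases are routine because transaction-local assertions (\eg $\rel_{\tau}$, $(\hat{x},v)\in\wrset_{\tau}$, $Acq_{\tau}$, $[\loc \txndv \val]_\tau$) are manifestly local to $\tau$, and the stability rules in the appendix confirm that other threads' reads, writes, and transaction boundaries do not disturb them. The interesting stability checks are on assertions that mention shared state: $\neg[\hat{f}\approx 1]_{\tidn{2}}$ in the extended example must be preserved across all non-committing steps of $\tidn{1}$, using the $\texttt{TxWrite}$ rule that leaves the abstract memory sequence untouched; symmetrically, the $\langle\hat{f}=\cdot\rangle[\cdot]_{\cdot}$ assertions must survive unrelated writes to other transactional locations, handled by the rule that preserves conditional observations when $w \neq u$. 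The $NW[\hat{x},v]$ assertion in the relaxed example is the most delicate to stabilise, since any committing writing transaction extends $\memories$; we show stability by arguing from the program structure that no thread's write set ever contains the witnessing forbidden value until the designated write occurs.

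The hardest part will be managing the sheer number of stability cases for the three-thread RA chain—roughly quadratic in the number of control locations—and in particular ensuring that the invariants linking $\seenIdxs$, $V$, and $S$ across the chain are strong enough to force the final synchronisation to propagate both $d_1 = 5$ and $d_2 = 10$ into $\tidn{3}$'s view simultaneously. As in the proof of the transactional MP theorem, I would rely on Isabelle/HOL's \texttt{sledgehammer} to discharge the individual triples once the auxiliary lemmas from \reflem{lemma:proof-rules-logic} and the appendix rules are registered as simp/intro lemmas, making the mechanisation effort proportional to the number of annotated program points rather than to the depth of the semantic reasoning.
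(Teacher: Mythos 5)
Your plan matches the paper's approach: the paper discharges this theorem entirely by mechanisation in Isabelle/HOL, checking the Owicki-Gries validity obligations of \refdef{def:outline} for each outline by applying the generic \logic proof rules (including the memory-value and never-written rules introduced for \reffig{fig:rel-trans-po}) with \texttt{sledgehammer} automation, exactly as you describe. No gap to report.
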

\begin{proof}
  This theorem has been verified in Isabelle/HOL.
\end{proof}

\begin{figure*}[b]
  \small
$\begin{array}{@{}l@{~~~~}||@{~~~~}l@{~~~~}||@{~~~~}l}
   \multicolumn{3}{@{}l}{
 \qquad \qquad \qquad \qquad \assert{\forall \tid \in \{\tidn{1}, \tidn{2}, \tidn{3}\}.\ [\hat{f} = 0]_\tid \wedge [d_1 = 0]_\tid \wedge [d_2 = 0]_\tid}} \\
      \begin{array}[t]{@{}l@{}}
        {\bf Thread }\ \tidn{1}\\
        \assert{[\hat f \not \approx 1]_{\tidn{2}} \wedge
                [d_1 = 0]_{\tidn{1}} \wedge \neg {\it tf}_1} \\
        1\!: d_1 := 5; \\
        \assert{[\hat f \not \approx 1]_{\tidn{2}} \wedge
                [d_1 = 5]_{\tidn{1}}  \wedge \neg {\it tf}_1} \\
        2\!:\raext{${\tt TxBegin}(\sf R, \emptyset)$} \\
        \assert{[\hat f \not \approx 1]_{\tidn{2}}  \wedge
                [d_1 = 5]_{\tidn{1}} \wedge {} \\ \rel_{\tidn{1}} \wedge
                 \neg {\it tf}_1} \\
        3\!:\raext{${\tt TxWrite}(f, 1)$};\\
        \assert{
                [\hat f \not \approx 1]_{\tidn{2}} \wedge [d_1 = 5]_{\tidn{1}} \wedge {} \\ 
                 \rel_{\tidn{1}} \wedge (\hat f, 1) \in \wrset_{\tidn{1}} \wedge \neg {\it tf}_1} \\
        4\!:\raext{$\langle {\tt TxEnd}, {\it tf}_1 := \True \rangle $};\\
        \assert{\langle\hat f = 1 \rangle [d_1 = 5]_{\tidn{2}} \wedge {\it tf}_1}
      \end{array}
      & 
      \begin{array}[t]{@{}r@{~}l@{}}
        \multicolumn{2}{@{}l}{{\bf Thread }\ \tidn{2}}\\
        \multicolumn{2}{@{}l}{\assert{P(0) \wedge 
        ([\hat f \approx 1]_{\tidn{2}}  \imp {\it tf}_1)}} \\
        5\!:& d_2 := 10; \\
        \multicolumn{2}{@{}l}{\assert{P(10) \wedge 
        ([\hat f \approx 1]_{\tidn{2}}  \imp {\it tf}_1)}}
        \\
        6\!:& \raext{${\tt TxBegin}(\sf RA, \{r_2\})$} 
        \\
        \multicolumn{2}{@{}l}{\assert{P(10) \wedge ([\hat f \approx 1]_{\tidn{2}}  \imp {\it tf}_1)  \wedge {} \\
        \acq_{\tidn{2}} \wedge \rel_{\tidn{2}} \wedge \wrset_{\tidn{2}} = \emptyset 
        }}
        \\
        7\!: & \raext{${\tt TxRead}(f, r_2)$}; \\
        \multicolumn{2}{@{}l}{
        \assert{P(10) \wedge {} \\
        (r_2 = 1 \imp [d_1 \txndv 5]_{\tidn{2}} \wedge {\it tf}_1) \wedge {} \\
        \acq_{\tidn{2}} \wedge \rel_{\tidn{2}} \wedge {} \\
        \wrset_{\tidn{2}} = \emptyset \wedge (\hat f, r_2) \in \rdset_{\tidn{2}}}}
        \\
        8\!:& \raext{$\kwif\ r_2 = 1\ \kwthen$}  \\
        \multicolumn{2}{@{}l}{
        \quad \assert{P(10) \wedge [d_1 \txndv 5]_{\tidn{2}} \wedge {\it tf}_1 \wedge {} \\
         r_2 = 1 \wedge \acq_{\tidn{2}} \wedge \rel_{\tidn{2}} \wedge {} \\
        \wrset_{\tidn{2}} = \emptyset \wedge (\hat f, r_2) \in \rdset_{\tidn{2}}}}
        \\
        9\!:& \raext{$\quad {\tt TxWrite}(f, 2)$}
        \\
        \multicolumn{2}{@{}l}{
        \quad \assert{P(10) \wedge [d_1 \txndv 5]_{\tidn{2}} \wedge {\it tf}_1 \wedge {} \\
        r_2 = 1 \wedge \acq_{\tidn{2}} \wedge \rel_{\tidn{2}} \wedge 
          {} \\
        \wrset_{\tidn{2}} = \{(\hat f, 2)\} \wedge {} \\
        (\hat f, r_2) \in \rdset_{\tidn{2}} }}
        \\
        \multicolumn{2}{@{}l}{
        \assert{P(10) \wedge \acq_{\tidn{2}} \wedge \rel_{\tidn{2}} \wedge {} \\
        (r_2 \neq 1 \imp \wrset_{\tidn{2}} = \emptyset) \wedge {} \\
        \left(
        \begin{array}[c]{@{}l@{}}
          r_2 = 1 \imp [d_1 \txndv 5]_{\tidn{2}}    \wedge {} \\
          \qquad {\it tf}_1\wedge (\hat f, 2) \in \wrset_{\tidn{2}}
        \end{array}
        \right) \wedge {} \\
        (\hat f, r_2) \in \rdset_{\tidn{2}}}}
        \\
        10\!:&\raext{${\tt TxEnd}$};\\
        \multicolumn{2}{@{}l}{\assert{
        \langle \hat f = 2 \rangle [d_2 = 10]_{\tidn{3}} \wedge {} \\
        \left(
        \begin{array}[c]{@{}l@{}}
          r_2 \neq 1 \imp \\
          \quad [\hat f \not \approx 2]_{\tidn{1}}  \wedge [\hat f \not \approx 2]_{\tidn{3}}
        \end{array}
\right) \wedge {} \\
        \left(
        \begin{array}[c]{@{}l@{}}
          r_2  = 1 \imp\\
          \quad {\it tf}_1  \wedge \langle \hat f = 2 \rangle [d_1 =_{\tidn{3}} 5]
        \end{array}
\right)
        }}
      \end{array}
      & 
      \begin{array}[t]{@{}r@{~}l@{}}
        \multicolumn{2}{@{}l}{{\bf Thread }\ \tidn{3}}
        \\
        \multicolumn{2}{@{}l}{
        \assert{\langle \hat f = 2 \rangle [d_1 = 5]_{\tidn{3}} \wedge {} \\
        \langle \hat f = 2 \rangle [d_2 = 10]_{\tidn{3}} \wedge {} \\
          (\neg {\it tf}_1 \imp  [\hat f \not \approx 1]_{\tidn{3}} \wedge [\hat f \not \approx 2]_{\tidn{3}})}} 
        \\
        11\!:& \raext{${\tt TxBegin}(\sf A, \{r_3\})$} \\
        \multicolumn{2}{@{}l}{
        \assert{\langle \hat f = 2 \rangle [d_1 = 5]_{\tidn{3}} \wedge  {} \\
        \langle \hat f = 2 \rangle [d_2 = 10]_{\tidn{3}} \wedge {} \\
        (\neg {\it tf}_1 \imp  [\hat f \not \approx 1]_{\tidn{3}} \wedge [\hat f \not \approx 2]_{\tidn{3}}) {} \\
        \wedge {} 
        \wrset_{\tidn{3}} = \emptyset \wedge \acq_{\tidn{3}} \\
        }} 
        \\
        12\!: & \raext{${\tt TxRead}(f, r_3)$};\\
        \multicolumn{2}{@{}l}{
        \assert{
        \langle \hat f = 2 \rangle [d_1 = 5]_{\tidn{3}} \wedge {} \\
        \langle \hat f = 2 \rangle [d_2 = 10]_{\tidn{3}} \wedge {} \\ 
        \left(
        \begin{array}[c]{@{}l@{}}
          \neg {\it tf}_1 \imp \\
          \qquad [\hat f \not \approx 1]_{\tidn{3}} \wedge [\hat f \not \approx 2]_{\tidn{3}}
        \end{array}
        \right) \wedge {} \\
        \wrset_{\tidn{3}} = \emptyset \wedge {} \\
        (\hat f, r_3) \in \rdset_{\tidn{3}}   \wedge  \acq_{\tidn{3}} \wedge {} \\
        \left(
        \begin{array}[c]{@{}l@{}}
          r_3 = 2 \imp \\
          \qquad [d_1 \txndv 5]_{\tidn{3}}  \wedge [d_2 \txndv 10]_{\tidn{3}}
        \end{array}
        \right)
        }} 
        \\
        13\!:& \raext{${\tt TxEnd}$};\\
        \multicolumn{2}{@{}l}{
        \assert{
        r_3 = 2 \imp [d_1 = 5]_{\tidn{3}} \wedge [d_2 = 10]_{\tidn{3}}
        }}  \\
        14\!:& \kwif\ r_3 = 2\  \kwthen \\
        \multicolumn{2}{@{}l}{
        \qquad \assert{
        [d_1 = 5]_{\tidn{3}} \wedge [d_2 = 10]_{\tidn{3}}
        }}  \\
        15\!:& \ \ \ \ s_1 \gets d_1\\
        \multicolumn{2}{@{}l}{
        \qquad \assert{
        s_1 = 5 \wedge [d_2 = 10]_{\tidn{3}}
        }}  \\
        16\!:& \ \  \ \ s_2 \gets d_2\\
        \multicolumn{2}{@{}l}{
        \qquad \assert{
        s_1 = 5 \wedge s_2 = 10
        }}  \\
        \multicolumn{2}{@{}l}{
        \assert{
        r_3 = 2 \imp s_1 = 5 \wedge s_2 = 10
        }}
      \end{array}
   \end{array}$ \smallskip
   
 \hfill {\color{blue} $\{r_3 = 2 \imp s_1=5 \wedge s_2=10\}$}\hfill {}

 \vspace{-0.5em}
 \caption{Proof outline for extended transactional
   MP, where  $P(k) \eqdef [d_2 = k]_{\tidn{2}} \wedge
        \langle \hat f = 1\rangle[d_1 = 42]_{\tidn{2}} \wedge 
        {[} \hat f \not \approx 2]_{\tidn{1}} 
        \wedge {}  {[}\hat f \not \approx 2]_{\tidn{3}}$ 
 }
 \Description{Proof outline for extended transactional
   MP} \label{fig:ra-trans-mem-po} 
\end{figure*}


\begin{figure*}[b]
  \small
$\begin{array}{@{}l@{~~~~}||@{~~~~}l@{}}
    \begin{array}[t]{@{}l@{}}
      {\bf Thread }\ \tidn{1}\\
      \assert{ NW[\hat f, 1]   \land NW[\hat d2, 10]  \\
       \land [d1 = 0]_{\tidn{1}}  \land |\memories|  = 0} \\
      1\!: d_1 := 5; \\
      \assert{ NW[\hat f, 1]   \land NW[\hat d2, 10] \\
       \land [d1 = 5]_{\tidn{1}}  \land |\memories|  = 0} \\
      2\!:\raext{${\tt TxBegin}(\sf RX,\emptyset)$}; \\
      \assert{\neg[\hat f \approx 1]_{\tidn{2}}   \land \neg[\hat d2 \approx 10]_{\tidn{2}}
      \\
       \land [d1 = 5]_{\tidn{1}}    \land \neg\rel_{\tidn{1}}
      \\ 
       \land ~\status_{\tidn{1}} = R \\
       \land \neg \acq_{\tidn{1}}   \land |\memories| = 0}\\
      3\!:\raext{${\tt TxWrite}(d_2,10)$}; \\
      \assert{(\hat d_2, 10)\in \wrset_{t}  \land \neg[\hat f \approx 1]_{\tidn{2}} \\
       \land \neg[\hat d_2 \approx 10]_{\tidn{2}} 
       \land [d1 = 5]_{\tidn{1}}  \\
       \land \neg\rel_{\tidn{1}}  \land ~\status_{\tidn{1}} = R  \land \\
      \neg \acq_{\tidn{1}}   \land |\memories| = 0}\\
      4\!:\raext{${\tt TxWrite}(f, 1)$};\\
      \assert{(\hat f, 1)\in \wrset_{\tidn{1}}  \land (\hat d_2, 10)\in \wrset_{\tidn{1}} \\
       \land \neg[\hat f \approx 1]_{\tidn{2}}   \land
      \neg[\hat d2 \approx 10]_{\tidn{2}}\\
       \land [d1 = 5]_{\tidn{1}}    \land \neg\rel_{\tidn{1}}  \land \\ 
      ~\status_{\tidn{1}} = R  \land \neg \acq_{\tidn{1}}   \land |\memories| = 0}\\
   5\!:\raext{${\tt TxEnd}$};\\
      \assert{\True}
    \end{array}
    & 
    \begin{array}[t]{@{}r@{~}l@{}}
      \multicolumn{2}{l}{{\bf Thread }\ \tidn{2}} \\
      \multicolumn{2}{l}{
      \assert{\memories[\hat f = 0]_{0}  \land \memories[\hat d2 = 0]_{0}  \land \\
      (\status_{\tidn{1}} = C \Rightarrow \memories[\hat f = 1]_1   \land \memories[\hat d2 = 10]_1 )  \land \\
      (\status_{\tidn{1}} = C \Rightarrow ([\hat{f} \approx 0]_t  \lor [\hat{f} \approx 1]_t ))  \land \\
      (\status_{\tidn{1}} = C \Rightarrow ([\hat{d2} \approx 0]_t  \lor [\hat{d2} \approx 10]_t ))  \land \\
      (\forall v . v\not\in \{0,5\} \Rightarrow \neg [d1 \approx_{\tau2} v])  \land \\
      (\status_{\tidn{1}} \neq C \Rightarrow [\hat{f} = 0]_t   \land [\hat{d2} = 0]_t )  \land\\
      WS_t = \emptyset  \land \\
      (\forall i . i\neq 0  \land i \leq |\memories| \Rightarrow \neg \memories[\hat d2 = 0]_i )  \land \\
      (\forall i . i\neq 1  \land i \leq |\memories| \Rightarrow \neg \memories[\hat f = 1]_i ) }} \\
      6\!:& \raext{${\tt TxBegin}(\sf RX, \{r_1, r_2\})$} \\
      \multicolumn{2}{l}{\assert{\memories[\hat{f} = 0]_0   \land \memories[\hat{d2} = 0]_0   \land \\
      (\status_{\tidn{1}} = C \Rightarrow \memories[\hat{f} = 1]_1   \land \memories[\hat{d2} = 10]_1 )  \land \\
      (\status_{\tidn{1}} = C \Rightarrow ([\hat{f} \approx 0]_{\tau1}  \lor [\hat{f} \approx 1]_{\tau1} ))  \land \\
      (\status_{\tidn{1}} = C \Rightarrow ([\hat{d2} \approx 0]_{\tau1}  \lor [\hat{d2} \approx 10]_{\tau1} ))  \land \\
      (\forall v . v\not\in{0,5} \Rightarrow \neg[d1 \approx_{\tau2} v])  \land \\
      (\status_{\tidn{1}} \neq C \Rightarrow [\hat{f} = 0]_{\tau1}   \land [\hat{d2} = 0]_{\tau1} )   \land \\
       WS_{\tau1} = \emptyset  \land \\
                (\forall i . i\neq0  \land i \leq |M|  \Rightarrow \neg\memories[\hat{d2} = 0]_i )  \land \\
      (\forall i . i\neq1  \land i \leq |M|  \Rightarrow \neg\memories[\hat{f} = 1]_i )   \land \neg Acq_{\tau1} \\
      }}
       \\
      7\!:& \raext{${\tt TxRead}(f, r_1)$};\\
      \multicolumn{2}{l}{\assert{\memories[\hat{f} = 0]_0   \land \memories[\hat{d2} = 0]_0   \land \\
      (r1 = 1 \Rightarrow \status_{\tidn{1}} = C) \land \\
      (r1 = 1 \Rightarrow (\hat{f},1) \in RS_{\tau1} )  \land \\
      (r1 = 1 \Rightarrow \memories[\hat{f} = 1]_1   \land \memories[\hat{d2} = 10]_1 )   \land \\
      (\forall v . v\not\in{0,5} \Rightarrow \neg[d1 \approx_{\tau2} v] )  \land \\
      WS_{\tau1} = \emptyset  \land \\
                (\forall i . i\neq0  \land i \leq |M|  \Rightarrow \neg\memories[\hat{d2} = 0]_i )  \land \\
(\forall i . i\neq1  \land i \leq |M|  \Rightarrow \neg\memories[\hat{f} = 1]_i )   \land \neg Acq_{\tau1}
        }} \\
      8\!:& \raext{\kwif\ $r_1 = 1$\ \kwthen\ }\\
      \multicolumn{2}{l}{\assert{
        \memories[\hat{f} = 0]_0   \land \memories[\hat{d2} = 0]_0   \\
         \land \memories[\hat{f} = 1]_1   \land \memories[\hat{d2} = 10]_1   \\
         \land r1 = 1  \land (\hat{f},1) \in RS_{\tau1}   \land \\
        (\forall v . v\not\in{0,5} \Rightarrow \neg[d1 \approx_{\tau2} v] )  \land \\
        WS_{\tau1} = \emptyset  \land \\
                  (\forall i . i\neq0  \land i \leq |M|  \Rightarrow \neg\memories[\hat{d2} = 0]_i )  \land \\
(\forall i . i\neq1  \land i \leq |M|  \Rightarrow \neg\memories[\hat{f} = 1]_i )   \land \neg Acq_{\tau1}  \land
\status_{\tidn{1}} = C 
            }}\\
      9\!:& \raext{\quad${\tt TxRead}(d_2, r_2)$};\\
      \multicolumn{2}{l}{\assert{ (r1 = 1 \Rightarrow r2 = 10)  \land \\
      (\forall v . v\not\in{0,5} \Rightarrow \neg[d1 \approx_{\tau2} v] )  \land \\
      WS_{\tau1} = \emptyset  \land \neg Acq_{\tau1}
}} \\
      10\!:& \raext{${\tt TxEnd}$};\\
      \multicolumn{2}{l}{\assert{ (\forall v . v\not\in{0,5} \Rightarrow \neg[d1 \approx_{\tau2} v])  \land \\
      (\status_{\tidn{2}} = C  \land r1 = 1 \Rightarrow r2 = 10)  \\

      }}\\
      11\!:& r_3\gets d_1; \\
      \multicolumn{2}{l}{\assert{(\status_{\tidn{2}} = C  \land r_1 = 1 \Rightarrow r_2 = 10  \land r_3 \in\{0,5\})}}\\
    \end{array}
 \end{array}$ \medskip
   
 \hfill {\color{blue} $\{\status_{\tidn{2}} = C  \land r_1 = 1 \Rightarrow r_2 = 10  \land r_3 \in\{0,5\}\}$}\hfill {}

 \vspace{-0.5em}
 \caption{Proof outline for relaxed transactions where $C$ and $R$ are shorthand for ${\tt COMMITTED}$ and ${\tt READY}$, respectively.}
 \Description{Proof outline for relaxed transactions}
 \label{fig:rel-trans-po} 
\end{figure*}


\end{document}